 \newtheorem{thm}{Theorem}[section]
 \newtheorem{cor}[thm]{Corollary}
 \newtheorem{lem}[thm]{Lemma}
 \newtheorem{prop}[thm]{Proposition}
 \newtheorem{alg}[thm]{Algorithm}
\newtheorem{ass}[thm]{Assumption}
 \theoremstyle{definition}
 \newtheorem{defn}[thm]{Definition}
 \theoremstyle{remark}
 \newtheorem{rem}[thm]{Remark}
 \newtheorem{ex}{Example}
 \numberwithin{equation}{section}
\newcommand{\vecc}{\boldsymbol}
\newcommand{\RR}{\mathbb{R}}      
\newcommand{\CC}{\mathbb{C}}      
\newcommand{\ZZ}{\mathbb{Z}}      
\begin{document}

%
%
%
%
%
%
%
%

\title[Homogenization for Generalized Langevin Equations]{Homogenization for Generalized Langevin Equations with Applications to Anomalous Diffusion}

\author[Soon Hoe Lim]{Soon Hoe Lim}

\address{%
Nordita, KTH Royal Institute of Technology and Stockholm University\\
Roslagstullsbacken 23\\
SE-106 91 Stockholm\\
Sweden}

\email{soon.hoe.lim@su.se}


\author{Jan Wehr}
\address{Department of Mathematics and Program in Applied Mathematics \br
University of Arizona\br
Tucson, AZ 85721-0089\br
USA}
\email{wehr@math.arizona.edu}

\author{Maciej Lewenstein}
\address{ICFO - Institut de Ci\'encies Fot\'oniques\br
The Barcelona Institute of Science and Technology\br 
Av. Carl Friedrich Gauss 3\br
08860 Castelldefels (Barcelona)\br
Spain \\}
\address{ICREA\br
Pg. Lluis Companys 23\br 
08010 Barcelona\br 
Spain}
\email{maciej.lewenstein@icfo.eu}

\subjclass{Primary 60H10; Secondary 	82C31}

\keywords{Generalized Langevin equation, multiscale analysis, model reduction, noise-induced drift, anomalous diffusion}

\submitted{\today}


\begin{abstract}
We study  homogenization  for a class of generalized Langevin equations (GLEs) with state-dependent coefficients and exhibiting multiple time scales. In addition to the small mass limit, we focus on homogenization limits, which involve taking to zero the inertial time scale and, possibly, some of the memory time scales and noise correlation time scales. The latter are meaningful limits for a class of GLEs modeling anomalous diffusion. We find that, in general, the limiting stochastic differential equations (SDEs) for the slow degrees of freedom contain non-trivial drift correction terms and are driven by non-Markov noise processes. These results follow from a general homogenization theorem stated and proven here. We illustrate them using stochastic models of particle diffusion.
\end{abstract}

\maketitle
\tableofcontents


\section{Introduction}
\subsection{Motivation}

Most of the mathematical models of diffusion phenomena use noise which is white (i.e. uncorrelated), or Markovian \cite{nelson1967dynamical}.  The present paper is a step towards removing this limitation.  The diffusion models studied here are driven by noises, belonging to a wide class of non-Markov processes.  
A standard example of Markovian noise is a multidimensional Ornstein-Uhlenbeck process.
An important class of Gaussian stochastic processes is obtained by linear transformations of multidimensional Ornstein-Uhlenbeck processes.  The covariance (equal to correlation in the case of zero mean) of such a process is  a linear combination of exponentials decaying and possibly oscillating on different time scales, and its spectral density (power spectrum) is a ratio of two semi-positive defined polynomials \cite{doob1953stochastic}. In cases when the polynomial in the denominator has degenerate zeros, the covariance contains products of exponentials and polynomials in time.  This is a very general class of processes---every stationary Gaussian process whose covariance is a Bohl function (see Section 2) can be obtained as a linear transformation of an Ornstein-Uhlenbeck process in some (finite) dimension.  In general, these processes are not Markov.

Let us mention here the seminal result by L.A. Khalfin from 1957 \cite{khalfin1958contribution}, who showed, quite generally that in any system with energy spectrum bounded from below (which is a necessary condition for the physical stability), correlations must decay no faster than according to a power law. To this day this result provides inspirations and motivations for further studies in the context of thermalization \cite{tavora2016inevitable}, cooling of atoms in photon reservoirs \cite{lewenstein1993cooling}, decay of metastable states as monitored by luminescence \cite{rothe2006violation}, or quantum anti-Zeno effect (c.f. \cite{peres1980nonexponential,lewenstein2000quantum}), to name a few examples.  Khalfin's result further motivates studying systems with non-Markovian noise, as most natural examples of strongly correlated processes do not satisfy Markov property.

While the noise processes studied here have exponentially decaying covariances, their class is very rich and they may be useful in approximating strongly correlated noises on time intervals, relevant for studied phenomena \cite{siegle2011markovian}.  In addition, as discussed in more detail later, generalization of the method applied here may lead to a representation of a class of noises whose covariances decay as powers (see Remark \ref{rem_inf_dim}). Also, the representation of spectral density of the noise processes as ratio of two polynomials is convenient in applications, in particular for solving the problem of predicting (in the least mean square sense) a colored noise process given observations on a finite segment of the past or on the full past \cite{doob1953stochastic}.



\subsection{Definitions and Models} \label{intro}

We consider the following stochastic model for a particle (for instance, Brownian particle or a tagged tracer particle) interacting with the environment (for instance, a heat bath or a viscous fluid). Let $\vecc{x}_t \in \RR^d$ denote the particle's position, where $t \geq 0$ denotes time and $d$ is a positive integer. The evolution of the particle's velocity, $\vecc{v}_t := \dot{\vecc{x}}_t \in \RR^d$, is described by the following {\it generalized Langevin equation (GLE)}:
\begin{equation}
m d\vecc{v}_t = \vecc{F}_0 \left(t,\vecc{x}_t,\vecc{v}_t,\vecc{\eta}_t \right)dt + \vecc{F}_1\left(t, \{\vecc{x}_s,\vecc{v}_s\}_{s \in [0,t]}, \vecc{\xi}_t \right)dt + \vecc{F}_e(t, \vecc{x}_t)dt. \label{gle2}
\end{equation}
In the above, $m>0$ is the particle's mass, $\vecc{\eta}_t$ is a $k$-dimensional Gaussian white noise satisfying $E[\vecc{\eta}_t] = \vecc{0}$ and $E[\vecc{\eta}_t \vecc{\eta}_s^*] = \delta(t-s)\vecc{I}$, and $\vecc{\xi}_t$ is a colored noise process independent of $\vecc{\eta}_t$. Here and throughout the paper, the superscript $^*$ denotes transposition of matrices or vectors, $\vecc{I}$ denotes identity matrix of appropriate dimension, $E$ denotes expectation, and $\RR^+ := [0,\infty)$.  The initial data are random variables, $\vecc{x}_0 = \vecc{x}$, $\vecc{v}_0 = \vecc{v}$, independent of $\{\vecc{\xi}_t, t \in \RR^+ \}$ and $\{\vecc{\eta}_t, t \in \RR^+ \}$.

The three terms on the right hand side of \eqref{gle2} model forces of different physical natures acting on the particle. 
\begin{itemize}
\item[(i)] $\vecc{F}_e$ is an external force field, which may be conservative (potential) or not.
\item[(ii)] $\vecc{F}_0$ is a Markovian force of the form
\begin{equation}
\vecc{F}_0\left(t, \vecc{x}_t,\vecc{v}_t, \vecc{\eta}_t\right) dt = -\vecc{\gamma}_0(t, \vecc{x}_t)\vecc{v}_t dt + \vecc{\sigma}_0(t, \vecc{x}_t)d\vecc{W}^{(k)}_t,
\end{equation}
containing an instantaneous damping term and a multiplicative white noise term. The damping and noise coefficients, $\vecc{\gamma}_0: \RR^+ \times \RR^d \to \RR^{d \times d}$ and $\vecc{\sigma}_0: \RR^+ \times \RR^d \to \RR^{d \times k}$,  may depend on the particle's position and on time.  $\vecc{W}^{(k)}_t$ denotes a $k$-dimensional Wiener process---the time integral of the white noise $\vecc{\eta}_t$.
 \item[(iii)] $\vecc{F}_1$ is a non-Markovian force of the form
\begin{equation} \label{nonM_force}
\vecc{F}_1\left(t, \{\vecc{x}_s,\vecc{v}_s\}_{s \in [0,t]},\vecc{\xi}_t\right) = - \vecc{g}(t, \vecc{x}_t) \left( \int_{0}^{t} \vecc{\kappa}(t-s) \vecc{h}(s, \vecc{x}_s) \vecc{v}_s ds \right)  + \vecc{\sigma}(t, \vecc{x}_t) \vecc{\xi}_t,
\end{equation}
containing a non-instantaneous damping term, describing the delayed drag effects by the environment on the particle, and a multiplicative colored noise term.
The coefficients, $\vecc{g}: \RR^+ \times \RR^d \to \RR^{d\times q}$, $\vecc{h}: \RR^+ \times \RR^d \to \RR^{q \times d}$  and $\vecc{\sigma}: \RR^+ \times \RR^d  \to \RR^{d \times r} $,  depend in general on the particle's position and on time.  In the above, $q$ and $r$ are positive integers, and the memory function $\vecc{\kappa}: \RR \to  \RR^{q \times q}$ is  a real-valued function that decays sufficiently fast at infinities.  $\vecc{\xi}_t \in \RR^{r}$ is a mean-zero stationary Gaussian vector process, to be defined in detail later. The statistical properties of the process $\vecc{\xi}_t$ are completely determined by its (matrix-valued) {\it covariance function},
\begin{equation}
\vecc{R}(t):= E [\vecc{\xi}_t \vecc{\xi}^{*}_0] = \vecc{R}^{*}(-t) \in \RR^{r \times r},
\end{equation}
or equivalently, by its {\it spectral density}, $\vecc{\mathcal{S}}(\omega)$, i.e. the Fourier transform of $\vecc{R}(t)$ defined as:
\begin{equation} \label{spec_form}
\vecc{\mathcal{S}}(\omega) = \int_{-\infty}^{\infty} \vecc{R}(t) e^{-i\omega t} dt.
\end{equation}
\end{itemize}

For simplicity, we have omitted other forces such as the Basset force \cite{grebenkov2013hydrodynamic} from Eqn. \eqref{gle2}. Note that $\vecc{F}_0$ and $\vecc{F}_1$ describe two types of forces associated with different physical mechanisms. Of particular interest is when the noise term in $\vecc{F}_0$ and $\vecc{F}_1$ models environments of different nature (passive bath and active bath respectively \cite{dabelow2019irreversibility}) that the particle interacts with. 

As the name itself suggests, GLEs are  generalized versions of the Markovian Langevin equations, frequently employed to model physical systems.
 A basic form of the GLEs was first introduced by Mori in \cite{mori1965transport} and subsequently used in numerous statistical physics models \cite{Kubo_fd,toda2012statistical,Zwanzig1973}. The studies of GLEs have attracted increasing interest in recent years. We refer to, for instance, \cite{mckinley2009transient,lysy2016model,siegle2010markovian,hartmann2011balanced,goychuk2012viscoelastic,maes2013langevin,lei2016data,Wei2018,2018superstatistical}  for various applications of GLEs and \cite{Ottobre,mckinley2018anomalous,glatt2018generalized,leimkuhler2018ergodic}  for their asymptotic analysis. 
The main merit of GLEs from modeling point of view is that they take into account the effects of memory and the colored nature of  noise  on the dynamics of the system.

\begin{rem} \label{fdt_rem}
In general, there need not be any relation between $\vecc{\kappa}(t)$ and $\vecc{R}(t)$, or any relation between the damping coefficients and the noise coefficients appearing in the formula for $\vecc{F}_0$ and $\vecc{F}_1$. A particular but important case  that we will revisit often in this paper is the case when a {\it fluctuation-dissipation relation} holds. In this case, $\vecc{\gamma}_0$ is proportional to $\vecc{\sigma}_0 \vecc{\sigma}_0^*$, $\vecc{h} = \vecc{g}^*$, $\vecc{g}$ is proportional to $\vecc{\sigma}$ and  (without loss of generality\footnote{The factor $k_BT$, where $T$ is the absolute temperature and $k_B$ denotes the Boltzmann constant, is here set to $1$.  In general, it can be absorbed into either one of the coefficients $\vecc{g}$, $\vecc{h}$ or $\vecc{\sigma}$.}) $\vecc{R}(t) = \vecc{\kappa}(t)$. Studies of microscopic Hamiltonian models for open classical systems lead to  GLEs of the form \eqref{gle2} satisfying the above fluctuation-dissipation relation (see, for instance, Appendix A of \cite{LimWehr_Homog_NonMarkovian} or \cite{Cui18}). On another note, GLEs of the form \eqref{gle2} are extended versions of the ones studied in our previous work \cite{LimWehr_Homog_NonMarkovian} -- here the GLEs are generalized to include  a Markovian force, in addition to the non-Markovian one, as well as explicit time dependence in the coefficients. 
\end{rem}

As a motivation, we now provide and elaborate on  examples of systems that can be modeled by our GLEs. 

An important type of diffusion, which has been observed in many physical systems, from charge transport in
amorphous materials to intracellular particle motion in cytoplasm of living cells \cite{reverey2015superdiffusion}, is {\it ballistic diffusion}. It is a subclass of anomalous diffusions and is characterized by the property that the particle's long-time mean-squared displacement grows quadratically in time -- in contrast to linear growth in usual diffusion. There are many different theoretical models of anomalous diffusion with diverse properties, coming from different physical assumptions; see \cite{metzler2014anomalous} for a comprehensive survey.  In the following, we provide two GLE models that are employed to study such phenomena.
Their properties will be studied in Section \ref{sect_appl}, as an application of the results proven here.

\begin{ex}{\it Two GLE models for anomalous diffusion of a free Brownian particle in a heat bath.} \label{ex_mot}  
A large class of models for diffusive systems is described by the system of equations (for simplicity, we restrict to one dimension):
\begin{align}
dx_t &= v_t dt,  \label{mot1} \\
m dv_t &=  - \left(\int_0^t \kappa(t-s) v_s ds\right) dt +  \xi_t dt, \label{mot2}
\end{align}
where $x_t, \ v_t \in \RR$ are the position and velocity of the particle, $\kappa(t)$ is called the memory function, and $\xi_t$ is a mean-zero stationary Gaussian process.\\

\noindent Two particular GLE models are described by \eqref{mot1}-\eqref{mot2}, with:
\begin{itemize}
\item[(M1)] memory function of the bi-exponential form:
\begin{equation} \label{kappaeg}
\kappa(t) = \frac{ \Gamma_2^2(\Gamma_2 e^{-\Gamma_2 |t|} - \Gamma_1 e^{-\Gamma_1 |t|})}{2(\Gamma_2^2-\Gamma_1^2)},
\end{equation}
where the parameters satisfy $\Gamma_2 > \Gamma_1 > 0$, and  $\xi_t$ has the covariance function $R(t)= \kappa(t)$ and thus the spectral density,
\begin{equation}
\mathcal{S}(\omega) = \frac{ \Gamma_2^2 \omega^2}{(\omega^2+\Gamma_1^2)(\omega^2+\Gamma_2^2)}.
\end{equation}
This model is similar to the one first introduced and studied in \cite{bao2003ballistic}.
The noise with the above covariance function can be realized by the difference between two Ornstein-Uhlenbeck processes, with different damping rates, driven by the same white noise.  Various properties as well as applications of  GLEs of the form \eqref{mot1}-\eqref{mot2} were studied in \cite{bao2003ballistic,bao2005harmonic,siegle2010markovian}.
\item[(M2)]  memory function of the form:
\begin{equation}
\kappa(t) = \frac{1}{2}(\delta(t)-\Gamma_1 e^{-\Gamma_1 |t|}),
\end{equation}
where  $\Gamma_1 > 0$, and $\xi_t$ has the covariance function $R(t)= \kappa(t)$ and thus the spectral density,
\begin{equation}
\mathcal{S}(\omega) = \frac{ \omega^2}{\omega^2+\Gamma_1^2}.
\end{equation}
This model can be obtained from the one in (M1) by sending $\Gamma_2 \to \infty$ in the formula for $\kappa(t)$ in \eqref{kappaeg}. 

Observe that the spectral densities in both models share the same asymptotic behavior near $\omega = 0$, i.e. $\mathcal{S}(\omega) \sim \omega^2$ as $\omega \to 0$, contributing to the enhanced diffusion (super-diffusion) of the particle with mean-squared displacement growing as $t^2$ as $t \to \infty$ \cite{siegle2010markovian}. See Proposition \ref{asympbeh} for a precise argument. 
\end{itemize}
\end{ex}

Other examples of systems that can be modeled by our GLEs are multiparticle  systems with hydrodynamic interaction \cite{ermak1978brownian}, active matter systems \cite{sevilla2018non}, among others. Although our main results are applicable to these systems, we will not pursue the study of these systems here. 

\subsection{Goals, Organization and Summary of Results of the Paper} \label{goaletc}

\noindent  {\bf Goals of the Paper.}
We aim to derive homogenized models for a general class of GLEs (see Section \ref{sect_gles}), containing the examples (M1) and (M2) as special cases (see Corollary \ref{w1case} and  Corollary \ref{w2case}). This will allow us to gain insights into the stochastic dynamics of such systems, including many systems that exhibit anomalous diffusion (see discussion in the paragraph before Example \ref{ex_mot}) -- this is, in fact, the main motivation of the present paper. To the best of our knowledge, this is the first work that studies homogenization for GLE models describing anomalous diffusion. 

Given a GLE system, it is often desirable to work with  simpler, reduced models that  capture the essential features of its dynamics.  To obtain satisfactory and optimal models, one needs to take into account the trade-off between the simplicity and accuracy of the reduced models sought after. Indeed, one may find that a reduced model, while simplified, fails to give a physically correct model for describing a system of interest \cite{safdari2017aging}.  Two successful reductions were carried out in \cite{hottovy2015smoluchowski} for the case $\vecc{F}_1=\vecc{0}$ and in \cite{LimWehr_Homog_NonMarkovian} for the case $\vecc{F}_0 = \vecc{0}$.

One of our main goals in this paper is to devise and study new homogenization  procedures that yield reduced models retaining essential features of a more general class of models. This program is of importance for identification, parameter inference and uncertainty quantification of stochastic systems \cite{Picci2011_nicereview,hall2016uncertainty,lysy2016model,lei2016data} arising in  the studies of anomalous diffusion \cite{mckinley2009transient,morgado2002relation}, climate modeling \cite{gottwald2015stochastic,majda2001mathematical} and molecular systems \cite{cordoba2012elimination}, among others.  There is  increasing amount of effort striving to implement this or related programs, starting from microscopic models \cite{picci1992stochastic}, using various techniques \cite{givon2004extracting,Pavliotis,bo2016multiple, froyland2016trajectory,hartmann2011balanced}, for different systems of interest in the literature. The derived effective SDE models will be of particular interest for modelers of anomalous diffusion.  \\



\noindent {\bf Organization of the Paper.} 
The paper is organized as follows. We first present the application of the results obtained in the later sections (Section \ref{sect_newsmallmlimit} and Section \ref{sect_newhomogcase}) to study homogenization of generalized versions of the one-dimensional models (M1) and (M2) from Example \ref{ex_mot} in Section \ref{sect_appl}.  Since these results are easier to state and require minimal notation to understand, we have chosen to present them as early as possible to demonstrate the value of our study to application-oriented readers. The later sections study an extended, multi-dimensional version of the GLEs in Section \ref{sect_appl}. In Section \ref{sect_gles} we introduce the GLEs to be studied and revisit them from the perspective of  input-output stochastic dynamical systems exhibiting multiple time scales.   In Section \ref{sect_homogofGLE}, we discuss various ways of homogenizing  GLEs.  Following this discussion, we study the small mass limit of the GLEs in Section \ref{sect_newsmallmlimit}.    We introduce and study novel homogenization procedures for a class of GLEs in Section \ref{sect_newhomogcase}. We state conclusions and make final remarks in Section \ref{sect_conclusions}. Relevant technical details and supplementary materials are provided in the appendix.  In particular, we state a homogenization theorem for a general class of SDEs with state-dependent coefficients in  Appendix \ref{sect_generalhomogthm}. The proof of this theorem is given in Appendix \ref{proof_ch2}.  \\

\noindent {\bf Summary of the Main Results.}
For reader's convenience, below we list (not in exactly the same order as the results appear in the paper) and summarize the  main results obtained in the paper.
\begin{itemize}
\item The first main result is Theorem \ref{newsmallm}. It studies the small mass limit of the GLE described by \eqref{res_gle_smallmass1}-\eqref{res_gle_smallmass}. It states that the position process converges, in a strong pathwise sense, to a component of a higher dimensional process satisfying an It\^o SDE. The SDE contains non-trivial drift correction terms. We stress that, while being a component of a Markov process, the limiting position process itself is not Markov.  This is  in constrast to the nature of limiting processes obtained in earlier works, the difference  which holds interesting implications from a physical point of view (recall the discussion after Eqn. \eqref{spec_form}). Therefore,  Theorem \ref{newsmallm} constitutes a novel result, both mathematically and physically.
\item The second main result is Theorem \ref{compl}. It describes the homogenized behavior of a family of   GLEs (Eqns. \eqref{res_gle_1}-\eqref{res_gle_2}), parametrized by $\epsilon > 0$,  in the limit as $\epsilon \to 0$. This limit is equivalent to the limit in which the inertial time scale, some of the memory time scales and some of the noise correlation time scales in the pre-limit system, tend to zero at the same rate. As in Theorem \ref{newsmallm}, the result here states that the position process converges, in a strong pathwise sense, to a component of a higher dimensional process satisfying an It\^o SDE which contains non-trivial drift correction terms. Again, the limiting position process is non-Markov. However, the structure of the SDE is rather different from the one obtained in Theorem \ref{newsmallm}. As discussed later, this result holds interesting consequences for systems exhibiting anomalous diffusion. 
\item The third and forth main result are Corollary \ref{w1case} and Corollary \ref{w2case}.  These results specialize the earlier ones to one-dimensional GLE models, which are generalizations of (M1) and (M2), and follow from the earlier theorems. They give explicit expressions for the drift correction terms present in the limiting SDEs and therefore may be used directly for modeling and simulation purposes. Furthermore, we show that, in the important case where the fluctuation-dissipation relation (see Remark \ref{fdt_rem}) holds, the two corollaries are intimately connected. Recall that these results are going to be presented first in Section \ref{sect_appl}. 
\item The last main result is Theorem \ref{mainthm}, on homogenization of a family of parametrized SDEs whose coefficients are state-dependent. These SDEs are variants of the ones studied in earlier works \cite{hottovy2015smoluchowski,birrell2017small,2017BirrellLatest}. In comparison with all the earlier studies, the state-dependent coefficients of the pre-limit SDEs \eqref{sde1}-\eqref{sde2} may depend on the parameter $\epsilon > 0$ (to be taken to zero) explicitly. Therefore, this result is new and not simply a minor generalization of earlier results. Moreover, it is important in the context of present paper and is needed here to study various homogenization limits of GLEs, the importance of which is evident in the discussions above, in the main paper.
\end{itemize}

\section{Application to One-Dimensional GLE Models}
\label{sect_appl}

We first study the small mass limit of a one-dimensional GLE, which is a generalized version of the GLE in model (M2) of Example \ref{ex_mot},  modeling super-diffusion of a particle in a heat bath. Our models are generalized in that the coefficients of the GLEs are state-dependent. For simplicity, we are going to omit the explicit time dependence in the damping and noise coefficients---but not in the external force.


For $t \in \RR^+$, $m>0$, let $x_t, v_t \in \RR$ be the solutions to the equations:
\begin{align}
dx_t &= v_t dt, \label{one-dim-pos0} \\
m dv_t &= -g(x_t)\left(\int_0^t \kappa(t-s) h(x_s) v_s ds\right) dt + \sigma(x_t) \xi_t dt + F_e(t,x_t)dt, \label{one-dim-vel0}
\end{align}
where
\begin{equation} \label{w2}
\kappa(t) = \frac{\beta^2}{2} (\delta(t) - \Gamma_1 e^{-\Gamma_1 |t|}),
\end{equation}
where $\Gamma_1 > 0$, and $\xi_t$ is the mean-zero stationary Gaussian process with the covariance function $R(t)=\kappa(t)$ and spectral density,
\begin{equation}
\mathcal{S}(\omega) = \frac{\beta^2  \omega^2}{\omega^2+\Gamma_1^2},
\end{equation}
The initial data $(x,v)$ are random variables independent of $\epsilon$ and have finite moments of all orders.

The following corollary describes the limiting SDE for the particle's position obtained in the small mass limit of \eqref{one-dim-pos0}-\eqref{one-dim-vel0}.

\begin{cor} \label{w1case}
Assume that for every $y \in \RR$, $g(y), g'(y), h(y), h'(y)$, $\sigma(y)$ are bounded continuous functions in $y$, $F_e(t,y)$ is bounded and continuous in $t$ and $y$, and all the listed functions have bounded $y$-derivatives.  Then in the limit $m \to 0$, the particle's position, $x_t \in \RR$, satisfying \eqref{one-dim-pos0}-\eqref{one-dim-vel0},
converges to $X_t$, where $X_t$ solves the following It\^o SDE:
\begin{align}
dX_t &= \frac{2}{\beta^2 g h} F_e(t,X_t) dt - \frac{2}{\beta h} Y_t dt + S_1(X_t) dt + \frac{2 \sigma}{\beta g h} (Z_t dt + dW_t), \label{e1} \\
dY_t &= -\frac{\Gamma_1}{\beta g} F_e(t,X_t) dt + S_2(X_t) dt - \frac{\Gamma_1  \sigma}{g} (dW_t + Z_t dt), \label{e2} \\
dZ_t &= -\Gamma_1 Z_t dt - \Gamma_1 dW_t,  \label{e3}
\end{align}
where \begin{align}
S_1(X) &= \frac{2}{\beta^2} \frac{\partial}{\partial X}\left(\frac{1}{g h} \right) \frac{\sigma^2}{g h},  \ \ \ \ S_2(X) = -\frac{\Gamma_1}{\beta} \frac{\partial}{\partial X}\left(\frac{1}{g} \right) \frac{\sigma^2}{g h}.
\end{align}
Moreover, if in addition $g := \phi \sigma$, where $\phi > 0$, then the number of limiting SDEs reduces from three to two:
\begin{align}
dX_t &= \frac{2}{\beta^2 \phi^2 } \frac{\partial}{\partial X}\left(\frac{1}{\sigma h}\right) \frac{\sigma}{h} dt + \frac{2}{ \phi \sigma h \beta^2} F_e(t,X_t) dt - \frac{2}{\beta \phi h}U_t^{\phi} dt + \frac{2}{\beta \phi h} dW_t, \label{r1} \\
dU_t^\phi &= -\frac{\Gamma_1}{\beta \phi^2} \frac{\partial}{\partial X}\left(\frac{1}{\sigma}\right) \frac{\sigma}{h} dt - \frac{\Gamma_1}{ \beta \sigma} F_e(t,X_t) dt, \label{r2}
\end{align}
where $U_t^\phi = \phi Y_t-Z_t$.

The convergence is in the sense that for every $T>0$, $\sup_{t \in [0,T]} |x_t - X_t| \to 0$ in probability as $m \to 0$.
\end{cor}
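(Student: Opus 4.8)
The plan is to rewrite the GLE \eqref{one-dim-pos0}--\eqref{one-dim-vel0} as a closed, finite-dimensional It\^o system carrying the small parameter $m$, and then to apply the general small-mass homogenization theorem for GLEs (Theorem \ref{newsmallm}, itself a consequence of Theorem \ref{mainthm}); the mode of convergence claimed in the corollary, namely $\sup_{t\in[0,T]}|x_t - X_t|\to 0$ in probability, is precisely the one those theorems deliver.

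First I would Markovianize both the memory term and the driving noise. Since $\kappa$ in \eqref{w2} is a Dirac delta plus a decaying exponential, the convolution $\int_0^t\kappa(t-s)h(x_s)v_s\,ds$ splits into an instantaneous part proportional to $h(x_t)v_t$, which merges with the instantaneous damping to yield an effective damping coefficient $\tilde\gamma(x)$ proportional to $g(x)h(x)$, and a part represented by one auxiliary variable $y_t := \int_0^t e^{-\Gamma_1(t-s)}h(x_s)v_s\,ds$, solving $dy_t = -\Gamma_1 y_t\,dt + h(x_t)v_t\,dt$. Similarly, the noise $\xi_t$, whose spectral density $\mathcal{S}(\omega) = \beta^2\omega^2/(\omega^2+\Gamma_1^2) = \beta^2 - \beta^2\Gamma_1^2/(\omega^2+\Gamma_1^2)$ is a constant minus a Lorentzian, admits the realization $\xi_t\,dt = a\,dW_t + b\,u_t\,dt$ for a standard Wiener process $W_t$, explicit constants $a,b$, and the stationary Ornstein--Uhlenbeck process $du_t = -\Gamma_1 u_t\,dt + dW_t$; the constants are pinned down by matching the covariance of this process to $\kappa$. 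Substituting, \eqref{one-dim-pos0}--\eqref{one-dim-vel0} becomes a closed system in $(x_t, v_t, y_t, u_t)$: $dx_t = v_t\,dt$, $m\,dv_t = -\tilde\gamma(x_t)v_t\,dt + \Phi(t,x_t,y_t,u_t)\,dt + a\,\sigma(x_t)\,dW_t$, $dy_t = -\Gamma_1 y_t\,dt + h(x_t)v_t\,dt$, $du_t = -\Gamma_1 u_t\,dt + dW_t$, with $\Phi$ affine in $(y_t,u_t)$. This has exactly the structure treated by the general theorem, with $v_t$ the single fast variable and $(x_t, y_t, u_t)$ the slow ones ($u_t$ being unaffected by the limit).

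Next I would invoke Theorem \ref{newsmallm}. Its hypotheses --- boundedness and bounded first derivatives in the state variable of $g, h, \sigma$, continuity and boundedness of $F_e$, and stability (here, positivity) of $\tilde\gamma$ --- follow from the assumptions of the corollary. The theorem then gives convergence, uniformly on compact time intervals and in probability, of $(x_t, y_t, u_t)$ to a limit in which $v_t$ is replaced by $\tilde\gamma^{-1}\Phi$ evaluated along the limit, plus a noise-induced drift correction determined by the state-dependence of $\tilde\gamma$, of the noise amplitude $a\sigma$, and of the coupling coefficient $h$ appearing in the $y$-equation. Carrying out this substitution in the $x$- and $y$-equations and simplifying --- in particular, the $-\Gamma_1 Y_t$ self-coupling one would naively expect in the $Y$-equation cancels against the term coming through the memory integral, which is why it is absent from \eqref{e2} --- yields the drift corrections $S_1$ and $S_2$. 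Renaming $Y_t$ and $Z_t$ as the appropriate constant multiples of $y_t$ and $u_t$ (in particular $Z_t = -\Gamma_1 u_t$, so that $dZ_t = -\Gamma_1 Z_t\,dt - \Gamma_1\,dW_t$) puts the limit in the form \eqref{e1}--\eqref{e3}.

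Finally, in the case $g = \phi\sigma$ with $\phi > 0$ constant (a form of the fluctuation-dissipation relation, cf. Remark \ref{fdt_rem}), the coefficients of $dW_t$ and of $Z_t\,dt$ in the $Y$- and $Z$-equations become proportional, so that $U_t^\phi = \phi Y_t - Z_t$ is driven neither by $W_t$ nor by $Y_t, Z_t$ individually; substituting it into \eqref{e1}--\eqref{e3} collapses the three equations to the two equations \eqref{r1}--\eqref{r2}. The main obstacle throughout is the explicit computation: fixing the constants attached to the $\delta$-part of $\kappa$ and verifying both the Markovian realization of $\xi_t$ and the stability of $\tilde\gamma$, and then pushing all occurrences of $g, h, \sigma$ and their derivatives through the noise-induced-drift formula of the general theorem to land on exactly $S_1$ and $S_2$; recognizing the decoupling combination $\phi Y_t - Z_t$ is the one step that is not purely mechanical.
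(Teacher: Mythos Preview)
Your proposal is correct and follows essentially the same route as the paper: the paper's proof simply records the parameter identifications needed to invoke Theorem \ref{newsmallm} (the $\delta$-part of $\kappa$ becoming the instantaneous damping $\vecc{\gamma}_0=\beta^2 gh/2$ and white-noise coefficient $\vecc{\sigma}_0=\beta\sigma$, the exponential tail furnishing the single auxiliary memory and noise variables), whereas you spell out this Markovianization explicitly before applying the theorem. Your observation about the cancellation of the $-\Gamma_1 Y_t$ term and your handling of the $g=\phi\sigma$ reduction via $U_t^\phi=\phi Y_t-Z_t$ are both on target.
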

\begin{proof}
We apply Theorem \ref{newsmallm} by setting $d=1, d_2 = d_4 = 2$, $\alpha_1 = \alpha_3 = 0$, $\alpha_2 = \alpha_4 = 1$,
$\vecc{\gamma}_0 = \beta^2  g h/2$, $\vecc{\sigma}_0 = \beta \sigma$, $\vecc{h} = h$, $\vecc{g} = g$, $\vecc{\sigma} = \sigma$, $\vecc{C}_2 = \vecc{C}_4 = \beta$, $\vecc{\Gamma}_2 = \Gamma_1$, $\vecc{M}_2 \vecc{C}_2^* = -\Gamma_1 \beta/2$, $\vecc{\Gamma}_4 = \Gamma_1$, $\vecc{\Sigma}_4 = -\Gamma_1$, and  $\vecc{F}_e = F_e$. The assumptions of Theorem \ref{newsmallm} can be verified in a straightforward way and so the results of the corollary follow.
\end{proof}

We next specialize the result of Theorem \ref{compl} to study homogenization of one-dimensional GLEs which are generalizations of the model (M1) in Example \ref{ex_mot}:
for $t \in \RR^+$, $m>0$, let $x_t, v_t \in \RR$ be the solutions to the equations:
\begin{align}
dx_t &= v_t dt, \label{one-dim-pos} \\
m dv_t &= -g(x_t)\left(\int_0^t \kappa(t-s) h(x_s) v_s ds\right) dt + \sigma(x_t) \xi_t dt + F_e(t,x_t)dt, \label{one-dim-vel}
\end{align}
where
\begin{equation} \label{w3}
\kappa(t) = \frac{\beta^2 \Gamma_2^2(\Gamma_2 e^{-\Gamma_2 |t|} - \Gamma_1 e^{-\Gamma_1 |t|})}{2(\Gamma_2^2-\Gamma_1^2)},
\end{equation}
with $\Gamma_2 > \Gamma_1 > 0$, and $\xi_t$ is the mean-zero stationary Gaussian process with the covariance function $R(t)=\kappa(t)$ and spectral density,
\begin{equation}
\mathcal{S}(\omega) = \frac{\beta^2 \Gamma_2^2 \omega^2}{(\omega^2+\Gamma_1^2)(\omega^2+\Gamma_2^2)}.
\end{equation}
The initial data $(x,v)$ are random variables independent of $\epsilon$ and have finite moments of all orders.

For $\epsilon > 0$, we set $m = m_0 \epsilon$ and $\Gamma_2 = \gamma_2/\epsilon$ in  \eqref{one-dim-pos}-\eqref{one-dim-vel}, where $m_0$ and $\gamma_2$ are positive constants. This gives the family of equations:
\begin{align}
dx^\epsilon_t &= v^\epsilon_t dt, \label{res_one-dim-pos} \\
m_0 \epsilon dv^\epsilon_t &= -g(x^\epsilon_t)\left(\int_0^t \kappa^\epsilon(t-s) h(x^\epsilon_s) v^\epsilon_s ds\right) dt + \sigma(x^\epsilon_t) \xi^\epsilon_t dt + F_e(t,x^\epsilon_t)dt, \label{res_one-dim-vel}
\end{align}
where
\begin{equation} \label{w3_rescaled}
\kappa^\epsilon(t) = \frac{\beta^2 \gamma_2^2(\frac{\gamma_2}{\epsilon} e^{-\frac{\gamma_2 }{\epsilon}|t|} - \Gamma_1 e^{-\Gamma_1 |t|})}{2(\gamma_2^2- \epsilon^2 \Gamma_1^2)},
\end{equation}
and $\xi^\epsilon_t$ is the family of mean-zero stationary Gaussian processes with the covariance functions, $R^\epsilon(t) = \kappa^\epsilon(t)$.\\

\noindent {\bf Discussion.}
We discuss the physical meaning behind the above rescaling of  parameters. Recall that in the first case of Example \ref{ex_mot} (i.e. the model (M1)), the mean-square displacement of the particle grows as $t^2$ as $t \to \infty$ and therefore the above model describes a particle exhibiting super-diffusion. As $\epsilon \to 0$, the environment allows for more and more negative correlation  and in the limit the covariance function consists of a delta-type peak at $t=0$ and a negative long tail compensating for the positive peak when integrated (see Figure \ref{fig1} and also page 105 of \cite{toda2012statistical}). Indeed,
\begin{equation}
\kappa^\epsilon(t) \to \kappa(t) := \frac{\beta^2}{2} (\delta(t)-\Gamma_1 e^{-\Gamma_1|t|})
\end{equation}
as $\epsilon \to 0$.
This is the so-called {\it vanishing effective friction case} in \cite{bao2005non}. The noise with the covariance function $\kappa^\epsilon(t)$ is called harmonic velocity noise, whereas the noise with the covariance function $\kappa(t)$ is the derivative of an Ornstein-Uhlenbeck process.

The following corollary provides the homogenized model in the limit $\epsilon \to 0$ of \eqref{res_one-dim-pos}-\eqref{res_one-dim-vel}.

\begin{cor} \label{w2case}
Assume that for every $y \in \RR$, $g(y), g'(y), h(y), h'(y)$, $\sigma(y)$ are bounded continuous functions in $y$, $F_e(t,y)$ is bounded and continuous in $t$ and $y$, and all the listed functions have bounded derivatives in $y$.  Then in the limit $\epsilon \to 0$, the particle's position, $x^\epsilon_t \in \RR$, satisfying \eqref{res_one-dim-pos}-\eqref{res_one-dim-vel},
converges to $X_t$, where $X_t$ solves the following It\^o SDE:
\begin{align}
dX_t &= \frac{2}{\beta^2 g h} F_e(t,X_t) dt -  \frac{2}{\beta h}Y_t dt + S_1(X_t) dt + \frac{2\sigma}{\beta g h } (dW_t + Z_t dt), \\
dY_t &= -\frac{\Gamma_1}{\beta g} F_e(t,X_t) dt + S_2(X_t) dt - \frac{\Gamma_1 \sigma}{g}(dW_t + Z_t dt), \\
dZ_t &= -\Gamma_1 Z_t dt - \Gamma_1 dW_t,
\end{align}
where  $g=g(X_t)$, $h = h(X_t)$, $\sigma=\sigma(X_t)$, $W_t$ is a one-dimensional Wiener process, and
\begin{align}
S_1 &= \frac{2}{\beta^2} \frac{\partial}{\partial X}\left(\frac{1}{gh}\right)\frac{\sigma^2}{gh} - \frac{\partial}{\partial X}\left(\frac{1}{h}\right) \frac{4 \sigma^2}{g(gh \beta^2+4m_0\gamma_2)} \nonumber \\  
&\ \ \ \ + \frac{\partial}{\partial X}\left(\frac{\sigma}{gh}\right)\frac{4 \sigma}{ \beta^2  g h  +4m_0\gamma_2}, \\
S_2 &= -\frac{\Gamma_1}{\beta}\frac{\partial}{\partial X}\left(\frac{1}{g}\right)\frac{\sigma^2}{g h} - \frac{\partial}{\partial X}\left(\frac{\sigma}{g}\right)\frac{2 \Gamma_1 \beta \sigma}{\beta^2 g h +4m_0\gamma_2}.
\end{align}

Moreover, if in addition $g := \phi \sigma$, where $\phi > 0$, then the number of limiting SDEs reduces from three to two:
\begin{align}
dX_t &= \frac{2}{\beta^2 \phi^2 } \frac{\partial}{\partial X}\left(\frac{1}{\sigma h}\right) \frac{\sigma}{h} dt + \frac{2}{ \phi \sigma h \beta^2} F_e(t,X_t) dt - \frac{2}{\beta \phi h}U_t^{\phi} dt + \frac{2}{\beta \phi h} dW_t, \label{r1} \\
dU_t^\phi &= -\frac{\Gamma_1}{\beta \phi^2} \frac{\partial}{\partial X}\left(\frac{1}{\sigma}\right) \frac{\sigma}{h} dt - \frac{\Gamma_1}{ \beta \sigma} F_e(t,X_t) dt, \label{r2}
\end{align}
where $U_t^\phi = \phi Y_t-Z_t$.

The convergence is in the sense that for every $T>0$, $\sup_{t \in [0,T]} |x^\epsilon_t - X_t| \to 0$ in probability as $\epsilon \to 0$.
\end{cor}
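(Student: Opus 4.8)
The plan is to obtain Corollary \ref{w2case} as a specialization of Theorem \ref{compl}, in exactly the same spirit as the proof of Corollary \ref{w1case}: the conceptual content is contained in the general homogenization theorem, and what remains is (a) translating the one-dimensional system \eqref{res_one-dim-pos}--\eqref{res_one-dim-vel} into the notation of the GLE family \eqref{res_gle_1}--\eqref{res_gle_2}, and (b) simplifying the resulting limiting SDE into the explicit form stated. First I would put the colored noise and the memory term of \eqref{res_one-dim-vel} into Markovian form. Since $R^\epsilon(t)=\kappa^\epsilon(t)$ is a linear combination of $e^{-\Gamma_1|t|}$ and $e^{-(\gamma_2/\epsilon)|t|}$, the process $\xi^\epsilon_t$ is realized as a fixed linear functional of a two-dimensional Ornstein--Uhlenbeck process with relaxation rates $\Gamma_1$ and $\gamma_2/\epsilon$, and the convolution $\int_0^t\kappa^\epsilon(t-s)h(x^\epsilon_s)v^\epsilon_s\,ds$ is likewise represented through two auxiliary variables, one relaxing at the order-one rate $\Gamma_1$ and one at the fast rate $\gamma_2/\epsilon$. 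Together with $v^\epsilon_t$, whose inertial time scale is $m_0\epsilon$, the fast auxiliary variables constitute the block that is sent to its quasi-static limit, while $x^\epsilon_t$ and the two $\Gamma_1$-modes are the slow variables. This fixes all the data of Theorem \ref{compl}: the dimension $d=1$, the fast and slow dimensions, the scaling exponents, and the matrices $\vecc{\gamma}_0,\vecc{\sigma}_0,\vecc{g},\vecc{h},\vecc{\sigma},\vecc{C}_i,\vecc{\Gamma}_i,\vecc{M}_i,\vecc{\Sigma}_i$, with $\gamma_2$ now sitting in the fast block (in contrast to Corollary \ref{w1case}, where no such fast memory/noise scale is present).

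Second, I would verify the hypotheses of Theorem \ref{compl}. The boundedness, continuity and bounded-derivative conditions on $g,h,\sigma$ and on $F_e$ are assumed; the fast damping block is invertible and stable because $\beta^2 g h/2>0$ and the relaxation rates $\Gamma_1,\gamma_2/\epsilon$ are positive, and the noise coefficients are nondegenerate where the theorem requires. Applying Theorem \ref{compl} then yields the convergence of $x^\epsilon_t$, uniformly on compact time intervals and in probability, to the first component $X_t$ of the limiting system, whose remaining components are $Y_t$ (the surviving $\Gamma_1$ memory mode) and $Z_t$ (the surviving $\Gamma_1$ noise mode), with $Z_t$ an Ornstein--Uhlenbeck process solving $dZ_t=-\Gamma_1 Z_t\,dt-\Gamma_1\,dW_t$. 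The drift correction terms $S_1,S_2$ are read off from the general noise-induced-drift formula of Theorem \ref{compl} by contracting the inverse of the (state-dependent) fast damping matrix against the derivatives of $\vecc{g},\vecc{h},\vecc{\sigma}$ paired with the fast diffusion coefficient. Carrying out this contraction in the present scalar setting, and tracking how the fast relaxation rate $\gamma_2$ and the mass parameter $m_0$ enter through the $\epsilon$-dependent fast block, produces the stated $S_1$ and $S_2$, including the terms with denominator $\beta^2 g h+4m_0\gamma_2$ that have no counterpart in Corollary \ref{w1case}.

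Finally, for the reduced system under the extra hypothesis $g=\phi\sigma$ with $\phi>0$ constant, I would substitute $g=\phi\sigma$ into the three limiting equations just derived and set $U^\phi_t:=\phi Y_t-Z_t$. A short computation shows that in this linear combination the Wiener increment $dW_t$ and the $Z_t\,dt$ contributions cancel and the two drift corrections combine, collapsing the three equations to the stated pair; the convergence statement is inherited unchanged. The main obstacle is organizational rather than conceptual: arranging the Markovian representation so that the rescaled kernel $\kappa^\epsilon$ and noise $\xi^\epsilon$ fall exactly under the template \eqref{res_gle_1}--\eqref{res_gle_2} with the correct fast/slow splitting, and then extracting $S_1$ and $S_2$ from the matrix-valued drift formula without error, in particular getting the $m_0\gamma_2$-dependent denominators right.
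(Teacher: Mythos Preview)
Your proposal is correct and follows essentially the same route as the paper: specialize Theorem \ref{compl} to $d=1$, $d_2=d_4=2$ with $B_2=B_4=\beta$, $\gamma_{2,2}=\gamma_{4,2}=\gamma_2$, $\Gamma_{2,1}=\Gamma_{4,1}=\Gamma_1$, verify the hypotheses, and then compute $S_1,S_2$ by solving the relevant Lyapunov equation for $\vecc{J}$ and contracting against the derivative of $\vecc{T}\vecc{U}^{-1}$. One small slip: the data of Theorem \ref{compl} do not include $\vecc{\gamma}_0,\vecc{\sigma}_0$ (there is no Markovian force $\vecc{F}_0$ here); the fast block you need to analyze is the matrix $\vecc{U}$ in \eqref{TU}, and the stability/invertibility conditions to check are the ones on $\vecc{I}+\vecc{g}\tilde{\vecc{\kappa}}_\epsilon(\lambda)\vecc{h}/(\lambda m_0)$ and on $\vecc{\nu}=\tfrac{1}{2}gh\beta^2$ stated in Theorem \ref{compl}, not a condition on a separate instantaneous damping $\vecc{\gamma}_0$.
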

\begin{proof}
Let $d=1$, $d_2 = d_4 = 2$ and denote the one-dimensional version of the variables,  coefficients and parameters in Theorem \ref{compl} by non-bold letters (for instance, $x_t$, $B_2$, $\Gamma_{2,2}$ etc.). Furthermore, set $B_2 = B_4 = \beta > 0$, $\gamma_{2,2}=\gamma_{4,2}=\gamma_2 > 0$ and $\Gamma_{2,1}=\Gamma_{4,1}=\Gamma_1$. Then it can be verified that the assumptions of Theorem \ref{compl} hold and the results follow upon solving a Lyapunov equation.
\end{proof}


\begin{rem}  A few remarks on the contents of Corollary \ref{w2case} follow.
\begin{itemize}
\item[(i)] the homogenized position process is non-Markov, driven by a colored noise process which is the derivative of the Ornstein-Uhlenbeck process. This behavior is expected in view of the asymptotic behavior of the rescaled memory function and spectral density as $\epsilon \to 0$.
\item[(ii)] similarly to the small mass limit case considered earlier, the limiting equation for the particle's position not only contains noise-induced drift terms but is also coupled to equations for other slow variables. Moreover, the limiting equations for these other slow variables also contain non-trivial correction terms -- the {\it memory induced drift}.
\end{itemize}
\end{rem}

\noindent {\bf Relation between Corollary \ref{w1case} and Corollary \ref{w2case}.}
The limiting SDE systems in Corollaries \ref{w1case} \& \ref{w2case} are generally different because of the different correction drift terms $S_1$ and $S_2$. In other words, sending $\Gamma_2 \to \infty$ first in \eqref{one-dim-pos}-\eqref{one-dim-vel} and then taking $m \to 0$ of the resulting GLE does not, in general, give the same limiting SDE as taking the joint limit of $m \to 0$ and $\Gamma_2 \to \infty$.  However, if one further assumes that $g$ is proportional to $\sigma$, then the limiting SDE systems coincide.
An important particular case is when $g = h = \sigma$, in which case a fluctuation-dissipation relation holds and the GLE can be derived from a microscopic Hamiltonian model (see Remark \ref{fdt_rem}).  In this case, the homogenized model described in both corollaries reduces to:
\begin{align}
dX_t &= \frac{2}{\beta^2 \sigma^2} F_e(t,X_t) dt - \frac{2}{\beta \sigma} U_t dt + \frac{2}{\beta^2}\frac{\partial}{\partial X_t}\left(\frac{1}{\sigma^2} \right) dt + \frac{2}{\beta \sigma} dW_t, \label{fdt1} \\
dU_t &= -\frac{\Gamma_1}{\beta \sigma} F_e(t,X_t) dt - \frac{\Gamma_1}{\beta} \frac{\partial}{\partial X}\left(\frac{1}{\sigma} \right) dt.  \label{fdt2}
\end{align}

To end this section, we remark that one could in principle repeat the above analysis for the case where the spectral density varies as $\omega^{2l}$, for $l=2,4,\dots$ (i.e. the highly nonlinear case) as well as extending the studies done so far in various other directions. To illustrate how non-trivial the calculations and results could become, we work out another example in Appendix \ref{anothereg}.

\begin{figure}
  \caption{Plot of the memory function $\kappa(t)$ in \eqref{w3} with $\Gamma_1 = 1$, $\beta = 1$ for different values of $\Gamma_2$ (left) and the memory function in \eqref{w5}  with $\Gamma_1 = 1$, $\Gamma_2 = 2$, $\beta = 1$ for different values of $\Gamma_3$ (right)} \label{fig1}
  \centering
  \includegraphics[width=0.48\textwidth]{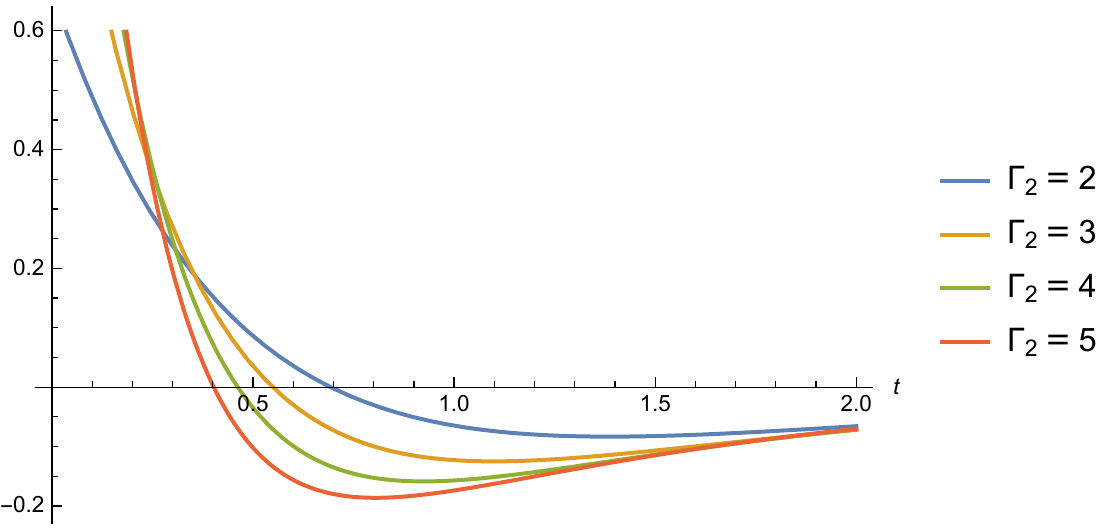}
 \includegraphics[width=0.48\textwidth]{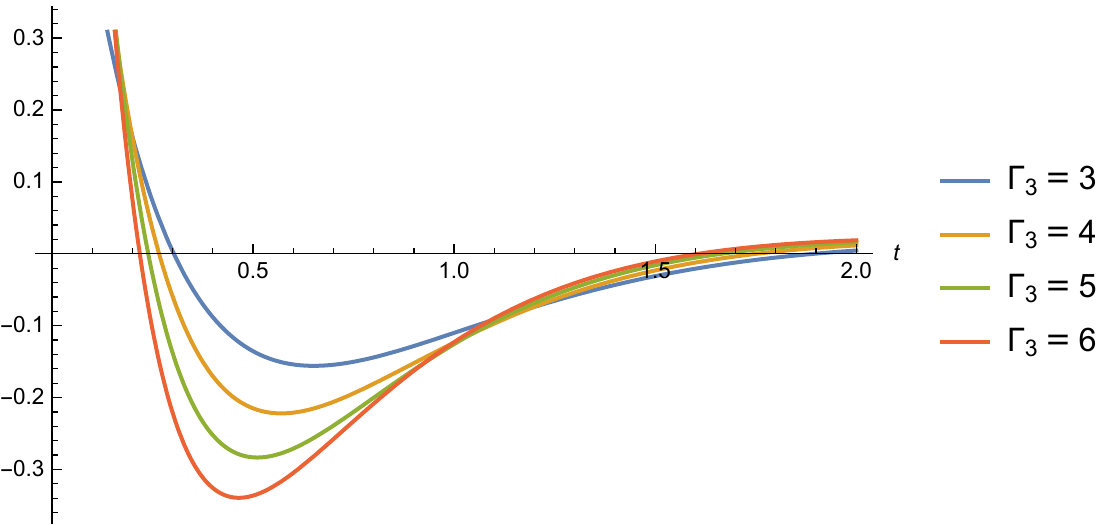}
\end{figure}

\section{GLEs in Finite Dimensions}
\label{sect_gles}
We call a system modeled by GLE of the form \eqref{gle2}  a {\it generalized Langevin system}.  Its dynamics will be referred to as  {\it generalized Langevin dynamics}.

We assume that the memory function $\vecc{\kappa}(t)$ in the GLE \eqref{gle2} is a  {\it Bohl function}, i.e. that each matrix element of $\vecc{\kappa}(t)$ is a finite, real-valued linear combination of exponentials, possibly multiplied by polynomials and/or by trigonometric functions. The noise process, $\{\vecc{\xi}(t), t \in \RR^+ \}$, is a mean-zero, mean-square continuous stationary Gaussian process with Bohl covariance function and, therefore, its spectral density $\vecc{\mathcal{S}}(\omega)$ is a rational function---(see Theorem 2.20 in \cite{trentelman2002control}). In this case, the generalized Langevin dynamics can be realized by an SDE system in a finite-dimensional space  (see next subsection for details).  The case in which an infinite-dimensional space is required is deferrred to a future work (see also Remark \ref{rem_inf_dim} and Section \ref{sect_conclusions}).


We recall a useful fact: given a rational spectral  density $\vecc{\mathcal{S}}(\omega) \in \RR^{r \times r}$, there exists a  rational function
$\vecc{G}(z) \in \CC^{r \times l}$, called a {\it spectral factor},  such that
$\vecc{\mathcal{S}}(\omega) = \vecc{G}(i\omega)\vecc{G}^{*}(-i\omega)$. We emphasize that such factorization is not unique \cite{lindquist2015linear}.

\subsection{Generalized Langevin Systems}
Below we define the memory function and the noise process in the GLE \eqref{gle2} (see Eqn. \eqref{nonM_force}),  and along the way introduce our notation. They are defined in a manner ensuring simplicity as well as providing sufficient parameters for matching the memory function and the correlation function of the noise, thereby preserving the essential statistical properties of the GLE. This provides a systematic framework for our homogenization studies (see the discussion in Section \ref{sect_homogofGLE}). 

For $i=1,2,3,4$, let $\vecc{\Gamma}_i \in \RR^{d_i \times d_i}$, $\vecc{M}_i \in \RR^{d_i \times d_i}$, $\vecc{\Sigma}_i \in \RR^{d_i \times q_i}$  be constant matrices. Also, let $\vecc{C}_i \in \RR^{q \times d_i}$ (for $i=1,2$) and  $\vecc{C}_i \in \RR^{r \times d_i}$ (for $i=3,4$) be constant matrices. Here, the $d_i$ and $q_i$ ($i=1,2,3,4$) are positive integers. Let $\alpha_i \in \{0,1\}$ be a ``switch on or off'' parameter. We define the memory function in terms of the sextuple $(\vecc{\Gamma}_1,\vecc{M}_1,\vecc{C}_1;\vecc{\Gamma}_2,\vecc{M}_2,\vecc{C}_2)$ of matrices:
\begin{equation} \label{memory_realized}
\vecc{\kappa}(t)= \alpha_1 \vecc{\kappa}_1(t) + \alpha_2\vecc{\kappa}_2(t) =  \sum_{i=1}^2 \alpha_i \vecc{C}_i e^{-\vecc{\Gamma_i}|t|}\vecc{M}_i\vecc{C}_i^*,
\end{equation}
The noise process is defined as:
\begin{equation} \label{noise}
\vecc{\xi}_t = \alpha_3 \vecc{C}_3 \vecc{\beta}^3_t + \alpha_4 \vecc{C}_4 \vecc{\beta}^4_t,\end{equation}
where the $\vecc{\beta}^{j}_t \in \RR^{d_j}$ ($j=3,4$) are independent Ornstein-Uhlenbeck type processes, i.e. solutions of the SDEs:
\begin{equation} \label{realize}
d\vecc{\beta}^j_t = -\vecc{\Gamma}_j \vecc{\beta}^j_t dt + \vecc{\Sigma}_j d\vecc{W}^{(q_j)}_t,
\end{equation}
with the initial conditions, $\vecc{\beta}^j_0$,  normally distributed with mean-zero and covariance $\vecc{M}_j$. Here, $\vecc{W}^{(q_j)}_t$ denotes a $q_j$-dimensional Wiener process, independent of $\vecc{\beta}^j_0$. Also, the Wiener processes $\vecc{W}_t^{(q_3)}$ and $\vecc{W}_t^{(q_4)}$ are independent.

For $i=1,2,3,4$, $\vecc{\Gamma}_i$ is {\it positive stable}, i.e. all eigenvalues of $\vecc{\Gamma}_i$ have positive real parts and $\vecc{M}_i = \vecc{M}_i^* > 0$ satisfies the following Lyapunov equation:
\begin{equation}
\vecc{\Gamma}_i \vecc{M}_i+\vecc{M}_i \vecc{\Gamma}_i^*=\vecc{\Sigma}_i \vecc{\Sigma}_i^*.
\end{equation}
The $\vecc{M}_i$ are therefore the steady-state covariances of the systems, i.e. the resulting Ornstein-Uhlenbeck processes are stationary.   In control theory, $\vecc{M}_i$ is also known as the {\it controllability Gramian} for the pair $(\vecc{\Gamma}_i, \vecc{\Sigma}_i)$ \cite{trentelman2002control}.

The covariance matrix, $\vecc{R}(t)$, of the mean-zero Gaussian noise process is expressed by the sextuple $(\vecc{\Gamma}_3,\vecc{M}_3,\vecc{C}_3;
\vecc{\Gamma}_4,\vecc{M}_4,\vecc{C}_4)$ of matrices as follows:
\begin{equation} \label{cov}
\vecc{R}(t)=\alpha_3 \vecc{R}_3(t)+ \alpha_4 \vecc{R}_4(t) =  \sum_{i=3}^4 \alpha_i \vecc{C}_i e^{-\vecc{\Gamma_i}|t|}\vecc{M}_i\vecc{C}_i^*,
\end{equation}
and so the sextuple $(\vecc{\Gamma}_3,\vecc{M}_3,\vecc{C}_3;\vecc{\Gamma}_4,\vecc{M}_4,\vecc{C}_4)$, together with the parameters $\alpha_3, \alpha_4$, completely determine the probability distributions of $\vecc{\xi}_t$. We denote the spectral density of the noise process by $\vecc{\mathcal{S}}(\omega) = \sum_{i=3,4}\alpha_i \vecc{\mathcal{S}}_i(\omega)$, where $\vecc{\mathcal{S}}_i(\omega)$ is the Fourier transform of $\vecc{R}_i(t)$ for $i=3,4$.

We will view the system \eqref{noise}-\eqref{realize} (which is in a statistical steady state) as a representation of the noise process $\vecc{\xi}_t$ and call such a representation a (finite-dimensional) {\it stochastic realization} of $\vecc{\xi}_t$. Similarly, we view \eqref{memory_realized} as a representation of the memory function $\vecc{\kappa}(t)$ and call such a representation a (finite-dimensional, deterministic) {\it memory realization} of $\vecc{\kappa}(t)$. We call the Fourier transform of $\vecc{\kappa}(t)$ and $\vecc{R}(t)$ the {\it spectral density of the memory function} and  {\it spectral density of the noise process} respectively.

An important message from the stochastic realization theory is that the system \eqref{noise}-\eqref{realize} is more than a representation of $\vecc{\xi}_t$ in terms of a white noise, in that it also contains  state variables $\vecc{\beta}^j$ ($j=3,4$) which serve as a ``dynamical memory". In contrast to standard treatments, this dynamical memory comes not from one, but from two independent systems of type \eqref{realize}. This will be used to include two distinct types of dynamical memory that can be switched on or off using the parameters $\alpha_i$ -- see Proposition \ref{asympbeh}. This consideration motivates us to define the memory function (and noise) explicitly using two independent systems, with different constraints on their parameters easier to state than if a single higher-dimensional system were used. 

The sextuples that define the memory function in \eqref{memory_realized} and the noise process in \eqref{noise} are only unique up to the following transformations:  \begin{equation} \label{transf_realize}
(\vecc{\Gamma}'_i=\vecc{T}_i \vecc{\Gamma}_i \vecc{T}^{-1}_i, \vecc{M}_i' = \vecc{T}_i \vecc{M}_i \vecc{T}_i^{*}, \vecc{C}'_i =  \vecc{C}_i \vecc{T}_i^{-1}),
\end{equation}
where $i=1,2,3,4$ and $\vecc{T}_i$ are any invertible matrices of appropriate dimensions \cite{lindquist2015linear}. Different choices of $\vecc{T}_i$ correspond to  different coordinate systems.



\begin{rem} Realization of the memory function and noise process in terms of the matrix sextuples, as defined above, covers all GLEs driven by Gaussian processes that can be realized in a finite dimension (see the propositions and theorems on page 303-308 of \cite{willems1980stochastic}). See also the remarks on the subject in \cite{LimWehr_Homog_NonMarkovian}.
\end{rem}

A summary of the above discussion is included in the following:

\begin{ass} \label{ass_bohl}
The memory function $\vecc{\kappa}(t)$ in the GLE \eqref{gle2} is a real-valued Bohl function defined by \eqref{memory_realized} and the noise process, $\{\vecc{\xi}_t, t \in \RR^+ \}$, is a mean-zero, mean-square continuous, stationary Gaussian process with Bohl covariance function (hence, with rational spectral density), admitting a stochastic realization given by \eqref{noise}-\eqref{realize}. Furthermore, we assume that any spectral factors $\vecc{\Phi}_i(z)$ ($i=1,2,3,4$) of the spectral densities $\vecc{\mathcal{S}}_i(\omega)$ are {\it minimal} (see Chapter 10 in \cite{lindquist2015linear}). 
\end{ass}

We introduce a generalized version of the effective damping constant and effective diffusion constant used in \cite{LimWehr_Homog_NonMarkovian}, which will be useful to study the asymptotic behavior of spectral densities.

\begin{defn} \label{defn_effconstnats}
For $n \in \ZZ$, the {\it $n$th order effective damping constant} is defined as the constant matrix, parametrized by $\alpha_1, \alpha_2 \in \{0,1\}$:
\begin{equation} \label{eff_damping}
\vecc{K}^{(n)}(\alpha_1,\alpha_2) := \alpha_1 \vecc{K}_1^{(n)} + \alpha_2 \vecc{K}_2^{(n)} \in \RR^{q \times q},
\end{equation} where
 $\vecc{K}_i^{(n)} = \vecc{C}_i \vecc{\Gamma}_i^{-n} \vecc{M}_i \vecc{C}_i^*$ (for $i=1,2$).
Likewise, the {\it $n$th order effective diffusion constant},
\begin{equation} \label{eff_diff}
\vecc{L}^{(n)}(\alpha_3,\alpha_4) :=  \alpha_3 \vecc{L}_3^{(n)} + \alpha_4 \vecc{L}_4^{(n)}  \in \RR^{r \times r},
\end{equation}
where $\vecc{L}_j^{(n)} = \vecc{C}_j \vecc{\Gamma}_j^{-n} \vecc{M}_j \vecc{C}_j^*$ (for $j=3,4$).
\end{defn}
Note that the first order effective damping constant $\vecc{K}^{(1)}(\alpha_1,\alpha_2) = \int_0^{\infty} \vecc{\kappa}(t) dt$ and the first order effective diffusion constant $\vecc{L}^{(1)}(\alpha_3,\alpha_4) = \int_0^{\infty} \vecc{R}(t) dt$ are simply the effective damping constant and effective diffusion constant introduced in \cite{LimWehr_Homog_NonMarkovian}. The  memory function and the covariance function of the noise process can be expressed in terms of these constants:
\begin{equation}
\vecc{\kappa}(t) = \sum_{i=1,2} \sum_{n=0}^{\infty} \alpha_i  \frac{(-|t|)^n}{n!} \vecc{K}^{(-n)}_i, \ \ \ \vecc{R}(t) = \sum_{j=3,4} \sum_{n=0}^{\infty} \alpha_j  \frac{(-|t|)^n}{n!} \vecc{L}^{(-n)}_j.
\end{equation}\\

\begin{ass} \label{ass_vanishingornot}
The matrix $\vecc{K}_1^{(1)}$ in the expression for first order effective damping constant is invertible and the matrix $\vecc{K}_2^{(1)}$ equals zero.  Similarly, in the expression for the first order effective diffusion constant
$\vecc{L}_3^{(1)}$, which is invertible, $\vecc{L}_4^{(1)} = \vecc{0}$.
\end{ass}



In order to develop intuition about general GLEs, it will be helpful to study the following exactly solvable special case.

\begin{ex} \label{ass_exactsolve} (An exactly solvable case)
In the GLE \eqref{gle2}, set
$\vecc{F}_e = \vecc{0}$.  Let $\vecc{\gamma}_0(t,\vecc{x}) = \vecc{\gamma}_0$, $\vecc{\sigma}_0(t,\vecc{x}) = \vecc{\sigma}_0$, $\vecc{h}(t,\vecc{x}) = \vecc{h}$, $\vecc{g}(t,\vecc{x}) = \vecc{g}$ and $\vecc{\sigma}(t,\vecc{x}) = \vecc{\sigma}$ be constant matrices. The initial data are the random variables, $\vecc{x}(0) = \vecc{x}$, $\vecc{v}(0) = \vecc{v}$, independent of $\{\vecc{\xi}(t), t \in \RR^+ \}$ and of $\{\vecc{W}^{(k)}(t), t \in \RR^+\}$.  The resulting GLE is:
\begin{equation} \label{gle_es}
m d\vecc{v}(t) = -\vecc{\gamma}_0 \vecc{v}(t) dt  -\vecc{g} \left( \int_0^t \vecc{\kappa}(t-s) \vecc{h} \vecc{v}(s) ds \right) dt + \vecc{\sigma}_0 d\vecc{W}^{(k)}(t) +  \vecc{\sigma} \vecc{\xi}(t) dt. 
\end{equation}
Of particular interest is the GLE \eqref{gle_es} with $\vecc{\gamma}_0 = \vecc{\sigma}_0 \vecc{\sigma}_0^*/2 \geq 0$, $\vecc{g} = \vecc{h}^* = \vecc{\sigma} > 0$, and $\vecc{R}(t) = \vecc{\kappa}(t) = \vecc{\kappa}^*(t)$, so that the fluctuation-dissipation relations hold (see Remark \ref{fdt_rem} and also Remark  \ref{msd_general}). The resulting GLE gives a simple model describing the motion of a free particle, interacting  with a  heat bath.  Note that generally the process $\vecc{v}(t)$ is not assumed to be stationary, in particular $\vecc{v}(0)$ could be an arbitrarily distributed random variable.
\end{ex}

The following proposition gives the asymptotic behavior of the spectral densities (equivalently, covariance functions or memory functions), the regularity\footnote{Sample path continuity does not in general imply mean-square continuity.} (in the mean-square sense) of the noise process, and, in the exactly solvable case of Example \ref{ass_exactsolve}, the long-time mean-squared displacement   of the particle.

\begin{prop} \label{asympbeh}
Suppose that the Assumptions \ref{ass_bohl} and \ref{ass_vanishingornot} are satisfied.  Let $\vecc{x}(t) =  \int_0^t \vecc{v}(s) ds \in \RR^d$, where $\vecc{v}(t)$ solves the GLE \eqref{gle_es}.
\begin{itemize}
\item[(i)] We have $\vecc{\mathcal{S}}_3(\omega) = O(1)$ as $\omega \to 0$.    Also, let $k \geq 3$ be a positive odd integer and assume that $\vecc{L}_4^{(n)} = 0$ for $0 < n < k$, where $n$ is odd, and $\vecc{L}_4^{(k)} \neq 0$.  Then $\vecc{\mathcal{S}}_{4}(\omega) = O(\omega^{k-1})$ as $\omega \to 0$.
If there exists $h > 0$ such that  the noise spectral density, $\vecc{\mathcal{S}}(\omega) = O\left(\frac{1}{\omega^{2h+1}}\right)$ as $\omega \to \infty$, then $\vecc{\xi}_t$ is $n$-times mean-square differentiable\footnote{A process $X(t)$ is mean-square differentiable on a time interval $\mathcal{\tau}$ if for every $t \in \mathcal{\tau}$, $$\left\| \frac{X(t+h)-X(t)}{h}-\frac{dX}{dt}\right\| \to 0,$$ as $h \to 0$.  } for $n < h$.
\item[(ii)] Let $\hat{\vecc{\kappa}}(z)$ denote the Laplace transform of $\vecc{\kappa}(t)$, i.e. $\hat{\vecc{\kappa}}(z) :=\int_0^\infty \vecc{\kappa}(t) e^{-zt} dt$, and $\mathcal{E} = \frac{1}{2} m E[\vecc{v}\vecc{v}^*]$ be the particle's initial average kinetic energy. Assume for simplicity that $\vecc{R}(t) = \vecc{\kappa}(t)$  and  $\vecc{\sigma}\vecc{\kappa}(t) \vecc{\sigma}^* = \vecc{h}^* \vecc{\kappa}^*(t) \vecc{g}^*$.  Then we have the following formula for the particle's mean-squared displacement (MSD):
\begin{align} \label{msd_formula}
E[\vecc{x}(t)\vecc{x}^*(t)] &=  2 \int_0^t \vecc{H}(s) ds + 2m \left(\vecc{H}(t) \mathcal{E} \vecc{H}^*(t) - \int_0^t \vecc{H}(u) \dot{\vecc{H}^*}(u) du \right) \nonumber \\ 
& \ \ \ \ + \int_0^t \vecc{H}(u) (\vecc{\sigma}_0 \vecc{\sigma}_0^* - 2 \vecc{\gamma}^*_0) \vecc{H}^*(u) du, 
\end{align}
where the Laplace transform of $\vecc{H}(t)$ is given by $\hat{\vecc{H}}(z) = z \hat{\vecc{F}}(z)$, with
\begin{equation} 
\hat{\vecc{F}}(z) = (z^2(mz\vecc{I}+\vecc{\gamma}_0+ \vecc{g}\hat{\vecc{\kappa}}(z)\vecc{h}))^{-1}.
\end{equation}\\
\end{itemize}

\noindent For (iii) and (iv) below, we consider the process $\vecc{x}_t$ solving the GLE \eqref{gle_es} with $\vecc{\gamma}_0 = \vecc{\sigma}_0 \vecc{\sigma}_0^*/2 \geq 0$, $\vecc{g} = \vecc{h}^* = \vecc{\sigma} > 0$, and $\vecc{R}(t) = \vecc{\kappa}(t) = \vecc{\kappa}^*(t)$.
\begin{itemize}
\item[(iii)] Let $\alpha_1 = \alpha_3 = 1$ ($\alpha_i$, for $i=2,4$, can be 0 or 1 and $\vecc{F}_0$ can be zero or nonzero). Then $E[\vecc{x}(t)\vecc{x}^*(t)] = O(t)$ as $t \to \infty$, in which case we say that the particle diffuses normally.
\item[(iv)] Let $\alpha_1 = 0$, $\alpha_2 =  1$ and $\vecc{F}_0 = \vecc{0}$ (the vanishing effective damping constant case). Then $E[\vecc{x}(t)\vecc{x}^*(t)] = O(t^{2})$ as $t \to \infty$, in which case we say that the particle  exhibits a ballistic (super-diffusive) behavior.
\end{itemize}
\end{prop}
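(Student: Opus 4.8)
The plan is to treat the four parts separately, since they rely on rather different tools: parts (i) and (ii) are essentially computations with Laplace/Fourier transforms and the explicit realizations \eqref{memory_realized}--\eqref{realize}, while parts (iii) and (iv) combine part (ii) with a Tauberian-type argument extracting the large-$t$ behavior of the MSD from the small-$z$ behavior of the relevant Laplace transform. For part (i), I would start from the identity $\vecc{\mathcal{S}}_j(\omega) = \vecc{C}_j(\vecc{\Gamma}_j + i\omega\vecc{I})^{-1}\vecc{M}_j + \vecc{M}_j(\vecc{\Gamma}_j^* - i\omega\vecc{I})^{-1}\vecc{C}_j^*$ obtained by Fourier-transforming $\vecc{R}_j(t) = \vecc{C}_j e^{-\vecc{\Gamma}_j|t|}\vecc{M}_j\vecc{C}_j^*$ (using the Lyapunov equation to put the two one-sided transforms together). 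Expanding $(\vecc{\Gamma}_j + i\omega\vecc{I})^{-1}$ in a Neumann series in $\omega$ produces exactly the effective diffusion constants $\vecc{L}_j^{(n)}$ from Definition \ref{defn_effconstnats}: the coefficient of $\omega^n$ in $\vecc{\mathcal{S}}_j(\omega)$ is $(i)^n((-1)^n + 1)\vecc{C}_j\vecc{\Gamma}_j^{-(n+1)}\vecc{M}_j\vecc{C}_j^* = ((-1)^n+1)(i)^n \vecc{L}_j^{(n+1)}$, which vanishes for odd $n$ automatically and is governed by the $\vecc{L}_j$'s for even $n$. For $\vecc{\mathcal{S}}_3$, Assumption \ref{ass_vanishingornot} gives $\vecc{L}_3^{(1)}$ invertible, so $\vecc{\mathcal{S}}_3(\omega) = 2\vecc{L}_3^{(1)} + O(\omega^2) = O(1)$. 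For $\vecc{\mathcal{S}}_4$, the hypothesis that the odd-order constants $\vecc{L}_4^{(n)}$ vanish for $n<k$ kills all terms up to order $\omega^{k-1}$, leaving $\vecc{\mathcal{S}}_4(\omega) = O(\omega^{k-1})$. The mean-square differentiability claim then follows from the standard spectral criterion: $\vecc{\xi}_t$ is $n$-times mean-square differentiable iff $\int \omega^{2n}\,\vecc{\mathcal{S}}(\omega)\,d\omega < \infty$, and the decay $\vecc{\mathcal{S}}(\omega) = O(\omega^{-(2h+1)})$ makes this integral converge for $2n < 2h$, i.e. $n < h$.

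For part (ii), I would solve the linear GLE \eqref{gle_es} by Laplace transform. Writing $\hat{\vecc{v}}(z)$ for the transform of $\vecc{v}(t)$ and using the convolution theorem on the memory term, one gets $(mz\vecc{I} + \vecc{\gamma}_0 + \vecc{g}\hat{\vecc{\kappa}}(z)\vecc{h})\hat{\vecc{v}}(z) = m\vecc{v} + \hat{\vecc{\sigma}_0 d\vecc{W}}(z) + \vecc{\sigma}\hat{\vecc{\xi}}(z)$, so that $\hat{\vecc{x}}(z) = \hat{\vecc{v}}(z)/z = \hat{\vecc{F}}(z)(m\vecc{v} + \cdots)$ with $\hat{\vecc{F}}(z) = (z^2(mz\vecc{I}+\vecc{\gamma}_0+\vecc{g}\hat{\vecc{\kappa}}(z)\vecc{h}))^{-1}$ as stated. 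Inverting, $\vecc{x}(t)$ is a sum of a deterministic term linear in $\vecc{v}$, a Wiener integral $\int_0^t \vecc{H}(t-s)\vecc{\sigma}_0 d\vecc{W}_s$ with $\vecc{H}$ defined by $\hat{\vecc{H}}(z) = z\hat{\vecc{F}}(z)$, and a term $\int_0^t \vecc{G}(t-s)\vecc{\sigma}\vecc{\xi}_s ds$ for an appropriate kernel. Taking the second moment, using independence of $\vecc{v}$, $\vecc{W}$ and $\vecc{\xi}$, the It\^o isometry for the Wiener part, and the stationarity $E[\vecc{\xi}_s\vecc{\xi}_u^*] = \vecc{R}(s-u)$ for the colored part, and then invoking the fluctuation-dissipation-type hypothesis $\vecc{R}(t) = \vecc{\kappa}(t)$ together with $\vecc{\sigma}\vecc{\kappa}(t)\vecc{\sigma}^* = \vecc{h}^*\vecc{\kappa}^*(t)\vecc{g}^*$, the colored-noise double integral collapses onto an expression involving $\vecc{H}$ and its derivative; the algebraic identity $\vecc{g}\hat{\vecc{\kappa}}(z)\vecc{h} = (\text{the factor appearing in }\hat{\vecc{F}}^{-1})$ lets one recognize $\int_0^t \vecc{H}(s)ds$ and $\int_0^t \vecc{H}(u)\dot{\vecc{H}^*}(u)du$ as the surviving pieces. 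Collecting the deterministic, Wiener and colored contributions and simplifying yields \eqref{msd_formula}. This bookkeeping — in particular showing that the cross terms vanish and that the FDR hypotheses produce exactly the combination $2\int_0^t\vecc{H}(s)ds + 2m(\vecc{H}\mathcal{E}\vecc{H}^* - \int\vecc{H}\dot{\vecc{H}^*}) + \int\vecc{H}(\vecc{\sigma}_0\vecc{\sigma}_0^* - 2\vecc{\gamma}_0^*)\vecc{H}^*$ — is the main obstacle, as it is the step where the several moment contributions must be matched against the transform identities.

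For parts (iii) and (iv), with the full FDR assumption $\vecc{\gamma}_0 = \vecc{\sigma}_0\vecc{\sigma}_0^*/2$, $\vecc{g} = \vecc{h}^* = \vecc{\sigma}$, the last integral in \eqref{msd_formula} vanishes and the $O(m)$ term stays bounded, so the long-time behavior of $E[\vecc{x}(t)\vecc{x}^*(t)]$ is governed by $2\int_0^t \vecc{H}(s)ds$, hence by the behavior of $\hat{\vecc{H}}(z)/z = \hat{\vecc{F}}(z)$ as $z \to 0$ via a Tauberian theorem. Now $\hat{\vecc{F}}(z)^{-1} = z^2(mz\vecc{I} + \vecc{\gamma}_0 + \vecc{g}\hat{\vecc{\kappa}}(z)\vecc{h})$, and $\hat{\vecc{\kappa}}(0) = \int_0^\infty \vecc{\kappa}(t)dt = \vecc{K}^{(1)}(\alpha_1,\alpha_2)$. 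In case (iii) we have $\alpha_1 = \alpha_3 = 1$, so $\vecc{K}^{(1)} \supseteq \vecc{K}_1^{(1)}$ is invertible by Assumption \ref{ass_vanishingornot}; hence $\hat{\vecc{F}}(z) \sim (z^2 \vecc{g}\vecc{K}^{(1)}\vecc{h})^{-1}$ as $z\to 0$, which corresponds to $\int_0^t\vecc{H}(s)ds \sim ct$, giving $E[\vecc{x}(t)\vecc{x}^*(t)] = O(t)$, normal diffusion. In case (iv) we have $\alpha_1 = 0$, $\alpha_2 = 1$, $\vecc{F}_0 = \vecc{0}$, so $\vecc{\gamma}_0 = \vecc{0}$ and $\vecc{K}^{(1)} = \vecc{K}_2^{(1)} = \vecc{0}$; the leading small-$z$ term of $\vecc{g}\hat{\vecc{\kappa}}(z)\vecc{h}$ is then linear in $z$ (coming from $\vecc{K}_2^{(2)}$), so $\hat{\vecc{F}}(z)^{-1} = z^2(mz\vecc{I} + O(z)) = O(z^3)$ and $\hat{\vecc{F}}(z) \sim c/z^3$, which by the Tauberian correspondence gives $\int_0^t\vecc{H}(s)ds \sim ct^2$, i.e. $E[\vecc{x}(t)\vecc{x}^*(t)] = O(t^2)$, ballistic super-diffusion. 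The one technical point to be careful about here is justifying the Tauberian passage (one needs monotonicity or a suitable regularity of $t\mapsto \int_0^t\vecc{H}(s)ds$, or equivalently an Abelian estimate plus positivity of the integrand in the FDR case) and checking that the subleading terms in \eqref{msd_formula} do not alter the stated order, but given positivity of $E[\vecc{x}(t)\vecc{x}^*(t)]$ and boundedness of the correction terms this is routine.
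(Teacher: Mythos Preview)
Your proposal follows essentially the same route as the paper's proof: for (i), Fourier transform the exponential covariance, expand $(\vecc{\Gamma}_i + i\omega\vecc{I})^{-1}$ in powers of $\omega$, and read off the coefficients as the $\vecc{L}_j^{(n)}$; for (ii), solve the linear GLE by Laplace transform, invert to write $\vecc{x}(t) = \vecc{H}(t)m\vecc{v} + \int_0^t \vecc{H}(t-s)(\vecc{\sigma}_0\,d\vecc{W}_s + \vecc{\sigma}\vecc{\xi}_s\,ds)$, compute the second moment, and use the hypothesis $\vecc{\sigma}\vecc{\kappa}\vecc{\sigma}^* = \vecc{h}^*\vecc{\kappa}^*\vecc{g}^*$ to collapse the colored-noise double integral via the convolution theorem; for (iii)--(iv), analyze $\hat{\vecc{H}}(z)$ as $z\to 0$ and apply a Tauberian theorem. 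The paper does exactly this (its key algebraic step in (ii) is recognizing that $\vecc{\sigma}\hat{\vecc{\kappa}}(z)\vecc{\sigma}^*\hat{\vecc{H}^*}(z)$ has inverse Laplace transform $\vecc{I} - m\dot{\vecc{H}^*} - \vecc{\gamma}_0^*\vecc{H}^*$, which produces the $\int_0^t\vecc{H}$ and $\int_0^t\vecc{H}\dot{\vecc{H}^*}$ pieces).

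One small correction: in case (iv) your claim that ``the $O(m)$ term stays bounded'' is not right. When $\hat{\vecc{H}}(z)\sim 1/z^2$, Tauberian gives $\vecc{H}(t)\sim ct$, so $\vecc{H}(t)\mathcal{E}\vecc{H}^*(t)$ and $\int_0^t\vecc{H}\dot{\vecc{H}^*}$ both grow like $t^2$ as well. This does not spoil the conclusion --- all pieces of \eqref{msd_formula} are functionals of $\vecc{H}$, so once $\vecc{H}(t)\sim ct$ every term is at most $O(t^2)$ and the leading one is genuinely of that order --- but the argument should say that, rather than that the correction is bounded. The paper handles this by simply noting that the asymptotics of $\hat{\vecc{H}}(z)$ determine those of every term in the simplified formula. (Also a typo: your displayed expression for $\vecc{\mathcal{S}}_j(\omega)$ in part (i) is missing the outer $\vecc{C}_j^*$ factors; the correct form is $\vecc{C}_j[(i\omega\vecc{I}+\vecc{\Gamma}_j)^{-1}+(-i\omega\vecc{I}+\vecc{\Gamma}_j)^{-1}]\vecc{M}_j\vecc{C}_j^*$.)
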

\begin{proof}
\begin{itemize}
\item[(i)]   For $i=3,4$, it is easy to compute that
\begin{align}
\vecc{\mathcal{S}}_i(\omega) &=  \vecc{C}_i[(i\omega\vecc{I}+\vecc{\Gamma}_i)^{-1} + (-i\omega \vecc{I}+\vecc{\Gamma}_i)^{-1}]\vecc{M}_i \vecc{C}_i^* \\
&= 2\vecc{C}_i [(i\omega\vecc{I}+\vecc{\Gamma}_i)^{-1} \vecc{\Gamma}_i (-i\omega \vecc{I}+\vecc{\Gamma}_i)^{-1}]\vecc{M}_i \vecc{C}_i^*  \\
&= 2\vecc{C}_i\vecc{\Gamma}_i^{-1}(\omega^2 \vecc{\Gamma}_i^{-2} + \vecc{I})^{-1} \vecc{M}_i \vecc{C}_i^*,
\end{align}
and so one has:
\begin{equation}
\vecc{\mathcal{S}}_i(\omega) = 2\vecc{C}_i\vecc{\Gamma}_i^{-1}\vecc{M}_i \vecc{C}_i^* - 2\vecc{C}_i\vecc{\Gamma}_i^{-3}\vecc{M}_i \vecc{C}_i^* \omega^2 + 2\vecc{C}_i\vecc{\Gamma}_i^{-5}\vecc{M}_i \vecc{C}_i^* \omega^4 + \dots,
\end{equation}
as $\omega \to 0$. The first two statements in (i) then follow by Assumption \ref{ass_vanishingornot}. The last statement follows from Lemma 6.11 in \cite{lord2014introduction}.\\
\item[(ii)] Note that $\dot{\vecc{x}}(t) = \vecc{v}(t)$, with $\vecc{x}(0) = \vecc{0}$ and $\vecc{v}(t)$ solving the GLE \eqref{gle_es}, rewritten as:
\begin{equation} \label{ggg}
m \dot{\vecc{v}}(t)=-\vecc{\gamma}_0 \vecc{v}(t)+\vecc{\sigma}_0 \vecc{\eta}(t) -\vecc{g}\int_0^t \vecc{\kappa}(t-s) \vecc{h}\vecc{v}(s) ds + \vecc{\sigma} \vecc{\xi}(t),
\end{equation}
where $\vecc{\eta}(t) dt =  d \vecc{W}^{(k)}(t)$, and $\vecc{v}_0 = \vecc{v}$ is a random variable that is independent of $\{\vecc{\xi}(t), t \in \RR^+\}$ and of $\{\vecc{\eta}(t), t \in \RR^+\}$. These equations can be solved analytically by means of Laplace transform. Applying Laplace transform on the equations for $\vecc{x}_t$ and $\vecc{v}_t$ gives:
\begin{align}
z \hat{\vecc{x}}(z) &= \hat{\vecc{v}}(z), \\
m (z \hat{\vecc{v}}(z) - \vecc{v}(0)) &= -\vecc{g} \hat{\vecc{\kappa}}(z) \vecc{h} \hat{\vecc{v}}(z) - \vecc{\gamma}_0 \hat{\vecc{v}}(z) + \vecc{\sigma}_0 \hat{\vecc{\eta}}(z) + \vecc{\sigma} \hat{\vecc{\xi}}(z),
\end{align}
and thus 
\begin{equation}
\hat{\vecc{x}}(z) = \hat{\vecc{H}}(z) (m \vecc{v}(0) + \vecc{\sigma}_0 \hat{\vecc{\eta}}(z) + \vecc{\sigma} \hat{\vecc{\xi}}(z)),
\end{equation}
where $\hat{\vecc{H}}(z) = (mz^2\vecc{I}+z\vecc{\gamma}_0+ z\vecc{g}\hat{\vecc{\kappa}}(z)\vecc{h})^{-1}$.  Taking the inverse transform gives the following formula for $\vecc{x}(t)$:
\begin{equation}
\vecc{x}(t) = \vecc{H}(t) m \vecc{v} + \int_0^t \vecc{H}(t-s) (\vecc{\sigma}_0 \vecc{\eta}(s) + \vecc{\sigma} \vecc{\xi}(s)) ds, 
\end{equation}
where $\vecc{H}(0) = \vecc{0}$. 

Therefore, using the mutual independence of $\vecc{v}$, $\{\vecc{\xi}(t), t \in \RR^+\}$ and  $\{\vecc{\eta}(t), t \in \RR^+\}$,  the It\^o isometry, and the assumption that $\vecc{R}(t) = \vecc{\kappa}(t)$, we obtain:
\begin{align}
E[\vecc{x}(t) \vecc{x}^T(t)] &= 2m \vecc{H}(t) \mathcal{E} \vecc{H}^*(t) +   \int_0^t \vecc{H}(t-s) \vecc{\sigma}_0 \vecc{\sigma}_0^* \vecc{H}^*(t-s) ds   + \vecc{L}(t), \label{mssd}
\end{align}
where 
\begin{align}
\vecc{L}(t) &= \int_0^t ds \int_0^t du \ \vecc{H}(t-s) \vecc{\sigma} \vecc{\kappa}(|s-u|) \vecc{\sigma}^* \vecc{H}^*(t-u).
\end{align}
To compute the double integral $\vecc{L}(t)$, we first rewrite it as $\vecc{L}(t) = \vecc{L}_1(t) + \vecc{L}_2(t)$, with
\begin{align}
\vecc{L}_1(t) &= \int_0^t ds \ \vecc{H}(t-s) \int_s^t du \ \vecc{\sigma} \vecc{\kappa}(u-s) \vecc{\sigma}^* \vecc{H}^*(t-u), \\
\vecc{L}_2(t) &= \int_0^t ds \ \vecc{H}(t-s) \int_0^s du \ \vecc{\sigma} \vecc{\kappa}(s-u) \vecc{\sigma}^* \vecc{H}^*(t-u).
\end{align}
We then compute:
\begin{align}
\vecc{L}_1(t) &= \int_0^t ds \ \vecc{H}(t-s) \int_s^t d(t-u) \ \vecc{\sigma} \vecc{\kappa}(t-s-(t-u)) \cdot (-1) \vecc{\sigma}^* \vecc{H}^*(t-u), \\
&= \int_0^t ds \ \vecc{H}(t-s) \int_0^{t-s} d\tau \  \vecc{\sigma} \vecc{\kappa}(t-s-\tau) \vecc{\sigma}^* \vecc{H}^*(\tau), \\
&= \int_0^t ds \ \vecc{H}(t-s) (\vecc{\sigma} \vecc{\kappa} \vecc{\sigma}^*  \star \vecc{H}^*)(t-s), \\
&= \int_0^t du \ \vecc{H}(u) (\vecc{\sigma} \vecc{\kappa} \vecc{\sigma}^*  \star \vecc{H}^*)(u),
\end{align}
where $\star$ denotes convolution. Now note that, by the convolution theorem, $(\vecc{\sigma} \vecc{\kappa} \vecc{\sigma}^*  \star \vecc{H}^*)(u)$ is the inverse Laplace transform of $\vecc{\sigma}\hat{\vecc{\kappa}}(z) \vecc{\sigma}^* \hat{\vecc{H}^*}(z)$, which can be written as $\vecc{I}/z-(mz\vecc{I} + \vecc{\gamma}_0^*) \hat{\vecc{H}^*}(z)$ by using the assumption that  $\vecc{\sigma} \vecc{\kappa}(t) \vecc{\sigma}^* = \vecc{h}^* \vecc{\kappa}^*(t) \vecc{g}^*$. Computing the inverse transform gives us:
\begin{equation}
\vecc{L}_1(t) = \int_0^t du \ \vecc{H}(u) (\vecc{I} - m \dot{\vecc{H}^*}(u) - \vecc{\gamma}_0^* \vecc{H}^*(u)). \label{L1t}
\end{equation}
Similarly, we obtain  $\vecc{L}_2(t) = \vecc{L}_1(t)$, and so $\vecc{L}(t) = 2 \vecc{L}_1(t)$. Therefore, combining \eqref{mssd} and \eqref{L1t} gives us the desired formula for MSD.\\
\item[(iii)] $\&$ (iv) 
The assumptions that $\vecc{g} = \vecc{h}^* = \vecc{\sigma}$ and $\vecc{R}(t) = \vecc{\kappa}(t) = \vecc{\kappa}^*(t)$ ensure that we can apply the MSD formula in (ii). The additional assumption that $\vecc{\gamma}_0 = \vecc{\sigma}_0 \vecc{\sigma}_0^*/2$ (fluctuation-dissipation relation of the first kind) implies that $\hat{\vecc{H}}(z) = \hat{\vecc{H}}^*(z)$ and simplifies the formula to:
\begin{align} \label{msd_formula2}
E[\vecc{x}(t)\vecc{x}^*(t)] &=  2 \int_0^t \vecc{H}(s) ds + 2m \left(\vecc{H}(t) \mathcal{E} \vecc{H}(t) - \int_0^t \vecc{H}(u) \dot{\vecc{H}}(u) du \right).
\end{align}
To determine the behavior of $E[\vecc{x}(t)\vecc{x}^*(t)]$ as $t \to \infty$, it suffices to investigate the asymptotic behavior of $\hat{\vecc{H}}(z)$, whose formula is given in (ii), as $z \to 0$. Noting that
\begin{equation}
\hat{\vecc{H}}(z) = \frac{1}{z}\left[mz\vecc{I} + \vecc{\gamma}_0 +  \vecc{g}\sum_{i=1,2}\alpha_i \vecc{C}_i(z\vecc{I}+\vecc{\Gamma}_i)^{-1}\vecc{M}_i\vecc{C}_i^* \vecc{h} \right]^{-1}
\end{equation}
and using Assumption \ref{ass_vanishingornot}, we find that, as $z \to 0$,
\begin{align}
&\hat{\vecc{H}}(z) \sim \frac{1}{z}\bigg[\vecc{\gamma}_0 +
\alpha_1 \vecc{g} \vecc{K}_1^{(1)}\vecc{h} + \left(m\vecc{I}-\sum_{j=1,2}\alpha_j \vecc{g}\vecc{K}_j^{(2)}\vecc{h}\right)z \nonumber \\ 
&\hspace{2cm} + \alpha_2 \vecc{g} \vecc{K}_2^{(3)}\vecc{h} z^2 + \alpha_2 \vecc{g} \vecc{K}_2^{(4)}\vecc{h} z^3 + \dots \bigg]^{-1}.
\end{align}
Therefore, if $\vecc{\gamma}_0  = \vecc{\sigma}_0 \vecc{\sigma}_0^*/2$ is non-zero, then $\hat{\vecc{H}}(z) \sim 1/z$ as $z \to 0$. Otherwise, if in addition $\alpha_1 = 1$, then $\hat{\vecc{H}}(z) \sim 1/z$ as $z \to 0$, whereas if in addition $\alpha_1=0$, $\alpha_2=1$, then $\hat{\vecc{H}}(z) \sim 1/z^2$ as $z \to 0$. The results in (iii) and (iv) then follow by applying the Tauberian theorems \cite{feller-vol-2}, which say, in particular, that if $\hat{\vecc{H}}(z) \sim 1/z^\beta$ as $z \to 0$, then $\vecc{H}(t) \sim t^{\beta-1}$ as $t \to \infty$, for $\beta = 1, 2$ here.
\end{itemize}
\end{proof}

\begin{rem} \label{msd_general}
We emphasize that superdiffusion with $E[\vecc{x}(t) \vecc{x}^*(t)]$  behaving as $t^\alpha$ as $t \to \infty$, where $\alpha > 2$, cannot take place when the velocity process converges to a stationary state. For a system to behave this way, the velocity itself has to grow with time. Moreover, we remark that one could obtain a richer class of asymptotic behaviors for the MSD by relaxing the assumption of fluctuation-dissipation relations. 
\end{rem}

To summarize, (i) says that in the case where $\vecc{F}_0 = \vecc{0}$, $\alpha_1 = \alpha_3 = 0$, the $n$th order effective constants characterize the asymptotic behavior of the spectral densities at low frequencies; (ii) provides a formula for the particle's mean-squared displacement, and (iii)-(iv) classify the types of diffusive behavior of the GLE model, in the exactly solvable case of Example \ref{ass_exactsolve}, satisfying the fluctuation-dissipation relations. We emphasize that in the sequel we go beyond the above exactly solvable case;  in particular the coefficients $\vecc{g}$, $\vecc{h}$, $\vecc{\sigma}$, $\vecc{\gamma}_0$, $\vecc{\sigma}_0$ will depend in general on the particle's position. However, the GLE in the exactly solvable case can be viewed as linear approximation to the general GLE \eqref{gle2} (by expanding these coefficients in a Taylor series about a fixed position $\vecc{x}' \in \RR^d$).

In view of Proposition \ref{asympbeh}, the parameters $\alpha_i \in \{0,1\}$ allow us to control diffusive behavior of the generalized Langevin dynamics.  Our GLE models are very general and need not satisfy a fluctuation-dissipation relation. As we will see, these different behaviors motivate our introduction and study of various homogenization schemes  for the GLE. Depending on the physical systems under consideration, one scheme might be more realistic than the others.  It is one of the goals of this paper to explore homogenization schemes for different GLE classes.

\begin{rem} \label{rem_inf_dim}
In finite dimension, it is not possible to realize generalized Langevin dynamics with a noise and/or memory function whose spectral density varies as $1/\omega^p$, $p \in (0,1)$, near $\omega = 0$ (i.e. the so-called $1/f$-type noise \cite{Kupferman2004}), and, consequently, the noise covariance function and/or memory function decay as a power $1/t^\alpha$, $\alpha \in (0,1)$, as $t \to \infty$.  In this case one can use the formula in (ii) of Proposition \ref{asympbeh} to show, at least for the exactly solvable case in Example \ref{ass_exactsolve} where the fluctuation-dissipation relations hold, that the asymptotic behavior of the particle is sub-diffusive, i.e. $E[\vecc{x}(t) \vecc{x}^*(t)] = O(t^\beta)$, where $\beta \in (0,1)$, as $t \to \infty$ (see also the related works \cite{mckinley2018anomalous,didier2019asymptotic}). Sub-diffusive behavior has been discovered in a wide range of statistical and biological systems \cite{kou2008stochastic}, and, therefore, making the study in this case relevant. One could, following the ideas in \cite{glatt2018generalized, 2018arXiv180409682N}, extend the state space of the GLEs to an infinite-dimensional one, in order to study the  sub-diffusive case. Homogenization studies, where more technicalities are expected to be encountered due to the infinite-dimensional nature of the systems, for this case will be explored in a future work. \\
\end{rem}



\subsection{Generalized Langevin Systems as  Input-Output Stochastic Dynamical Systems with Multiple Time Scales}

In this subsection, we discuss GLEs of the form \eqref{gle2}, under Assumptions \ref{ass_bohl}-\ref{ass_vanishingornot}, from the input-output system-theoretic and multiple time scale points of view.

First, we introduce the notion of stochastic dynamical systems.
\begin{defn} \label{stochdynsystem}
A {\it stochastic dynamical system} is a pair $(\vecc{Z},\vecc{\mathcal{F}})$ of vector-valued stochastic processes satisfying equations of the form:
\begin{align}
d\vecc{Z}(t) &= \vecc{A}(t, \vecc{Z}(t)) dt + \vecc{B}(t, \vecc{Z}(t))\vecc{\eta}(t)dt, \\
\vecc{\mathcal{F}}(t) &= \vecc{C}(t, \vecc{Z}(t)),
\end{align}
where $\vecc{A}$, $\vecc{B}$, $\vecc{C}$ are measurable (jointly in $t$ and $\vecc{Z}$) mappings,  $\vecc{\eta}(t)$ is a random process (the {\it input}). $\vecc{Z}(t)$ is called the  {\it state process} and $\vecc{\mathcal{F}}(t)$ the {\it output process} (observation process). The system is {\it linear} if all the mappings are at most linear in $\vecc{Z}$; otherwise the system is {\it nonlinear}. The system is {\it time-invariant} if all the mappings are independent of $t$. 
\end{defn}



The equation for the particle's position, together with the GLE \eqref{gle2}, can be cast as the system of SDEs for the Markov process  

\noindent $\vecc{z}_t := (\vecc{x}_{t}, \vecc{v}_{t}, \vecc{y}^1_{t}, \vecc{y}^2_t, \vecc{\beta}^3_{t},  \vecc{\beta}^4_{t}) \in \RR^{d}\times \RR^d \times \RR^{d_1} \times \RR^{d_2} \times \RR^{d_3} \times \RR^{d_4}$:
\begin{align}
d\vecc{x}_{t} &= \vecc{v}_{t} dt, \label{sd1} \\
m d\vecc{v}_{t} &= -\vecc{\gamma}_0(t, \vecc{x}_t) \vecc{v}_t dt + \vecc{\sigma}_0(t, \vecc{x}_t) d\vecc{W}_t^{(k)}  - \vecc{g}(t, \vecc{x}_{t}) \sum_{i=1,2} \alpha_i \vecc{C}_i \vecc{y}^i_{t} dt \nonumber \\ &\ \ \ \ + \vecc{\sigma}(t, \vecc{x}_{t}) \sum_{j=3,4} \alpha_j \vecc{C}_j \vecc{\beta}^j_{t} dt  + \vecc{F}_e(t, \vecc{x}_{t})dt, \\
d\vecc{y}^i_{t} &= -\vecc{\Gamma}_i \vecc{y}^i_{t} dt + \vecc{M}_i \vecc{C}_i^* \vecc{h}(t,\vecc{x}_{t}) \vecc{v}_{t} dt, \ \ i=1,2,\\  d\vecc{\beta}^j_{t} &= -\vecc{\Gamma}_j \vecc{\beta}^j_{t} dt +  \vecc{\Sigma}_j d\vecc{W}^{(q_j)}_{t}, \ \ j=3,4, \label{sd6}
\end{align}
where we have defined the auxiliary {\it memory processes}:
\begin{equation}
\vecc{y}^i_{t} := \int_{0}^{t} e^{-\vecc{\Gamma}_i(t-s)} \vecc{M}_i \vecc{C}_i^* \vecc{h}(s,\vecc{x}_{s}) \vecc{v}_{s} ds \in \RR^{d_i}, \ \ i=1,2.
\end{equation}

It is easy to see that the pairs $(\vecc{\beta}^i,\vecc{\xi}^i)$, $i=3,4$, defined in the previous subsection, are linear time-invariant Gaussian stochastic dynamical systems with a white noise input (and therefore the state processes $\vecc{\beta}^j(t)$ are Markov)  in the sense of Definition \ref{stochdynsystem}. Also, the pairs $(\vecc{y}_t^i,\vecc{C}_i\vecc{y}_t^i )$  $(i=1,2$) are linear stochastic dynamical systems driven by the random processes $\vecc{M}_i\vecc{C}_i^*\vecc{h}(t, \vecc{x}_t)\vecc{v}_t$, which depend on the particle's position and velocity variables. The generalized Langevin system can  be viewed as a nonlinear stochastic dynamical system $(\vecc{z},\vecc{\mathcal{F}})$, where the components of $\vecc{z}$ satisfy the SDEs  \eqref{sd1}-\eqref{sd6} and $\vecc{\mathcal{F}}$ is a measurable mapping describing an output process or a quantity of interest, for instance,
\begin{equation}
\vecc{\mathcal{F}} = E \left[ \sup_{t \in [0,T]} |\vecc{x}_t|^p \right]
\end{equation}
for $p>0$ and $T>0$.  In the exactly solvable case of Example \ref{ass_exactsolve}, the generalized Langevin system reduces to a linear time-invariant stochastic dynamical system and can be viewed as a network of input-output systems consisting of components modeling the memory and noise. One of the goals of homogenization of GLEs is to reduce the number of the components needed to describe the effective dynamics in the considered limit.

It is natural question, what class of  GLEs  should be taken as the starting point for homogenization. For feasible treatment, the GLEs should be in some sense minimal. In the network interpretation, the original system should be completely described by  a minimal number of components, with no redundancies.  We will discuss this based on a time scale analysis in the following.

The (discrete) spectrum of the $\vecc{\Gamma}_i$ ($i=1,2$) and  of the $\vecc{\Gamma}_j$ ($j=3,4$) (or equivalently, the  {\it spectrum of the Bohl memory function} $\vecc{\kappa}(t)$ and that of the covariance function $\vecc{R}(t)$--- see Definition 2.5 in \cite{trentelman2002control}) encode information about the memory time scales and noise correlation time scales present in the generalized Langevin system respectively. In realistic experiments, there may be many, possibly infinitely many, time scales (each corresponding to a mode of the environment), but typically they cannot be all observed and/or controlled.  When modeling a system, it is important to focus on those time scales that are controllable and observable.
This motivates the following definition, closely related to the notions of controllable and observable eigenvalues from the systems theory \cite{trentelman2002control}.

\begin{defn} \label{defn_timescales}
Consider a  linear stochastic dynamical system $(\vecc{Z},\vecc{\mathcal{F}})$, as in Definition \ref{stochdynsystem}, where $\vecc{A}\in \RR^{n \times n}$, $\vecc{B} \in \RR^{n \times k}$, $\vecc{C} \in \RR^{m \times n}$ are constant matrices. The time scale, $\tau := 1/\lambda$, where $\lambda$ is an eigenvalue of $\vecc{A}$, in the system, is called {\it $(\vecc{A}, \vecc{B})$-controllable} (or simply controllable) if $rank[\vecc{A}-\lambda \vecc{I} \ \ \vecc{B} ] = n$ and {\it $(\vecc{C}, \vecc{A})$-observable} (or simply observable) if $rank[\vecc{A}-\lambda \vecc{I} \ \ \vecc{C} ]^* = n$.
\end{defn}

The following proposition, which follows from Theorem 3.13 in \cite{trentelman2002control}, states well-known results regarding the above notions.

\begin{prop} Consider the linear dynamical system defined in Definition \ref{defn_timescales}. Then
\begin{itemize}
\item[(i)] the system is controllable (more precisely, $(\vecc{A},\vecc{B})$-controllable, i.e. 

\noindent $[\vecc{B} \ \ \vecc{A}\vecc{B} \ \ \cdots \ \ \vecc{A}^{n-1}\vecc{B}]$ is full rank) if and only if every time scale of the system is controllable.
\item[(ii)]  the system is observable (more precisely, $(\vecc{C},\vecc{A})$-observable, i.e. 

\noindent $[\vecc{C} \ \ \vecc{C}\vecc{A} \ \ \cdots \ \ \vecc{C}\vecc{A}^{n-1}]^{*}$ is full rank) if and only if every time scale of the system is observable.
\end{itemize}
\end{prop}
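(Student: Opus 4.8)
The plan is to establish the Popov--Belevitch--Hautus rank criterion by elementary linear algebra over $\CC$ (so that eigenvalues and eigenvectors of $\vecc{A}$, which need not be real, are available), and then to deduce (ii) from (i) by duality. Throughout, the superscript $^*$ on a complex vector is read as conjugate transpose. It suffices to prove (i): the observability matrix $[\vecc{C}\ \ \vecc{C}\vecc{A}\ \ \cdots\ \ \vecc{C}\vecc{A}^{n-1}]^*$ coincides with the controllability matrix of the pair $(\vecc{A}^*,\vecc{C}^*)$, and the PBH rank condition defining $(\vecc{C},\vecc{A})$-observability at an eigenvalue $\lambda$ of $\vecc{A}$ becomes, after transposing, the PBH condition for $(\vecc{A}^*,\vecc{C}^*)$-controllability at $\overline{\lambda}$; since $\vecc{A}$ is real its spectrum is closed under conjugation, so $\lambda\mapsto\overline{\lambda}$ permutes the eigenvalues of $\vecc{A}$, and (ii) follows from (i) applied to $(\vecc{A}^*,\vecc{C}^*)$.

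For (i) I would prove both implications in contrapositive form, in each case via the existence of a common left annihilator. If some time scale $\tau=1/\lambda$ fails to be controllable, then $\mathrm{rank}[\vecc{A}-\lambda\vecc{I}\ \ \vecc{B}]<n$, so there is a nonzero row vector $\vecc{w}^*$ with $\vecc{w}^*(\vecc{A}-\lambda\vecc{I})=\vecc{0}$ and $\vecc{w}^*\vecc{B}=\vecc{0}$; then $\vecc{w}^*\vecc{A}^k\vecc{B}=\lambda^k\vecc{w}^*\vecc{B}=\vecc{0}$ for every $k\ge 0$, so $\vecc{w}^*$ annihilates $[\vecc{B}\ \ \vecc{A}\vecc{B}\ \ \cdots\ \ \vecc{A}^{n-1}\vecc{B}]$ and the controllability matrix is rank-deficient. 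Conversely, suppose the controllability matrix is rank-deficient, so some nonzero $\vecc{w}$ satisfies $\vecc{w}^*\vecc{A}^k\vecc{B}=\vecc{0}$ for $k=0,\dots,n-1$, hence --- invoking the Cayley--Hamilton theorem to write $\vecc{A}^k$, $k\ge n$, as a polynomial of degree $<n$ in $\vecc{A}$ --- for all $k\ge 0$. Set $\mathcal{V}:=\{\vecc{z}\in\CC^n:\vecc{z}^*\vecc{A}^k\vecc{B}=\vecc{0}\ \text{for all}\ k\ge 0\}$. Then $\mathcal{V}\ne\{\vecc{0}\}$ (it contains $\vecc{w}$) and $\mathcal{V}$ is $\vecc{A}^*$-invariant, since $\vecc{z}\in\mathcal{V}$ gives $(\vecc{A}^*\vecc{z})^*\vecc{A}^k\vecc{B}=\vecc{z}^*\vecc{A}^{k+1}\vecc{B}=\vecc{0}$. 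Hence the restriction of $\vecc{A}^*$ to $\mathcal{V}$ has an eigenvector $\vecc{u}\in\mathcal{V}\setminus\{\vecc{0}\}$, say $\vecc{A}^*\vecc{u}=\mu\vecc{u}$; taking conjugate transposes, $\vecc{u}^*(\vecc{A}-\overline{\mu}\,\vecc{I})=\vecc{0}$, while $\vecc{u}^*\vecc{B}=\vecc{0}$ because $\vecc{u}\in\mathcal{V}$. With $\lambda:=\overline{\mu}$, which lies in the (conjugation-closed) spectrum of $\vecc{A}$, we conclude $\mathrm{rank}[\vecc{A}-\lambda\vecc{I}\ \ \vecc{B}]<n$, i.e. the time scale $1/\lambda$ is not controllable. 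The two implications give (i); alternatively, one may simply cite Theorem 3.13 of \cite{trentelman2002control}, of which this is a restatement in the present terminology.

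I expect the converse direction of (i) to be the only genuine obstacle: one must extract an actual eigenvalue of $\vecc{A}$ violating the rank test from the mere fact that the Kalman controllability matrix drops rank. The two points that make this work are the Cayley--Hamilton reduction, which upgrades annihilation of the first $n$ block columns to annihilation of all powers of $\vecc{A}$, and the fact that a nonzero $\vecc{A}^*$-invariant subspace contains an eigenvector of $\vecc{A}^*$ --- which is exactly why the argument must be run over $\CC$. Everything else, including the passage $[\vecc{A}-\lambda\vecc{I}\ \ \vecc{B}]$-rank-deficiency $\Leftrightarrow$ existence of a left null vector, and the reduction of (ii) to (i), is routine transposition and bookkeeping.
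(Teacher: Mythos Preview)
Your proof is correct; it is the standard PBH argument, with the Cayley--Hamilton step and the extraction of an eigenvector of $\vecc{A}^*$ from a nonzero invariant subspace handled properly, and the duality reduction of (ii) to (i) is fine. The paper, however, does not prove this proposition at all: it simply records that the result ``follows from Theorem 3.13 in \cite{trentelman2002control}''. So you have supplied a self-contained argument where the paper only cites one; your closing remark that one may alternatively just invoke that theorem is exactly what the paper does.
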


For $i=1,2,3,4$ we define the time scales, $\tau_{i,k_i} := 1/\lambda_{i,k_i}$,  where $\lambda_{i,k_i}$ ($k_i=1,\dots,d_i$) are  eigenvalues of $\vecc{\Gamma}_i$.
We refer to  the $\tau_{1,k_1}, \tau_{2,k_2}$ as {\it memory time scales} and the $\tau_{3,k_3}, \tau_{4,k_4}$ as {\it noise correlation time scales}.

Our consideration of GLEs will be based on the following assumption.

\begin{ass} \label{minimal}
All the  memory time scales and the  noise correlation time scales in the generalized Langevin systems described by \eqref{gle2} are controllable and observable.
\end{ass}

From the mathematical point of view, our consideration minimizes the dimension of the state space on which the GLE is realized and therefore minimizes the complexity of the model which will be taken as the starting point for our homogenization studies. Indeed, recall that
a stochastic realization is {\it minimal} if the realized process has no other stochastic realization of
smaller dimension. It follows from our assumptions that all the realizations of the memory function and noise process are minimal, since a sufficient  condition for a linear stochastic dynamical system to be minimal is that it is controllable (or reachable in the language of \cite{lindquist2015linear}), observable and the spectral factor of its spectral density is minimal \cite{lindquist2015linear}.

\section{On the Homogenization of Generalized Langevin Dynamics}
\label{sect_homogofGLE}

In this section, we discuss some new directions for homogenization of GLEs. 

In the case of non-vanishing (first order) effective damping constant and effective diffusion constant, homogenization of a  version of the GLE \eqref{gle2} was studied in \cite{LimWehr_Homog_NonMarkovian}, where a limiting SDE for the position process was obtained in the limit, in which all the characteristic time scales of the system (i.e. the inertial time scale, the memory time scale and the noise correlation time scale) tend to zero at the same rate. Extending this result, we are going to focus on the following two cases.
\begin{itemize}
\item[(A)] {\it The case where an instantaneous damping term is present in the GLE, i.e.  $\vecc{F}_0\neq \vecc{0}$, or the non-vanishing effective damping constant case, i.e. $\alpha_1 = 1$.} Together with the conditions in Example \ref{ass_exactsolve}, this gives a model for normally diffusing systems; see Proposition \ref{asympbeh} (iii). One can study the limit in which the inertial time scale and a subset (possibly all or none of) of other characteristic time scales of the system tend to zero; in particular the small mass limit in the case $\vecc{F}_0 \neq \vecc{0}$  of the generalized Langevin dynamics. We remark that the small mass limit is not well-defined in the case $\vecc{F}_0 = \vecc{0}$ and $\alpha_1=\alpha_3=1$ -- this was first observed in \cite{mckinley2009transient}, where it was pointed out that the limit leads to the phenomenon of anomalous gap of the particle's mean-squared displacement (see also \cite{cordoba2012elimination,indei2012treating}). \\
\item[(B)] {\it The vanishing effective damping constant and effective diffusion constant case, i.e. $\vecc{F}_0=\vecc{0}$, $\alpha_1=\alpha_3=0$, $\alpha_2=\alpha_4=1$.} Together with the conditions in Example \ref{ass_exactsolve}, this gives a model for systems with super-diffusive behavior; see Proposition \ref{asympbeh} (iv). One can study the limit in which the inertial time scale, a subset of the memory time scales and a subset of the noise correlation time scales tend to zero at the same rate. Such effective models are physically relevant when they preserve the asymptotic behavior of the spectral densities at low and/or high frequencies in the limit.  Situations are also possible, where some of the eigenmodes of the memory and noise spectrum are damped much stronger than other, for example due to  an injection of monochromatic light  from a laser into the system, which is originally in thermal equilibrium. This justifies studying homogenization limits that selectively target a part of frequencies of memory and noise.
\end{itemize}
We will study homogenization of  the GLE \eqref{gle2} in the limits described in the above scenarios.   In all cases, the inertial time scale is taken to zero -- this gives rise to the singular nature of the limit problems. We remark that one could also consider the more interesting scenarios in which the time scales tend to zero at different rates, but we choose not to pursue this in this already lengthy paper. \\

\noindent {\bf Notation.} Throughout the paper, we denote the variables in the pre-limit equations by small letters (for instance, $\vecc{x}^\epsilon(t)$), and those of the limiting equations by capital letters (for instance, $\vecc{X}(t)$). We use Einstein's summation convention on repeated indices. The Euclidean norm of an arbitrary vector $\vecc{w}$ is denoted by $| \vecc{w} |$ and the (induced operator) norm of a matrix $\vecc{A}$ by $\| \vecc{A} \|$.
For an $\RR^{n_2 \times n_3}$-valued function $\vecc{f}(\vecc{y}):=([f]_{jk}(\vecc{y}))_{j=1,\dots,n_2; k=1,\dots, n_3}$, $\vecc{y} := ([y]_1, \dots, [y]_{n_1}) \in \RR^{n_1}$, we denote by $(\vecc{f})_{\vecc{y}}(\vecc{y})$ the $n_1 n_2 \times n_3$ matrix:
\begin{equation}
(\vecc{f})_{\vecc{y}}(\vecc{y}) = (\vecc{\nabla}_{\vecc{y}}[f]_{jk}(\vecc{y}))_{j=1,\dots, n_2; k=1,\dots,n_3},
\end{equation}
where $\vecc{\nabla}_{\vecc{y}}[f]_{jk}(\vecc{y})$ stands for the gradient vector $\left(\frac{\partial [f]_{jk}(\vecc{y})}{\partial [y]_1}, \dots, \frac{\partial [f]_{jk}(\vecc{y})}{\partial [y]_{n_1}}\right) \in \RR^{n_1}$ for every $j,k$.
We denote by $\vecc{\nabla} \cdot$ the divergence operator which contracts a matrix-valued function to a vector-valued function, i.e.  for the matrix-valued function $\vecc{A}(\vecc{X})$, the $i$th component of its divergence is given by $(\vecc{\nabla} \cdot \vecc{A})^i = \sum_j \frac{\partial A^{ij}}{\partial X^j}$.
Lastly, the symbol $\mathbb{E}$ denotes expectation with respect to the probability measure $\mathbb{P}$.

\section{Small Mass Limit of Generalized Langevin Dynamics}
\label{sect_newsmallmlimit}

Consider  the following family of equations for the processes $(\vecc{x}_t^m, \vecc{v}_t^m) \in \RR^{d \times d}$, $t \in [0,T]$, $m>0$:
\begin{align}
d\vecc{x}_t^m &= \vecc{v}_t^m dt, \label{res_gle_smallmass1} \\
m d\vecc{v}^m_t &= -\vecc{\gamma}_0(t, \vecc{x}_t^m) \vecc{v}_t^m dt - \vecc{g}(t, \vecc{x}_t^m) \left(\int_0^t \vecc{\kappa}(t-s) \vecc{h}(s, \vecc{x}_s^m) \vecc{v}_s^m ds \right) dt  \nonumber \\  
&\ \ \ \ + \vecc{\sigma}_0(t, \vecc{x}_t^m) d\vecc{W}_t^{(k)} + \vecc{\sigma}(t, \vecc{x}_t^m) \vecc{\xi}_t dt + \vecc{F}_e(t, \vecc{x}_t^m) dt,  \label{res_gle_smallmass}
\end{align}
where $\vecc{\kappa}(t)$ and $\vecc{\xi}_t$ are the memory function and noise process defined in \eqref{memory_realized} and \eqref{noise} respectively, with each of the $\alpha_i$ ($i=1,2,3,4$) equal to zero or to one.  The equations \eqref{res_gle_smallmass1}-\eqref{res_gle_smallmass} are equivalent to the following system of SDEs for the Markov process $\vecc{z}^m_t := (\vecc{x}^m_{t}, \vecc{v}^m_{t}, \vecc{y}^{1,m}_{t}, \vecc{y}^{2,m}_t, \vecc{\beta}^{3,m}_{t},  \vecc{\beta}^{4,m}_{t}) \in \RR^{d}\times \RR^d \times \RR^{d_1} \times \RR^{d_2} \times \RR^{d_3} \times \RR^{d_4}$:
\begin{align}
d\vecc{x}^m_{t} &= \vecc{v}^m_{t} dt, \label{res_sd1} \\
m d\vecc{v}^m_{t} &= -\vecc{\gamma}_0(t, \vecc{x}^m_t) \vecc{v}^m_t dt + \vecc{\sigma}_0(t, \vecc{x}^m_t) d\vecc{W}_t^{(k)}  - \vecc{g}(t, \vecc{x}^m_{t}) \sum_{i=1,2} \alpha_i \vecc{C}_i \vecc{y}^{i,m}_{t} dt \nonumber \\ 
&\ \ \ \ + \vecc{\sigma}(t, \vecc{x}^m_{t}) \sum_{j=3,4} \alpha_j \vecc{C}_j \vecc{\beta}^{j,m}_{t} dt + \vecc{F}_e(t, \vecc{x}^m_{t})dt, \\
d\vecc{y}^{i,m}_{t} &= -\vecc{\Gamma}_i \vecc{y}^{i,m}_{t} dt + \vecc{M}_i \vecc{C}_i^* \vecc{h}(t, \vecc{x}^m_{t}) \vecc{v}^m_{t} dt, \ \ i=1,2,\\  d\vecc{\beta}^{j,m}_{t} &= -\vecc{\Gamma}_j \vecc{\beta}^{j,m}_{t} dt +  \vecc{\Sigma}_j d\vecc{W}^{(q_j)}_{t}, \ \ j=3,4, \label{res_sd6}
\end{align}
where we have defined the auxiliary memory processes:
\begin{equation}
\vecc{y}^{i,m}_{t} := \int_{0}^{t} e^{-\vecc{\Gamma}_i(t-s)} \vecc{M}_i \vecc{C}_i^* \vecc{h}(s, \vecc{x}^m_{s}) \vecc{v}^m_{s} ds \in \RR^{d_i}, \ \ i=1,2.
\end{equation}
Note that the processes $\vecc{\beta}_t^{3,m}$ and $\vecc{\beta}_t^{4,m}$ do not actually depend on $m$, but we are adding the superscript $m$ for a more homogeneous notation.

We make the following simplifying assumptions concerning \eqref{res_sd1}-\eqref{res_sd6}.   Let $\vecc{W}^{(q_j)}$ ($j=3,4$) be independent Wiener processes on a filtered probability space $(\Omega, \mathcal{F}, \mathcal{F}_t,\mathbb{P})$ satisfying the usual conditions and let $\mathbb{E}$ denote expectation with respect to $\mathbb{P}$.


\begin{ass} \label{exis_gle}
There are no explosions, i.e. almost surely, for every $m > 0$ there exists global unique solution to the pre-limit SDE  
\eqref{res_sd1}-\eqref{res_sd6} and also to the limiting SDEs \eqref{sm1}-\eqref{sm3} on the time interval $[0,T]$.
\end{ass}

\begin{ass} \label{bounded}
For $t \in \RR^+$,  $\vecc{y} \in \RR^{d}$, the functions  $\vecc{F}_e(t, \vecc{y})$, $\vecc{\sigma}_0(t,\vecc{y})$ and $\vecc{\sigma}(t,\vecc{y})$ are continuous and bounded (in $t$ and $\vecc{y}$) as well as Lipschitz in $\vecc{y}$, whereas the functions $\vecc{\gamma}_0(t, \vecc{y})$, $\vecc{g}(t, \vecc{y})$, $\vecc{h}(t, \vecc{y})$, $(\vecc{\gamma}_0)_{\vecc{y}}(t, \vecc{y})$, $(\vecc{g})_{\vecc{y}}(t, \vecc{y})$ and $(\vecc{h})_{\vecc{y}}(t, \vecc{y})$ are continuously differentiable and Lipschitz in $\vecc{y}$ as well as bounded (in $t$ and $\vecc{y}$).
Moreover, the  functions $(\vecc{\gamma}_0)_{\vecc{y}\vecc{y}}(t, \vecc{y})$, $(\vecc{g})_{\vecc{y}\vecc{y}}(t, \vecc{y})$ and $(\vecc{h})_{\vecc{y}\vecc{y}}(t, \vecc{y})$  are bounded for every $t \in \RR^+$, $\vecc{y} \in \RR^{d}$. 
\end{ass}


\begin{ass} \label{initialdata}
The initial data $\vecc{x}, \vecc{v} \in \RR^d$ are $\mathcal{F}_0$-measurable random variables independent of the $\sigma$-algebra generated by the Wiener processes $\vecc{W}^{(q_j)}$ ($j=3,4$). They are independent of $m$ and have finite moments of all orders.
\end{ass}


The following theorem describes the homogenized behavior of the particle's position modeled by the family of the equations \eqref{res_gle_smallmass1}-\eqref{res_gle_smallmass}---or, equivalently, by the SDE systems \eqref{res_sd1}-\eqref{res_sd6}---in the limit as the particle's mass tends to zero.

\begin{thm} \label{newsmallm} 
Let $\vecc{z}_t^m := (\vecc{x}_t^m, \vecc{v}_t^m, \vecc{y}_t^{1,m}, \vecc{y}_t^{2,m}, \vecc{\beta}_t^{3,m}, \vecc{\beta}_t^{4,m}) $ be a family of processes solving the SDE system \eqref{res_sd1}-\eqref{res_sd6}.
Suppose that Assumptions \ref{ass_bohl}-\ref{minimal} and  Assumptions \ref{exis_gle}-\ref{initialdata} hold. In addition, suppose that for every $m > 0$, $\vecc{x} \in \RR^d$, the family of matrices $\vecc{\gamma}_0(t, \vecc{x})$ is positive stable, uniformly in $t$ and $\vecc{x}$.  Then as $m \to 0$, the position process $\vecc{x}^m_t$ converges to $\vecc{X}_t$, where $\vecc{X}_t$ is the first component of the process $(\vecc{X}_t, \vecc{Y}_t^1, \vecc{Y}_t^2, \vecc{\beta}_t^3, \vecc{\beta}_t^4)$ satisfying the It\^o SDE system:
\begin{align}
d\vecc{X}_t &= \vecc{\gamma}_0^{-1}(t, \vecc{X}_t)\bigg[ - \vecc{g}(t, \vecc{X}_t) \sum_{i=1}^2 \alpha_i \vecc{C}_i \vecc{Y}_t^i + \vecc{\sigma}(t, \vecc{X}_t) \sum_{j=3}^4 \alpha_j \vecc{C}_j \vecc{\beta}_t^j \nonumber \\
&\ \ \ \ + \vecc{F}_e(t, \vecc{X}_t) \bigg] dt  + \vecc{\gamma}_0^{-1}(t, \vecc{X}_t)\vecc{\sigma}_0(t, \vecc{X}_t)d\vecc{W}_t^{(k)} +\vecc{S}^{(0)}(t, \vecc{X}_t)dt, \label{sm1} \\
d\vecc{Y}_t^k &= -\vecc{\Gamma}_k \vecc{Y}_t^k dt + \vecc{M}_k \vecc{C}_k^* \vecc{h}(t, \vecc{X}_t)\vecc{\gamma}_0^{-1}(t, \vecc{X}_t) \bigg[ - \vecc{g}(t, \vecc{X}_t) \sum_{i=1}^2 \alpha_i \vecc{C}_i \vecc{Y}_t^i  \nonumber \\ 
&\ \ \ \ +  \vecc{\sigma}(t, \vecc{X}_t) \sum_{j=3}^4 \alpha_j \vecc{C}_j \vecc{\beta}_t^j + \vecc{F}_e(t,\vecc{X}_t) \bigg] dt + \vecc{S}^{(k)}(t, \vecc{X}_t) dt \nonumber \\
&\ \ \ \ + \vecc{M}_k \vecc{C}_k^* \vecc{h}(t, \vecc{X}_t)\vecc{\gamma}_0^{-1}(t, \vecc{X}_t)\vecc{\sigma}_0(t, \vecc{X}_t)d\vecc{W}_t^{(k)}, \ \ \text{for } k=1,2, \label{sm2} \\
d\vecc{\beta}_t^l &= -\vecc{\Gamma}_l \vecc{\beta}_t^l dt + \vecc{\Sigma}_l d\vecc{W}_t^{(q_l)}, \ \ \text{ for } l=3,4, \label{sm3}
\end{align}
where the $i$th component of the  $\vecc{S}^{(k)}$ ($k=0,1,2$) is given by:
\begin{align}
S_i^{(0)}(t, \vecc{X}) &= \frac{\partial}{\partial X_l}\left((\vecc{\gamma}_0^{-1})_{ij}(t, \vecc{X}) \right) J_{lj}, \ \ j,l=1,\dots,d, \end{align}
and for $k=1,2$,
\begin{align}
S_i^{(k)}(t, \vecc{X}) &= \frac{\partial}{\partial X_l}\left((\vecc{M}_k \vecc{C}_k^* \vecc{h}(t, \vecc{X}) \vecc{\gamma}_0^{-1}(t, \vecc{X}))_{ij} \right) J_{lj}, \ \ j,l=1,\dots,d,
\end{align}
with $\vecc{J} \in \RR^{d \times d}$ solving the Lyapunov equation, $\vecc{\gamma}_0 \vecc{J} + \vecc{J} \vecc{\gamma}_0^* = \vecc{\sigma}_0 \vecc{\sigma}_0^*$.
The convergence is obtained in the following sense:  for all finite $T>0$, $\sup_{t \in [0,T]} |\vecc{x}^m_t - \vecc{X}_t|  \to 0$ in probability, as $m \to 0$.
\end{thm}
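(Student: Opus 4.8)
\emph{Proof strategy.} The family \eqref{res_sd1}--\eqref{res_sd6} is a slow--fast system in which $\vecc{v}^m$ is the fast variable: it relaxes on the time scale $m$ and, conditionally on the remaining variables, equilibrates to a Gaussian law determined by the instantaneous values of $\vecc{\gamma}_0$, $\vecc{\sigma}_0$ and of the forcing. The plan is to recast \eqref{res_sd1}--\eqref{res_sd6} as an instance of the general state-dependent Smoluchowski--Kramers system of Theorem \ref{mainthm} with $\epsilon = m$, and then to read off \eqref{sm1}--\eqref{sm3} as its homogenization limit.

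Concretely, I would group the variables into a slow vector $\vecc{Q}^m_t := (\vecc{x}^m_t, \vecc{y}^{1,m}_t, \vecc{y}^{2,m}_t, \vecc{\beta}^{3,m}_t, \vecc{\beta}^{4,m}_t) \in \RR^{d}\times\RR^{d_1}\times\RR^{d_2}\times\RR^{d_3}\times\RR^{d_4}$ and a fast vector $\vecc{P}^m_t := \vecc{v}^m_t \in \RR^d$, and, writing $\vecc{f}(t,\vecc{Q}) := -\vecc{g}(t,\vecc{x})\sum_{i=1,2}\alpha_i\vecc{C}_i\vecc{y}^i + \vecc{\sigma}(t,\vecc{x})\sum_{j=3,4}\alpha_j\vecc{C}_j\vecc{\beta}^j + \vecc{F}_e(t,\vecc{x})$, cast \eqref{res_sd1}--\eqref{res_sd6} as
\begin{align}
d\vecc{Q}^m_t &= \big[\vecc{c}(\vecc{Q}^m_t) + \vecc{D}(t,\vecc{x}^m_t)\vecc{P}^m_t\big]\,dt + \vecc{E}\,d\vecc{\mathcal{W}}_t, \label{pp1}\\
m\,d\vecc{P}^m_t &= \big[\vecc{f}(t,\vecc{Q}^m_t) - \vecc{\gamma}_0(t,\vecc{x}^m_t)\vecc{P}^m_t\big]\,dt + \vecc{\sigma}_0(t,\vecc{x}^m_t)\,d\vecc{W}^{(k)}_t, \label{pp2}
\end{align}
where $\vecc{\mathcal{W}}$ stacks the independent Wiener processes $\vecc{W}^{(q_3)},\vecc{W}^{(q_4)}$, the map $\vecc{c}$ gathers the dissipative terms $-\vecc{\Gamma}_i\vecc{y}^i$ and $-\vecc{\Gamma}_j\vecc{\beta}^j$, the block matrix $\vecc{D}(t,\vecc{x})$ gathers the identity block from $d\vecc{x}=\vecc{v}\,dt$ and the blocks $\vecc{M}_i\vecc{C}_i^*\vecc{h}(t,\vecc{x})$ from the $\vecc{y}^i$-equations, and $\vecc{E}$ is the corresponding constant block-diagonal matrix built from $\vecc{\Sigma}_3,\vecc{\Sigma}_4$. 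Since only the prefactor of $d\vecc{P}^m_t$ depends on $m$, this is an instance of \eqref{sde1}--\eqref{sde2} with $\epsilon=m$.

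Next I would verify the hypotheses of Theorem \ref{mainthm}. Uniform positive stability of $\vecc{\gamma}_0(t,\vecc{x})$ is assumed in the present statement, so the fast semigroup is uniformly exponentially contracting and the Lyapunov equation $\vecc{\gamma}_0\vecc{J}+\vecc{J}\vecc{\gamma}_0^* = \vecc{\sigma}_0\vecc{\sigma}_0^*$ has a unique positive solution $\vecc{J}(t,\vecc{x})$; the regularity, boundedness and differentiability needed to control the drift and to give the correction terms meaning come from Assumption \ref{bounded}; global solvability is Assumption \ref{exis_gle}; and the moment conditions on $(\vecc{x},\vecc{v})$ are Assumption \ref{initialdata}. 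The one non-routine point is that $\vecc{c}$ and $\vecc{f}$ are merely \emph{affine} --- not bounded --- in the slow memory and noise variables $\vecc{y}^i,\vecc{\beta}^j$, while $\vecc{D}$ is bounded and $\vecc{x}$-dependent; here one exploits the positive stability of the $\vecc{\Gamma}_i$ and $\vecc{\Gamma}_j$, which makes \eqref{pp1} dissipative in $\vecc{y}^i$ and $\vecc{\beta}^j$, together with the at-most-linear dependence of the slow drift on $\vecc{P}^m$, to propagate the uniform-in-$m$ moment bounds on $(\vecc{Q}^m_t,\vecc{P}^m_t)$ that Theorem \ref{mainthm} requires. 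I expect this propagation, rather than the identification of the limit, to be the main technical obstacle.

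Finally, Theorem \ref{mainthm} yields $\sup_{t\in[0,T]}|\vecc{Q}^m_t - \vecc{Q}_t| \to 0$ in probability, where $\vecc{Q}_t$ solves the SDE obtained from \eqref{pp1} by replacing $\vecc{P}^m_t$ with its conditional quasi-stationary behaviour: its mean $\vecc{\gamma}_0^{-1}\vecc{f}$ feeds the drift through $\vecc{D}(t,\vecc{X}_t)\vecc{\gamma}_0^{-1}(t,\vecc{X}_t)\vecc{f}(t,\vecc{Q}_t)$, the fluctuating part feeds a martingale term $\vecc{D}(t,\vecc{X}_t)\vecc{\gamma}_0^{-1}(t,\vecc{X}_t)\vecc{\sigma}_0(t,\vecc{X}_t)\,d\vecc{W}^{(k)}_t$, and the $\vecc{x}$-dependence of the coefficient $\vecc{D}(t,\vecc{x})\vecc{\gamma}_0^{-1}(t,\vecc{x})$ multiplying the fast noise, combined with $m\,\vecc{P}^m_t(\vecc{P}^m_t)^*$ concentrating on $\vecc{J}(t,\vecc{X}_t)$ in the relevant time averages, produces the It\^o (noise-induced) drift whose $i$th component in a slow block $\vecc{A}(t,\vecc{x})\vecc{P}\,dt$ is $\partial_{X_l}\big((\vecc{A}\vecc{\gamma}_0^{-1})_{ij}(t,\vecc{X})\big)J_{lj}$. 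Taking $\vecc{A}=\vecc{I}$ for the $\vecc{x}$-block gives $\vecc{S}^{(0)}$, taking $\vecc{A}=\vecc{M}_k\vecc{C}_k^*\vecc{h}(t,\vecc{x})$ for the $\vecc{y}^k$-block gives $\vecc{S}^{(k)}$ for $k=1,2$, and the $\vecc{\beta}^j$-blocks, carrying no $\vecc{P}$, are unchanged; reassembling the blocks of $\vecc{Q}_t$ gives \eqref{sm1}--\eqref{sm3}, and since $\vecc{x}^m_t$ and $\vecc{X}_t$ are the first blocks of $\vecc{Q}^m_t$ and $\vecc{Q}_t$, the stated convergence follows.
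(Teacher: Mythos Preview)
Your proposal is correct and follows the same route as the paper: you recast \eqref{res_sd1}--\eqref{res_sd6} in the form \eqref{sde1}--\eqref{sde2} with $\epsilon=m$, the slow block $(\vecc{x},\vecc{y}^1,\vecc{y}^2,\vecc{\beta}^3,\vecc{\beta}^4)$ playing the role of $\vecc{x}^\epsilon$ and $\vecc{v}$ the role of $\vecc{v}^\epsilon$, and then invoke Theorem~\ref{mainthm}; your identification of $\vecc{a}_1,\vecc{a}_2,\vecc{b}_1,\vecc{b}_2,\vecc{\sigma}_1,\vecc{\sigma}_2$ and of how the correction drifts $\vecc{S}^{(k)}$ arise from the $\vecc{x}$-dependence of $\vecc{D}\vecc{\gamma}_0^{-1}$ matches the paper exactly.

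One remark on emphasis: you flag the affine (hence unbounded) dependence of $\vecc{c}$ and $\vecc{f}$ on $(\vecc{y}^i,\vecc{\beta}^j)$ as ``the main technical obstacle'' and sketch a dissipativity-plus-moment-propagation fix. The paper's proof does not address this at all --- it simply asserts that Assumption~\ref{a1_ch2} is satisfied and applies Theorem~\ref{mainthm} directly, noting that since none of the coefficients depend explicitly on $m$ the rates $a_i,b_i,c_i,d_i$ are all $\infty$. So you are being more careful here than the paper is; the issue is real (the $\vecc{b}_i$ are Lipschitz but not bounded in the slow variable), and is acknowledged only obliquely in Remark~\ref{warn}, which points to \cite{herzog2016small} for the extension to unbounded coefficients.
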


\begin{proof}
We prove the theorem by applying Theorem \ref{mainthm}.  
Using the notation in the statement of Theorem \ref{mainthm}, let $\epsilon = m$, $n_1 =d+d_1+d_2+d_3+d_4$, $n_2 = d$, $k_1 = q_3 + q_4$, $k_2 = k$, $\vecc{x}^\epsilon(t) = (\vecc{x}_t^m, \vecc{y}_t^{1,m}, \vecc{y}_t^{2,m}, \vecc{\beta}_t^{3,m}, \vecc{\beta}_t^{4,m})$, $\vecc{v}^\epsilon(t) = \vecc{v}_t^m$,
\begin{align}
\vecc{a}_1 &= [\vecc{I} \ \  \vecc{M}_1 \vecc{C}_1^* \vecc{h}(t, \vecc{x}_t^m) \  \ \vecc{M}_2 \vecc{C}_2^* \vecc{h}(t, \vecc{x}_t^m) \ \  \vecc{0} \  \ \vecc{0}], \\
\vecc{a}_2 &= -\vecc{\gamma}_0(t, \vecc{x}_t^m), \\
\vecc{b}_1 &= -(\vecc{0},\vecc{\Gamma}_1 \vecc{y}_t^{1,m}, \vecc{\Gamma}_2 \vecc{y}_t^{2,m}, \vecc{\Gamma}_3 \vecc{\beta}_t^{3,m}, \vecc{\Gamma}_4 \vecc{\beta}_t^{4,m}), \\
\vecc{b}_2 &= \vecc{F}_e(t, \vecc{x}_t^m) - \vecc{g}(t, \vecc{x}_t^m) \sum_{i=1,2} \alpha_i \vecc{C}_i \vecc{y}_t^{i,m} + \vecc{\sigma}(t, \vecc{x}_t^m) \sum_{j=3,4} \alpha_j \vecc{C}_j \vecc{\beta}_t^{j,m}, \\
\vecc{\sigma}_1 &= \begin{bmatrix}
\vecc{0} & \vecc{0} \\
\vecc{0} & \vecc{0} \\
\vecc{0} & \vecc{0} \\
\vecc{\Sigma}_3 & \vecc{0} \\
\vecc{0} & \vecc{\Sigma}_4
\end{bmatrix},\\
\vecc{\sigma}_2 &= \vecc{\sigma}_0(t, \vecc{x}_t^m),
\end{align}
$\vecc{W}^{(k_1)}(t) = (\vecc{W}_t^{(q_3)}, \vecc{W}_t^{(q_4)})$ and $\vecc{W}^{(k_2)}(t) = \vecc{W}_t^{(k)}$.
The initial conditions are $\vecc{x}(0) = (\vecc{x}, \vecc{0}, \vecc{0}, \vecc{\beta}_0^3, \vecc{\beta}_0^4)$ and $\vecc{v}(0) = \vecc{v}$, where $\vecc{\beta}_0^j$ $(j=3,4$) are normally distributed with mean-zero and covariance $\vecc{M}_j$.  They are independent of $m$.

Observe that in the above formula, $\vecc{a}_i$, $\vecc{b}_i$, $\vecc{\sigma}_i$ ($i=1,2$) do not depend explicitly on $\epsilon = m$, so by the convention adopted earlier, we denote them $\vecc{A}_i$, $\vecc{B}_i$, $\vecc{\Sigma}_i$ respectively, and we put $a_i = b_i = c_i = d_i = \infty$, where $a_i, b_i, c_i, d_i$ are the rates in Assumption \ref{a5_ch2}.

Next, we verify the assumptions of Theorem \ref{mainthm}.
Assumption \ref{aexis} clearly follows from the Assumption \ref{exis_gle}.
Since the family of matrices $\vecc{\gamma}_0(t, \vecc{x})$ is positive stable (uniformly in $t$ and $\vecc{x}$), Assumption \ref{a0_ch2} is satisfied. It is straightforward to see that our assumptions on the coefficients of the GLE imply Assumption \ref{a1_ch2}. As $\vecc{x}(0)$ and $\vecc{v}(0)$ are random variables independent of $m$, Assumption \ref{a2_ch2} holds by our assumptions on the initial conditions $\vecc{x}_0$, $\vecc{v}_0$ and $\vecc{\beta}^j_0$ ($j=3,4$). Finally, as noted earlier, Assumption \ref{a5_ch2} holds with $a_i = b_i = c_i = d_i = \infty$.
The assumptions of the Theorem \ref{mainthm} are thus satisfied.   Applying  it, we obtain the limiting SDE system \eqref{sm1}-\eqref{sm3}.

\end{proof}

We remark that the  limiting SDE is unique up to transformation in \eqref{transf_realize}, as pointed out already in \cite{LimWehr_Homog_NonMarkovian}.

\begin{rem}
In the special case when $\alpha_i=0$ for $i=1,2,3,4$ and the coefficients do not depend on $t$ explicitly, Theorem \ref{newsmallm} reduces to the result obtained in \cite{hottovy2015smoluchowski}. In general, by comparing the result with the one obtained in  \cite{hottovy2015smoluchowski}, we see that perturbing the original Markovian system by adding a memory and colored noise changes the behavior of the homogenized system obtained in the small mass limit. In particular,
\begin{itemize}
\item[(i)] the limiting equation for the particle's position not only contains a correction drift term ($\vecc{S}^{(0)}$) -- the {\it noise-induced drift}, but is also coupled to equations for other slow variables;
\item[(ii)] in the case when $\alpha_1$ and/or $\alpha_2$  equal $1$, the limiting equation for the (slow) auxiliary memory variables contains correction drift terms ($\vecc{S}^{(1)}$ and/or $\vecc{S}^{(2)}$) -- which could be called the {\it memory-induced drifts}. Interestingly, the memory-induced drifts disappear when $\vecc{h}$ is proportional to $\vecc{\gamma}_0$, a phenomenon that can be attributed to the interaction between the forces $\vecc{F}_0$ and $\vecc{F}_1$. 
\end{itemize}
Note that the highly coupled structure of the limiting SDEs is due to the fact that only one time scale (inertial time scale) was taken to zero in the limit. We expect the structure to simplify when all time scales present in the problem are taken to zero at the same rate.
\end{rem}

\section{Homogenization for the Case of Vanishing Effective Damping Constant and Effective Diffusion Constant}
\label{sect_newhomogcase}

In this section we consider the GLE \eqref{gle2}, with $\vecc{F}_0 = \vecc{0}$,  $\alpha_1=\alpha_3=0$, and $\alpha_2 = \alpha_4 = 1$.  We explore a class of homogenization schemes, aiming to:
\begin{itemize}
\item[(P1)] reduce the complexity of the generalized Langevin dynamics in a way that the homogenized dynamics can be realized on a state space with minimal dimension and are described by minimal number of effective parameters;
\item[(P2)]  retain non-trivial effects of the memory and the colored noise in the homogenized dynamics by matching the asymptotic behavior of the spectral density of the noise process and memory function in the original and the effective model.
\end{itemize}

\begin{rem} \label{hmm}
Generally, the larger the number of time scales (the eigenvalues of the $\vecc{\Gamma}_i$) present in the system, the higher the dimension of the state space needed to realize the generalized Langevin system. On the other hand, in addition to $\vecc{\Gamma}_i$, information on $\vecc{C}_i$ and $\vecc{M}_i$ is needed to determine the asymptotic behavior of the spectral densities (see Proposition \ref{asympbeh}(i)). In other words, although  analysis based solely on  time scales consideration may reduce the dimension of the model, it does not in general allow one to achieve the model matching in (P2).
It is desirable to have homogenization schemes that achieve both goals of dimension reduction (P1) and matching of models (P2).  Such a scheme is considered below.
\end{rem}

The idea is to consider the limit when the inertial time scale, a proper subset of the memory time scales and a proper subset of the noise correlation time scales tend to zero at the same rate. The case of sending all the characteristic time scales to zero is excluded here as it is uninteresting when the effective damping and diffusion vanish in the limit.

Recall that the notions of controllability and observability are invariant under the trivial equivalence relation of type \eqref{transf_realize}. Therefore, one can, without loss of generality, assume that the $\vecc{\Gamma}_i$ $(i=1,2,3,4)$ are already in the Jordan normal form and work in Jordan basis. Such form will reveal the slow-fast time scale structure of the system and so give us a rubric to develop homogenization schemes.

\begin{ass} \label{jordan} Let $i=2,4$.
All the $\vecc{\Gamma}_i$ are of the following Jordan normal form:
\begin{equation}
\vecc{\Gamma}_i = diag(\vecc{\Gamma}_{i,1},\cdots,\vecc{\Gamma}_{i,N_i}),
\end{equation}
where  $N_i < d_i$,  $\vecc{\Gamma}_{i,k} \in \RR^{\nu(\lambda_{i,k}) \times \nu(\lambda_{i,k})}$ ($k=1, \dots, N_i$)  is the Jordan block associated with the (controllable and observable) eigenvalue $\lambda_{i,k}$ (or time scale $\tau_{i,k}=1/\lambda_{i,k}$) and corresponds to the invariant subspace $\mathcal{X}_{i,k} = Ker(\lambda_{i,k}\vecc{I}-\vecc{\Gamma}_{i,k})^{\nu(\lambda_{i,k})}$, where $\nu(\lambda_{i,k})$ is the index of $\lambda_{i,k}$, i.e.  the size of the largest Jordan block corresponding to the eigenvalue $\lambda_{i,k}$. Let $1 \leq M_i < N_i$ and the eigenvalues be ordered as $0 < \lambda_{i,1}  \leq  \dots \leq \lambda_{i,M_i} < \lambda_{i,M_{i}+1}  \leq \dots \leq \lambda_{i,N_i}$, so that we have the invariant subspace decomposition, $\RR^{d_i} = \bigoplus_{j=1}^{N_i} \mathcal{X}_{i,j}$, with $d_i = \sum_{k=1}^{N_i} \nu(\lambda_{i,k})$.
 \end{ass}

Let $0 < l_i < d_i$. The following procedure studies generalized Langevin dynamics whose spectral densities of the memory and the noise process have the asymptotic behavior, $\vecc{\mathcal{S}}_i(\omega) \sim \omega^{2l_i}$ for small $\omega$, and $\vecc{\mathcal{S}}_i(\omega) \sim 1/\omega^{2d_i}$ for large $\omega$, for $i=2,4$.  We construct a homogenized version of the model in such a way that its  memory and noise processes have spectral densities whose asymptotic behavior at low $\omega$ matches that of the original model (to achieve (P2)), while that at high $\omega$ it varies as $1/\omega^{2l_i}$ (to achieve (P1)).

\begin{alg} \label{alg} {\it Procedure to study a class of homogenization problems.}
\begin{itemize}
\item[(1)] Let $\alpha_1=\alpha_3 =0$, $\alpha_2=\alpha_4 =1$ and $\vecc{F}_0 = \vecc{0}$ in the GLE \eqref{gle2}.
Suppose that Assumption \ref{jordan} holds and there exists $M_i$ such that $l_i = \sum_{k=1}^{M_i} \nu(\lambda_{i,k})$. Take this $M_i$.
\item[(2)] For $i=2,4$, set $m=m' \epsilon$ and $\lambda_{i,k} = \lambda'_{i,k}/\epsilon$, for $k=M_i+1,\dots,N_i$ (i.e. we scale the $(d_2-l_2)$ smallest memory time scales and the $(d_4-l_4)$ smallest noise correlation time scales with $\epsilon$), where $m'$ and the $\lambda'_{i,k}$ are positive constants.
\item[(3)] Select the $\vecc{C}_i$, $\vecc{M}_i$, $\vecc{\Sigma}_i$ such that the $\vecc{C}_i$ are constant matrices independent of the $\lambda_{i,k}$ ($k=1,\dots,N_i$), $\vecc{C}_i \vecc{\Gamma}_i^{-n_i} \vecc{M}_i \vecc{C}_i^* = \vecc{0}$ for $0 < n_i < 2l_i$, $\vecc{C}_i \vecc{\Gamma}_i^{-(2l_i+1)} \vecc{M}_i \vecc{C}_i^* \neq \vecc{0}$, and upon a suitable rescaling involving the mass, memory time scales and noise correlation time scales the resulting family of  GLEs can be cast in the form of the SDEs \eqref{sde1}-\eqref{sde2}. Note that the matrix entries of the $\vecc{M}_i$ and/or $\vecc{\Sigma}_i$ necessarily depend on the $\lambda_{i,k}$ due to the Lyapunov equations that relate them to the $\vecc{\Gamma}_i$.
\item[(4)] Apply Theorem \ref{mainthm} to study the limit $\epsilon \to 0$ and obtain the homogenized model, under appropriate assumptions on the coefficients and parameters in the GLEs.
\end{itemize}
\end{alg}

We remark that while one has the above procedure to study  homogenization schemes that achieve (P1) and (P2), the derivations and formulas for the limiting equations could become tedious and complicated  as the $l_i$ and $d_i$ become large. To illustrate this, we consider a simple yet still sufficiently general instance of Algorithm \ref{alg} in the following.

\begin{ass}  \label{special}
The spectral densities, $\vecc{\mathcal{S}}_i(\omega) = \vecc{\Phi}_i(i\omega)\vecc{\Phi}_i^*(-i\omega)$ ($i=2,4$), with the (minimal) spectral factor:
\begin{equation}
\vecc{\Phi}_i(z) = \vecc{Q}^{-1}_i(z) \vecc{P}_i(z), \end{equation}
where the $\vecc{P}_i(z) \in \RR^{p_i \times m_i}$ are matrix-valued monomials with degree $l_i$ : \begin{equation}
\vecc{P}_i(z) = \vecc{B}_{l_i}z^{l_i}
\end{equation}
and the $\vecc{Q}_i(z) \in \RR^{p_i \times p_i}$ are matrix-valued polynomials with degree $d_i$, i.e.
\begin{equation} \vecc{Q}_i(z) = \prod_{k=1}^{d_i} (z\vecc{I}+\vecc{\Gamma}_{i,k}). 
\end{equation}
Here $p_2=q$, $p_4=r$, the $m_i$ $(i=2,4$) are positive integers, the $\vecc{B}_{l_i} \in \RR^{p_i \times m_i}$ are constant matrices, $\vecc{\Gamma}_{i,k}\in \RR^{p_i \times p_i}$ are diagonal matrices with positive entries, and  $\vecc{I}$ denotes identity matrix of appropriate dimension.
\end{ass}

Under Assumption \ref{special}, the spectral densities have the following asymptotic behavior: $\vecc{\mathcal{S}}_i(\omega) \sim \omega^{2l_i}$ for small $\omega$, and $\vecc{\mathcal{S}}_i(\omega) \sim 1/\omega^{2d_i}$ for large $\omega$.
One can then implement Algorithm \ref{alg} explicitly to study homogenization for a sufficiently large class of GLEs, where the rescaled spectral densities tend to the ones with the asymptotic behavior mentioned in the paragraph just before Algorithm \ref{alg} in the limit. We discuss one such implementation in Appendix \ref{implem_alg}. Since the calculations become more complicated as $l_i$ and $d_i$ become large, we will only study simpler cases and illustrate how things could get complicated in the following.

We assume $d_2$ and $d_4$ are even integers and consider in detail  the case when
$l_2 = l_4 = l = 1$, $d_2 = d_4 = h = 2$,
\begin{align}
\vecc{\Gamma}_{2,1} &= diag(\lambda_{2,1}, \dots, \lambda_{2,d_2/2}), \ \ \  \vecc{\Gamma}_{2,2}= diag(\lambda_{2,d_2/2+1},\dots,\lambda_{2,d_2}), \\
\vecc{\Gamma}_{4,1}&=diag(\lambda_{4,1},\dots,\lambda_{4,d_4/2}),   \ \ \  \vecc{\Gamma}_{4,2}= diag(\lambda_{4,d_4/2+1},\dots,\lambda_{4,d_4}),
\end{align}
with $\lambda_{2,d_2} \geq \dots \geq \lambda_{2,d_2/2+1}>\lambda_{2,d_2/2}\geq \dots \geq \lambda_{2,1}>0$ and $\lambda_{4,d_4} \geq \dots \geq \lambda_{4,d_4/2+1}>\lambda_{4,d_4/2}\geq \dots \geq \lambda_{4,1}>0$ in Assumption \ref{special}, so that for $i=2,4$,
\begin{equation} \label{gammai}
\vecc{\Gamma}_i = diag(\vecc{\Gamma}_{i,1},\vecc{\Gamma}_{i,2}) \in \RR^{d_i \times d_i}.
\end{equation}
We consider:
\begin{align}
\vecc{C}_i &= [\vecc{B}_i \ \ \vecc{B}_i] \in \RR^{p_i \times d_i}, \label{Ci} \\
\vecc{\Sigma}_i &= \left[ -\vecc{\Gamma}_{i,1}\vecc{\Gamma}_{i,2}(\vecc{\Gamma}_{i,2}-\vecc{\Gamma}_{i,1})^{-1} \  \ \ \vecc{\Gamma}_{i,2}^2(\vecc{\Gamma}_{i,2}-\vecc{\Gamma}_{i,1})^{-1} \right]^*
\in \RR^{d_i \times d_i/2}, \label{Sigmai} \\
\text{ so that } \nonumber  \\
\vecc{M}_i &=  \left[ \begin{array}{cc}  \label{Mi}
\vecc{M}_i^{11} & \vecc{M}_i^{12} \\
\vecc{M}_i^{21} & \vecc{M}_i^{22}
\end{array} \right] \in \RR^{d_i \times d_i}, 
\end{align}
where 
\begin{align}
\vecc{M}_i^{11} &=  \frac{1}{2}\vecc{\Gamma}_{i,1} \vecc{\Gamma}_{i,2}^2 (\vecc{\Gamma}_{i,1}-\vecc{\Gamma}_{i,2})^{-2}, \\
\vecc{M}_i^{12} &= \vecc{M}_i^{21} =  -\vecc{\Gamma}_{i,1} \vecc{\Gamma}^3_{i,2} (\vecc{\Gamma}_{i,1}+\vecc{\Gamma}_{i,2})^{-1}  (\vecc{\Gamma}_{i,1}-\vecc{\Gamma}_{i,2})^{-2}, \\
\vecc{M}_i^{22} &=  \frac{1}{2}\vecc{\Gamma}^3_{i,2} (\vecc{\Gamma}_{i,1}-\vecc{\Gamma}_{i,2})^{-2},
\end{align} 
$p_2 = q$ and $p_4 = r$ as in Assumption \ref{special}. 
One can verify that this is indeed the vanishing effective damping constant and effective diffusion constant case (i.e. $\vecc{C}_i \vecc{\Gamma}_i^{-1} \vecc{M}_i \vecc{C}_i^* = \vecc{0}$ for $i=2,4$). Also, for $i=2,4$, the memory kernel, $\vecc{\kappa}_2(t)$ and covariance function, $\vecc{R}_4(t)$, are of the following bi-exponential form:
\begin{equation}  \label{78}
 \vecc{C}_ie^{-\vecc{\Gamma}_i|t|}\vecc{M}_i \vecc{C}_i^* =  \frac{1}{2}\vecc{B}_i  \vecc{\Gamma}_{i,2}^2(\vecc{\Gamma}_{i,2}^2-\vecc{\Gamma}_{i,1}^2)^{-1} \left( \vecc{\Gamma}_{i,2} e^{-\vecc{\Gamma}_{i,2} |t|} - \vecc{\Gamma}_{i,1} e^{-\vecc{\Gamma}_{i,1} |t|} \right) \vecc{B}_i^*
\end{equation}
and their Fourier transforms are:
\begin{equation} \label{79}
\vecc{\mathcal{S}}_i(\omega)=\vecc{B}_i \vecc{\Gamma}_{i,2}^2 \vecc{B}_i^* \omega^2 ((\omega^2\vecc{I}+\vecc{\Gamma}_{i,1}^2)(\omega^2\vecc{I}+\vecc{\Gamma}_{i,2}^2))^{-1},
\end{equation} which vary as $\omega^2$ near $\omega = 0$.
Note that in the above the $\vecc{B}_i$ do not necessarily commute with the $\vecc{\Gamma}_{i,j}$.


Following step (2) of Algorithm \ref{alg}, we set $m = m_0 \epsilon$,  $\vecc{\Gamma}_{i,2} = \vecc{\gamma}_{i,2}/\epsilon$ for $i=2,4$, where $m_0 > 0$ is a constant and the $\vecc{\gamma}_{i,2}$ are diagonal matrices with positive eigenvalues, in \eqref{78}-\eqref{79}. We consider the family of GLEs (parametrized by $\epsilon > 0$):
\begin{align}
m_0 \epsilon d\vecc{v}_t^\epsilon &= -\vecc{g}(t, \vecc{x}_t^\epsilon) \left(\int_0^t \vecc{\kappa}_2^\epsilon(t-s) \vecc{h}(s, \vecc{x}_s^\epsilon) \vecc{v}_s^\epsilon ds \right) dt + \vecc{\sigma}(t, \vecc{x}_t^\epsilon)  \vecc{C}_4 \vecc{\beta}_t^{4,\epsilon} dt \nonumber \\ 
&\ \ \ \ + \vecc{F}_e(t, \vecc{x}_t^\epsilon) dt,  \label{res_gle_1}  \\
 \epsilon d\vecc{\beta}_t^{4,\epsilon} &= -\vecc{\Gamma}_4 \vecc{\beta}_t^{4,\epsilon} dt + \vecc{\Sigma}_4 d\vecc{W}_t^{(q_4)},  \label{res_gle_2}
\end{align}
where
\begin{equation} \label{res_mkernel}
\vecc{\kappa}_2^\epsilon(t) =  \frac{1}{2}\vecc{B}_2  \vecc{B}_2^*  \vecc{\gamma}_{2,2}^2(\vecc{\gamma}_{2,2}^2 - \epsilon^2 \vecc{\Gamma}_{2,1}^2)^{-1} \left( \frac{\vecc{\gamma}_{2,2}}{\epsilon} e^{-\frac{\vecc{\gamma}_{2,2}}{\epsilon} |t|} - \vecc{\Gamma}_{2,1} e^{-\vecc{\Gamma}_{2,1} |t|} \right)
\end{equation}
and the covariance function of the noise process $\vecc{\xi}_t^\epsilon = \vecc{C}_4 \vecc{\beta}_t^{4,\epsilon}$ is given by 
\begin{equation} \label{res_noiseee}
\vecc{R}_4^\epsilon(t)  =  \frac{1}{2}\vecc{B}_4  \vecc{B}_4^*  \vecc{\gamma}_{4,2}^2(\vecc{\gamma}_{4,2}^2 - \epsilon^2 \vecc{\Gamma}_{4,1}^2)^{-1} \left( \frac{\vecc{\gamma}_{4,2}}{\epsilon} e^{-\frac{\vecc{\gamma}_{4,2}}{\epsilon} |t|} - \vecc{\Gamma}_{4,1} e^{-\vecc{\Gamma}_{4,1} |t|} \right).
\end{equation}

Note that $\vecc{\kappa}_2^\epsilon(t)$ and $\vecc{R}_4^\epsilon(t)$ converge (in the sense of distribution), as $\epsilon \to 0$, to
\begin{equation}
 \frac{1}{2} \vecc{B}_i \vecc{B}_i^* (\delta(t)\vecc{I}-\vecc{\Gamma}_{i,1} e^{-\vecc{\Gamma}_{i,1} |t|}), \end{equation}
with $i=2$ and $i=4$ respectively.  The corresponding spectral densities are
\begin{equation} \label{limitspec}
\vecc{\mathcal{S}}_i(\omega) = \vecc{B}_i\vecc{B}_i^*\omega^2 (\omega^2 \vecc{I}+\vecc{\Gamma}_{i,1}^2)^{-1},
\end{equation}
with $i=2$ and $i=4$ respectively.

Together with the equation for the particle's position, the equations \eqref{res_gle_1}-\eqref{res_gle_2} form the SDE system:
\begin{align}
d\vecc{x}^\epsilon_t &= \vecc{v}^\epsilon_t dt, \label{res_s1} \\
\epsilon m_0 d\vecc{v}^\epsilon_t &= -\vecc{g}(t, \vecc{x}^\epsilon_t)\vecc{B}_2(\vecc{y}_t^{2,1,\epsilon}+\vecc{y}_t^{2,2,\epsilon}) dt + \vecc{\sigma}(t, \vecc{x}^\epsilon_t) \vecc{B}_4 (\vecc{\beta}_t^{4,1,\epsilon}+\vecc{\beta}_t^{4,2,\epsilon})dt \nonumber \\
&\ \ \ \  + \vecc{F}_e(t, \vecc{x}^\epsilon_t) dt, \\
d\vecc{y}_t^{2,1,\epsilon} &= -\vecc{\Gamma}_{2,1}\vecc{y}_t^{2,1,\epsilon} dt + \mathcal{M}_1^\epsilon \vecc{h}(t, \vecc{x}^\epsilon_t)\vecc{v}^\epsilon_t dt,  \\
\epsilon d\vecc{y}_t^{2,2,\epsilon} &= -\vecc{\gamma}_{2,2}\vecc{y}_t^{2,2,\epsilon}dt+\mathcal{M}_2^\epsilon\vecc{h}(t, \vecc{x}^\epsilon_t) \vecc{v}^\epsilon_t dt, \\
d\vecc{\beta}_t^{4,1,\epsilon} &= -\vecc{\Gamma}_{4,1} \vecc{\beta}_t^{4,1,\epsilon} dt + \vecc{\sigma}_1^\epsilon d\vecc{W}_t^{(q_4/2)}, \\
\epsilon d\vecc{\beta}_t^{4,2,\epsilon} &= -\vecc{\gamma}_{4,2} \vecc{\beta}_t^{4,2,\epsilon} dt + \vecc{\sigma}_2^\epsilon d\vecc{W}_t^{(q_4/2)}, \label{res_s6}
\end{align}
where
\begin{align}
\mathcal{M}_1^\epsilon &= \bigg( (2(\epsilon \vecc{\Gamma}_{2,1}-\vecc{\gamma}_{2,2})^2)^{-1} \vecc{\Gamma}_{2,1}\vecc{\gamma}_{2,2}^2  \nonumber \\ 
&\hspace{1cm} -((\epsilon \vecc{\Gamma}_{2,1}-\vecc{\gamma}_{2,2})^2(\epsilon \vecc{\Gamma}_{2,1} + \vecc{\gamma}_{2,2} ) )^{-1} \vecc{\Gamma}_{2,1} \vecc{\gamma}_{2,2}^3 \bigg) \vecc{B}_2^* , \\
\mathcal{M}_2^\epsilon &= \bigg((2(\epsilon \vecc{\Gamma}_{2,1}-\vecc{\gamma}_{2,2})^2)^{-1} \vecc{\gamma}_{2,2}^3 \nonumber \\ 
&\hspace{1cm} - \epsilon ((\epsilon \vecc{\Gamma}_{2,1}-\vecc{\gamma}_{2,2})^2(\epsilon \vecc{\Gamma}_{2,1} + \vecc{\gamma}_{2,2} ) )^{-1} \vecc{\Gamma}_{2,1} \vecc{\gamma}_{2,2}^3 \bigg)\vecc{B}_2^*, \\
\vecc{\sigma}_1^\epsilon &= -(\vecc{\gamma}_{4,2}-\vecc{\Gamma}_{4,1}\epsilon)^{-1} \vecc{\Gamma}_{4,1} \vecc{\gamma}_{4,2}, \\
\vecc{\sigma}_2^\epsilon &= (\vecc{\gamma}_{4,2}-\vecc{\Gamma}_{4,1} \epsilon)^{-1} \vecc{\gamma}_{4,2}^2.
\end{align}


In the following, we take $\epsilon \in  \mathcal{E}$ to be small.  We make the following assumptions, similar to those made in Theorem \ref{newsmallm}.

\begin{ass} \label{exis_gle2}
There are no explosions, i.e. almost surely, for every $\epsilon \in  \mathcal{E}$, there exist unique solutions on the time interval $[0,T]$  to the pre-limit SDEs
\eqref{res_s1}-\eqref{res_s6} and to the limiting SDEs \eqref{lim_2}.
\end{ass}


\begin{ass} \label{initialdata2}
The initial data $\vecc{x}, \vecc{v} \in \RR^d$ are $\mathcal{F}_0$-measurable random variables independent of the $\sigma$-algebra generated by the Wiener processes  $\vecc{W}^{(q_j)}$ ($j=3,4$). They are independent of $\epsilon$ and have finite moments of all orders.
\end{ass}


The following theorem describes the homogenized dynamics of the family of the GLEs \eqref{res_gle_1}-\eqref{res_gle_2} (or equivalently, of the SDEs \eqref{res_s1}-\eqref{res_s6})  in the limit $\epsilon \to 0$, i.e.  when the inertial time scale, one half of the memory time scales and one half of the noise correlation time scales in the original generalized Langevin system tend to zero at the same rate.

\begin{thm} \label{compl}
Consider the family of the GLEs \eqref{res_gle_1}-\eqref{res_gle_2} (or equivalently, of the SDEs \eqref{res_s1}-\eqref{res_s6}).  Suppose that Assumption \ref{bounded} and Assumptions \ref{special}-\ref{initialdata2} hold, with the $\vecc{C}_i$, $\vecc{\Sigma}_i$, $\vecc{M}_i$ and $\vecc{\Gamma}_i$ ($i=2,4$)  given in \eqref{gammai}-\eqref{Sigmai}.


Assume that for every $t \in \RR^+$, $\epsilon > 0$, $\vecc{x} \in \RR^d$,
\begin{equation}
\vecc{I} + \vecc{g}(t, \vecc{x}) \tilde{\vecc{\kappa}}_\epsilon(\lambda) \vecc{h}(t, \vecc{x})/\lambda m_0 \ \text{ and } \  \vecc{I} + \vecc{g}(t, \vecc{x}) \tilde{\vecc{\kappa}}(\lambda) \vecc{h}(t, \vecc{x})/\lambda m_0
\end{equation}
are invertible for all $\lambda$ in the right half plane $\{\lambda \in \CC: Re(\lambda) > 0\}$, where
\begin{equation}
\tilde{\vecc{\kappa}}_\epsilon(z) = \vecc{B}_2(z \vecc{I} + \vecc{\gamma}_{2,2})^{-1} \mathcal{M}_2^\epsilon  \ \text{ and } \
\tilde{\vecc{\kappa}}(z) = \frac{1}{2} \vecc{B}_2 (z\vecc{I} + \vecc{\gamma}_{2,2})^{-1} \vecc{\gamma}_{2,2} \vecc{B}_2^*.
\end{equation}
Also, assume that $\vecc{\nu}(t, \vecc{x}) := \frac{1}{2} \vecc{g}(t, \vecc{x}) \vecc{B}_2 \vecc{B}_2^* \vecc{h}(t, \vecc{x}) $ is invertible for every $t \in \RR^+$,  $\vecc{x} \in \RR^d$.

Then the particle's position, $\vecc{x}^\epsilon_t \in \RR^d$, solving the  family of GLEs, converges as $\epsilon \to 0$, to $\vecc{X}_t \in \RR^d$, where $\vecc{X}_t$ is the first component of the process $\vecc{\theta}_t := (\vecc{X}_t, \vecc{Y}_t, \vecc{Z}_t) \in \RR^{d+d_2/2+d_4/2}$, satisfying the It\^o SDE:
\begin{align} \label{lim_2}
d\vecc{\theta}_t &= \vecc{P}(t,\vecc{\theta}_t) dt + \vecc{Q}(t, \vecc{\theta}_t) dt + \vecc{R}(t, \vecc{\theta}_t) d\vecc{W}_t^{(d_4/2)},
\end{align}
where
\begin{equation} \label{theta}
\vecc{P}(t,\vecc{\theta}) =
\begin{bmatrix}
\vecc{\nu}^{-1}(\vecc{F}_e-\vecc{g}\vecc{B}_2\vecc{Y}_t+\vecc{\sigma}\vecc{B}_4\vecc{Z}_t) \\
-\frac{1}{2} \vecc{\Gamma}_{2,1} \vecc{B}_2^* \vecc{h}  \vecc{\nu}^{-1}(\vecc{F}_e-\vecc{g}\vecc{B}_2\vecc{Y}_t+\vecc{\sigma}\vecc{B}_4\vecc{Z}_t)  - \vecc{\Gamma}_{2,1} \vecc{Y}_t  \\
-\vecc{\Gamma}_{4,1} \vecc{Z}_t
\end{bmatrix},
\end{equation}
\begin{equation}
\vecc{R}(t, \theta) =
\begin{bmatrix}
\vecc{\nu}^{-1} \vecc{\sigma} \vecc{B}_4  \\
-\frac{1}{2} \vecc{\Gamma}_{2,1} \vecc{B}_2^* \vecc{h} \vecc{\nu}^{-1} \vecc{\sigma} \vecc{B}_4  \\
-\vecc{\Gamma}_{4,1}
\end{bmatrix},
\end{equation}
and the $i$th component of $\vecc{Q}$, $i=1,\dots,d+d_2/2+d_4/2$, is given by:
\begin{equation}
Q_i = \frac{\partial}{\partial X_l}\left[ H_{i,j}(t, \vecc{X}) \right] J_{j,l}, \ \ l=1,\dots,d; \ j=1,\dots,d+d_2/2+d_4/2,
\end{equation}
with $\vecc{H}(t, \vecc{X}) = \vecc{T}(t, \vecc{X})\vecc{U}^{-1}(t, \vecc{X}) \in \RR^{(d+d_2/2+d_4/2) \times (d+d_2/2+d_4/2)}$
and 

\noindent $\vecc{J} \in \RR^{(d+d_2/2+d_4/2) \times (d+d_2/2+d_4/2)}$ is the solution to the Lyapunov equation $\vecc{U}\vecc{J}+\vecc{J}\vecc{U}^* = diag(\vecc{0},\vecc{0},\vecc{\gamma}_{4,2}^2)$, where
\begin{equation} \label{TU}
\vecc{T} =
\begin{bmatrix}
\vecc{I} & \vecc{0} & \vecc{0} \\
-\frac{1}{2}\vecc{\Gamma}_{2,1} \vecc{B}_2^* \vecc{h} & \vecc{0} & \vecc{0} \\
\vecc{0} & \vecc{0} & \vecc{0}
\end{bmatrix}, \ \ \
\vecc{U} =
\begin{bmatrix}
\vecc{0} & \vecc{g} \vecc{B}_2/m_0 & -\vecc{\sigma} \vecc{B}_4/m_0 \\
-\frac{1}{2}\vecc{\gamma}_{2,2}\vecc{B}_2^* \vecc{h} & \vecc{\gamma}_{2,2} & \vecc{0} \\
\vecc{0} & \vecc{0} & \vecc{\gamma}_{4,2}
\end{bmatrix}.
\end{equation}
The convergence holds in the same sense as in Theorem \ref{newsmallm}, i.e. for all finite $T>0$,
$\sup_{t \in [0,T]} |\vecc{x}^\epsilon_t - \vecc{X}_t|  \to 0$
in probability, as $\epsilon \to 0$.
\end{thm}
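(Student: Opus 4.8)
The plan is to deduce the statement from the general homogenization theorem, Theorem \ref{mainthm}, by casting the pre-limit SDE system \eqref{res_s1}-\eqref{res_s6} in the slow--fast form \eqref{sde1}-\eqref{sde2}, exactly as in the proof of Theorem \ref{newsmallm}. First I would split the state vector into its \emph{slow} block $(\vecc{x}_t^\epsilon,\vecc{y}_t^{2,1,\epsilon},\vecc{\beta}_t^{4,1,\epsilon}) \in \RR^{d+d_2/2+d_4/2}$, whose equations carry no explicit factor of $\epsilon$, and its \emph{fast} block $(\vecc{v}_t^\epsilon,\vecc{y}_t^{2,2,\epsilon},\vecc{\beta}_t^{4,2,\epsilon}) \in \RR^{d+d_2/2+d_4/2}$, whose differentials are multiplied by $\epsilon$ (the constant $m_0$ being divided into the velocity equation). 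Reading the coefficients off \eqref{res_s1}-\eqref{res_s6} gives the dictionary: the matrix $\vecc{a}_1^\epsilon$ multiplying the fast block in the slow drift has block rows $[\vecc{I}\ \vecc{0}\ \vecc{0}]$, $[\mathcal{M}_1^\epsilon\vecc{h}\ \vecc{0}\ \vecc{0}]$, $[\vecc{0}\ \vecc{0}\ \vecc{0}]$; the fast drift matrix is
\begin{equation*}
\vecc{a}_2^\epsilon(t,\vecc{x}) =
\begin{bmatrix}
\vecc{0} & -\vecc{g}\vecc{B}_2/m_0 & \vecc{\sigma}\vecc{B}_4/m_0 \\
\mathcal{M}_2^\epsilon\vecc{h} & -\vecc{\gamma}_{2,2} & \vecc{0} \\
\vecc{0} & \vecc{0} & -\vecc{\gamma}_{4,2}
\end{bmatrix} ;
\end{equation*}
the parts of the slow and fast drifts not involving the fast block are $(\vecc{0},-\vecc{\Gamma}_{2,1}\vecc{y}_t^{2,1,\epsilon},-\vecc{\Gamma}_{4,1}\vecc{\beta}_t^{4,1,\epsilon})$ and $m_0^{-1}(\vecc{F}_e-\vecc{g}\vecc{B}_2\vecc{y}_t^{2,1,\epsilon}+\vecc{\sigma}\vecc{B}_4\vecc{\beta}_t^{4,1,\epsilon})$ in the velocity slot, zero elsewhere; and the only nonzero noise blocks are $\vecc{\sigma}_1^\epsilon$ in the $\vecc{\beta}^{4,1,\epsilon}$-equation and $\vecc{\sigma}_2^\epsilon$ in the $\vecc{\beta}^{4,2,\epsilon}$-equation, both driven by the common Wiener process that appears in \eqref{lim_2}. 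Using $\mathcal{M}_1^\epsilon\to-\tfrac12\vecc{\Gamma}_{2,1}\vecc{B}_2^*$, $\mathcal{M}_2^\epsilon\to\tfrac12\vecc{\gamma}_{2,2}\vecc{B}_2^*$, $\vecc{\sigma}_1^\epsilon\to-\vecc{\Gamma}_{4,1}$ and $\vecc{\sigma}_2^\epsilon\to\vecc{\gamma}_{4,2}$ — all with $O(\epsilon)$ errors and rational in $\epsilon$ near $0$ under Assumption \ref{special} — one recognizes $\vecc{T}=\lim_{\epsilon\to0}\vecc{a}_1^\epsilon$ and $\vecc{U}=-\lim_{\epsilon\to0}\vecc{a}_2^\epsilon$ with the matrices in \eqref{TU}, and $\mathrm{diag}(\vecc{0},\vecc{0},\vecc{\gamma}_{4,2}^2)=\lim_{\epsilon\to0}\vecc{\sigma}_2^\epsilon(\vecc{\sigma}_2^\epsilon)^*$.

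Next I would check the hypotheses of Theorem \ref{mainthm} against those assumed here: the no-explosion hypothesis is Assumption \ref{exis_gle2}; regularity, boundedness and Lipschitz continuity of $\vecc{g},\vecc{h},\vecc{\sigma},\vecc{F}_e$ and their derivatives follow from Assumption \ref{bounded}; the initial-data condition is Assumption \ref{initialdata2}; and the $\epsilon$-decay-rate hypothesis holds with arbitrarily large exponents, since the $\epsilon$-dependent coefficients differ from their limits by $O(\epsilon)$ rational terms. The one substantive point is the positive stability of the fast drift matrix $-\vecc{a}_2^\epsilon(t,\vecc{x})$, uniformly in $(t,\vecc{x})$ and in $\epsilon$ small, together with that of its limit $\vecc{U}(t,\vecc{x})$. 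I would verify this by a Schur-complement computation on $(\lambda\vecc{I}+\vecc{a}_2^\epsilon)\vecc{w}=\vecc{0}$: the $\vecc{\beta}^{4,2}$-block is decoupled with spectrum $-\vecc{\gamma}_{4,2}<0$, and eliminating $\vecc{y}^{2,2}$ from the remaining $(\vecc{v},\vecc{y}^{2,2})$-block reduces the characteristic equation to $\det\big(\vecc{I}+\vecc{g}(t,\vecc{x})\tilde{\vecc{\kappa}}_\epsilon(\lambda)\vecc{h}(t,\vecc{x})/\lambda m_0\big)=0$, and to the analogous equation with $\tilde{\vecc{\kappa}}$ in the limit, where $\tilde{\vecc{\kappa}}_\epsilon,\tilde{\vecc{\kappa}}$ are as in the statement. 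The hypothesis that these matrix-valued functions are invertible on $\{\mathrm{Re}\,\lambda>0\}$ excludes eigenvalues in the open right half-plane; near $\lambda=0$ one has $\vecc{I}+\vecc{g}\tilde{\vecc{\kappa}}(\lambda)\vecc{h}/\lambda m_0\sim\vecc{\nu}(t,\vecc{x})/\lambda m_0$, so the assumed invertibility of $\vecc{\nu}(t,\vecc{x})=\tfrac12\vecc{g}\vecc{B}_2\vecc{B}_2^*\vecc{h}$, together with $\vecc{\gamma}_{2,2},\vecc{\gamma}_{4,2}>0$, rules out $\lambda=0$ and marginal eigenvalues on the imaginary axis; uniformity over $(t,\vecc{x})$ then follows from the boundedness in Assumption \ref{bounded} and continuity of the spectrum.

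With the hypotheses in place, Theorem \ref{mainthm} yields the limiting It\^o SDE \eqref{lim_2} directly. The limiting slow drift is obtained by replacing the fast block with its slow-manifold value, i.e.\ the solution of $\vecc{a}_2\vecc{v}+\vecc{b}_2=\vecc{0}$; the inversion of $\vecc{U}$ this requires is the Schur-complement step producing $\vecc{\nu}^{-1}$, and gives the term $\vecc{P}$ of \eqref{theta}. The limiting diffusion coefficient $\vecc{R}$ is the corresponding combination of $\lim\vecc{\sigma}_1^\epsilon$ with the slow-manifold image of $\lim\vecc{\sigma}_2^\epsilon$, which is why the same $d\vecc{W}_t^{(d_4/2)}$ appears in all three blocks. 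The noise-induced drift $\vecc{Q}$ is exactly the correction furnished by Theorem \ref{mainthm}, expressed through $\vecc{H}=\vecc{T}\vecc{U}^{-1}=-(\lim_{\epsilon\to0}\vecc{a}_1^\epsilon)(\lim_{\epsilon\to0}\vecc{a}_2^\epsilon)^{-1}$ and the solution $\vecc{J}$ of the Lyapunov equation $\vecc{U}\vecc{J}+\vecc{J}\vecc{U}^*=\mathrm{diag}(\vecc{0},\vecc{0},\vecc{\gamma}_{4,2}^2)$, the right-hand side being $\lim_{\epsilon\to0}\vecc{\sigma}_2^\epsilon(\vecc{\sigma}_2^\epsilon)^*$. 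The remaining work is bookkeeping: checking that the block structure of $\vecc{P},\vecc{Q},\vecc{R}$ in the statement matches, term by term, the generic output of Theorem \ref{mainthm}, and that the asserted convergence mode ($\sup_{t\in[0,T]}|\vecc{x}^\epsilon_t-\vecc{X}_t|\to0$ in probability) is the one the general theorem provides. I expect the principal obstacle to be the uniform positive-stability verification above — in particular, controlling the $\lambda\to0$ and imaginary-axis regimes of the reduced characteristic function — since everything else reduces to an application of Theorem \ref{mainthm} through the dictionary.
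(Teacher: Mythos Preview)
Your approach is essentially the same as the paper's: both cast \eqref{res_s1}--\eqref{res_s6} into the slow--fast form of Theorem~\ref{mainthm} with the identical slow block $(\vecc{x}^\epsilon,\vecc{y}^{2,1,\epsilon},\vecc{\beta}^{4,1,\epsilon})$ and fast block $(\vecc{v}^\epsilon,\vecc{y}^{2,2,\epsilon},\vecc{\beta}^{4,2,\epsilon})$, identify $\vecc{A}_1=\vecc{T}$, $\vecc{A}_2=-\vecc{U}$, and read off \eqref{lim_2}--\eqref{TU}. Your Schur-complement verification of the Hurwitz condition is in fact more explicit than the paper, which simply cites the method of~\cite{LimWehr_Homog_NonMarkovian}.

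Two small points to tighten. First, the convergence rates are not ``arbitrarily large'': since $\mathcal{M}_i^\epsilon$, $\vecc{\sigma}_i^\epsilon$ differ from their limits by $O(\epsilon)$, the exponents in Assumption~\ref{a5_ch2} are $a_i=c_i=d_i=1$ (and $b_i=\infty$ since the $\vecc{b}_i$ are $\epsilon$-independent), exactly as the paper records. Second, and more substantively, you write that ``the initial-data condition is Assumption~\ref{initialdata2}'', but that assumption only covers $\vecc{x},\vecc{v}$; the fast block's initial condition also contains $\vecc{\beta}_0^{4,2,\epsilon}$, which is Gaussian with covariance $\tfrac{1}{2\epsilon}\vecc{\gamma}_{4,2}^3(\epsilon\vecc{\Gamma}_{4,1}-\vecc{\gamma}_{4,2})^{-2}=O(1/\epsilon)$. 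The paper checks this explicitly and uses the Gaussian moment bound to get $\mathbb{E}[|\epsilon\vecc{v}^\epsilon(0)|^p]=O(\epsilon^{p/2})$, which is precisely the borderline rate $\alpha=p/2$ required by Assumption~\ref{a2_ch2}. You should include this step; without it the hypotheses of Theorem~\ref{mainthm} are not fully verified.
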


\begin{proof}
We  apply Theorem \ref{mainthm} to the SDEs \eqref{res_s1}-\eqref{res_s6}. To this end, we set, in Theorem \ref{mainthm}, $n_1 = n_2 = d+d_2/2+d_4/2$, $k_1 = k_2 = d_4/2$ and
\begin{equation}
\vecc{x}^\epsilon(t) = (\vecc{x}^\epsilon_t, \vecc{y}_t^{2,1,\epsilon},\vecc{\beta}_t^{4,1,\epsilon}), \ 
\vecc{v}^\epsilon(t) = (\vecc{v}^\epsilon_t, \vecc{y}_t^{2,2,\epsilon}, \vecc{\beta}_t^{4,2,\epsilon}) \in \RR^{d+d_2/2+d_4/2}, \end{equation}
\begin{align}
\vecc{a}_1(t, \vecc{x}^\epsilon(t),\epsilon) &=
\begin{bmatrix}
\vecc{I} & \vecc{0} & \vecc{0} \\
\mathcal{M}_1^\epsilon \vecc{h}(t, \vecc{x}^\epsilon_t) & \vecc{0} & \vecc{0} \\
\vecc{0} & \vecc{0} & \vecc{0}
\end{bmatrix} \in \RR^{(d+d_2/2+d_4/2) \times (d+d_2/2+d_4/2)}, \\
\vecc{a}_2(t, \vecc{x}^\epsilon(t),\epsilon) &=
\begin{bmatrix}
\vecc{0} & -\vecc{g}(t, \vecc{x}^\epsilon_t) \vecc{B}_2/m_0 & \vecc{\sigma}(t, \vecc{x}^\epsilon_t)\vecc{B}_4/m_0 \\
\mathcal{M}_2^\epsilon \vecc{h}(t, \vecc{x}^\epsilon_t) & -\vecc{\gamma}_{2,2} & \vecc{0} \\
\vecc{0} & \vecc{0} & -\vecc{\gamma}_{4,2}  \\
\end{bmatrix} \\ 
&\ \ \ \ \in  \RR^{(d+d_2/2+d_4/2) \times (d+d_2/2+d_4/2)}, \nonumber \\
\vecc{b}_1(t, \vecc{x}^\epsilon(t),\epsilon) &= (\vecc{0}, -\vecc{\Gamma}_{2,1} \vecc{y}_t^{2,1,\epsilon}, -\vecc{\Gamma}_{4,1} \vecc{\beta}_t^{4,1,\epsilon}) \in \RR^{d+d_2/2+d_4/2},\\
\vecc{b}_2(t,\vecc{x}^\epsilon(t),\epsilon) &= ((-\vecc{g}(t, \vecc{x}^\epsilon_t)\vecc{B}_2\vecc{y}_t^{2,1,\epsilon} + \vecc{\sigma}(t, \vecc{x}^\epsilon_t) \vecc{B}_4 \vecc{\beta}_t^{4,1,\epsilon}+\vecc{F}_e(t,\vecc{x}^\epsilon_t))/m_0, \nonumber \\
&\ \ \ \ \ \  \vecc{0},\vecc{0}) \in \RR^{d+d_2/2+d_4/2},  \\
\vecc{\sigma}_1(t,\vecc{x}^\epsilon(t),\epsilon) &= [\vecc{0} \ \ \vecc{0} \ \  \vecc{\sigma}_1^\epsilon ]^* \in \RR^{(d+d_2/2+d_4/2)\times d_4/2}, \\
\vecc{\sigma}_2(t, \vecc{x}^\epsilon(t),\epsilon) &= [\vecc{0} \ \ \vecc{0} \ \  \vecc{\sigma}_2^\epsilon ]^* \in \RR^{(d+d_2/2+d_4/2)\times d_4/2}.
\end{align}

The initial conditions are $\vecc{x}^\epsilon(0) = (\vecc{x}, \vecc{0}, \vecc{\beta}_0^{4,1,\epsilon})$ and $\vecc{v}^\epsilon(0) = (\vecc{v}, \vecc{0}, \vecc{\beta}_0^{4,2,\epsilon})$;  both depend on $\epsilon$.

We now verify each of the assumptions of Theorem \ref{mainthm}. Assumption \ref{aexis} clearly holds by our assumptions on the GLE. The assumptions on the coefficients in the SDEs   follow easily from the Assumptions \ref{bounded}-\ref{initialdata} and therefore Assumption \ref{a1_ch2} holds.  

Next, note that $\vecc{\beta}_0^{4,\epsilon} = (\vecc{\beta}_0^{4,1,\epsilon}, \vecc{\beta}_0^{4,2,\epsilon})$ is a random variable normally distributed with mean-zero and covariance:
\begin{equation}
\vecc{M}_4^\epsilon = \begin{bmatrix}
\mathbb{E}[|\vecc{\beta}_0^{4,1,\epsilon}|^2] & \mathbb{E}[ \vecc{\beta}_0^{4,1,\epsilon} (\vecc{\beta}_0^{4,2,\epsilon})^*] \\
 \mathbb{E}[\vecc{\beta}_0^{4,2,\epsilon} (\vecc{\beta}_0^{4,1,\epsilon})^*] & \mathbb{E}[|\vecc{\beta}_0^{4,2,\epsilon}|^2]
\end{bmatrix},
\end{equation}
where
\begin{align}
\mathbb{E}[|\vecc{\beta}_0^{4,1,\epsilon}|^2] &= \frac{1}{2}\vecc{\Gamma}_{4,1} \vecc{\gamma}_{4,2}^2 (\epsilon\vecc{\Gamma}_{4,1}-\vecc{\gamma}_{4,2})^{-2}   = O(1),  \\
\mathbb{E}[\vecc{\beta}_0^{4,1,\epsilon} (\vecc{\beta}_0^{4,2,\epsilon})^*] &= \mathbb{E}[ \vecc{\beta}_0^{4,2,\epsilon} (\vecc{\beta}_0^{4,1,\epsilon})^*]   \nonumber \\
&=  -\vecc{\Gamma}_{4,1} \vecc{\gamma}^3_{4,2} (\epsilon \vecc{\Gamma}_{4,1}+\vecc{\gamma}_{4,2})^{-1}  (\epsilon \vecc{\Gamma}_{4,1}-\vecc{\gamma}_{4,2})^{-2}  = O(1), \\
\mathbb{E}[|\vecc{\beta}_0^{4,2,\epsilon}|^2] &= \frac{1}{2\epsilon}\vecc{\gamma}^3_{4,2} (\epsilon \vecc{\Gamma}_{4,1}-\vecc{\gamma}_{4,2})^{-2}  =   O\left(\frac{1}{\epsilon}\right)
\end{align}
as $\epsilon \to 0$. Using the bound $\mathbb{E}[ |\vecc{z}|^p ]\leq C_p (\mathbb{E}[|\vecc{z}|^2])^{p/2}$, where $\vecc{z}$ is a mean-zero Gaussian random variable, $C_p>0$ is a constant and $p>0$, it is straightforward to see that Assumption \ref{a2_ch2} is satisfied.

 Note that $\vecc{B}_i = \vecc{b}_i$ (for $i=1,2$) by our convention, as the $\vecc{b}_i$ do not depend explicitly on $\epsilon$.
The uniform convergence of $\vecc{a}_i(t, \vecc{x},\epsilon)$, $(\vecc{a}_i)_{\vecc{x}}(t, \vecc{x},\epsilon)$ and $\vecc{\sigma}_i(t, \vecc{x},\epsilon)$ (in $\vecc{x}$) to $\vecc{A}_i(t, \vecc{x})$, $(\vecc{A}_i)_{\vecc{x}}(t, \vecc{x})$ and $\vecc{\Sigma}_i(t, \vecc{x})$ respectively in the limit $\epsilon \to 0$ can be shown easily and, in fact,  we see that $\vecc{A}_1 = \vecc{T}$, $\vecc{A}_2 = -\vecc{U}$, where $\vecc{T}$ and $\vecc{U}$ are given in the  theorem,
\begin{align}
\vecc{\Sigma}_1 &= [\vecc{0} \ \ \vecc{0} \ \ -\vecc{\Gamma}_{4,1}  ]^*, \\
\vecc{\Sigma}_2 &= [\vecc{0} \ \ \vecc{0} \ \  \vecc{\gamma}_{4,2} ]^*,
\end{align}
and $a_1 = a_2 = c_1 = c_2 = d_1 = d_2 = 1$, $b_1=b_2 = \infty$, where the $a_i$, $b_i$, $c_i$ and $d_i$ are from Assumption \ref{a5_ch2} of Theorem \ref{mainthm}.  Therefore, the first part of Assumption \ref{a5_ch2} is satisfied.

 It remains to verify the (uniform) Hurwitz stability of $\vecc{a}_2$ and $\vecc{A}_2$ (i.e. Assumption \ref{a0_ch2} and the last part of Assumption \ref{a5_ch2}). This can be done using the methods of the proof of Theorem 2  in  \cite{LimWehr_Homog_NonMarkovian}  and we omit the details here. The results then follow by applying Theorem \ref{mainthm} and \eqref{lim_2}-\eqref{TU} follow from  matrix algebraic calculations.
\end{proof}

It is clear from Theorem \ref{compl} that the homogenized position process is a component of the (slow) Markov process $\vecc{\theta}_t$.  In general, it is not a Markov process itself.  Also, the components of $\vecc{\theta}_t$ are coupled in a non-trivial way.  We emphasize that one could use Theorem \ref{mainthm} to study cases in which the different time scales are taken to zero in a different manner.

The limiting SDE for  the position process may simplify under additional assumptions.  In particular, in the one-dimensional case, i.e. with $d=1$ (or when all the matrix-valued coefficients and the parameters are diagonal in the multi-dimensional case), the formula for the limiting SDEs becomes more explicit. This special case has been studied in an earlier section in the context of the models  (M1) and (M2) from Example \ref{ex_mot}.

\section{Conclusions and Final Remarks}
\label{sect_conclusions}
We have explored various homogenization schemes for a wide class of generalized Langevin equations. The relevance of the studied limit problems in the context of usual and anomalous diffusion of a particle in a heat bath. Our explorations here open up a wide range of possibilities and provide insights in the model reduction of and effective drifts in generalized  Langevin systems.

The following summarizes the main conclusions of the paper:

\begin{itemize}
\item[(i)] (stochastic modeling point of view) Homogenization schemes producing effective SDEs, driven by white noise, should be the exception rather than the rule.  This is particularly important if one seeks to reduce the original model, retain its non-trivial features; 
\item[(ii)] (complexity reduction point of view) There is a trade-off in simplifying  GLE models with state-dependent coefficients: the greater the level of model reduction, the more complicated the correction drift terms, entering the homogenized model;
\item[(iii)] (statistical physics point of view) Homogenized equation obtained could be  further simplified, i.e. number of effective equations could be reduced and the drift terms become simplified, when certain special conditions  such as a fluctuation-dissipation theorem holds. 
\end{itemize}

We conclude this paper by mentioning a very interesting future direction. As mentioned in Remark \ref{rem_inf_dim}, one could extend the current GLE studies to the infinite-dimensional setting so that a larger class of memory functions and covariance functions can be covered.  To this end, one can define the noise process as an appropriate linear functional of a Hilbert space valued process solving a stochastic evolution equation \cite{da2014stochastic,da1996ergodicity}.  This way, one can approach a class of GLEs, driven by noises having a completely monotone covariance function.  
This large class of functions contains covariances with power decay and thus the method outlined above can be viewed as an extension of those considered in \cite{glatt2018generalized, 2018arXiv180409682N}, where the memory function and covariance of the driving noise  are represented as suitable  infinite series with a power-law tail
(these works are,  to our knowledge, among the few works that study rigorously GLEs with a power-law memory).
This approach to systems driven by strongly correlated noise, which is our future project, is expected to involve substantial technical difficulties.  More importantly, one can expect that power decay of correlations leads to new phenomena, altering the nature of noise-induced drift.   

\appendix

\section{Homogenization for a Class of SDEs with State-Dependent Coefficients}
\label{sect_generalhomogthm}



In this section, we study homogenization for a general class of perturbed SDEs with state-dependent coefficients. Homogenization  of differential equations has been extensively studied, from the seminal works of Kurtz \cite{kurtz1973limit}, Papanicolaou \cite{papanicolaou1976some} and Khasminksy \cite{has1966stochastic} to the more recent works \cite{g2005analysis,Pavliotis,hottovy2015smoluchowski,herzog2016small,birrell2017,birrell2017small,chevyrev2016multiscale}.
Here we are going to present yet another variant of homogenization result that will be needed for studying homogenization for our GLEs (see the last paragraph in Section \ref{goaletc} for comments on novelty of this result). 


Let $n_1$, $n_2$, $k_1$, $k_2$ be positive integers. Let $\epsilon \in (0,\epsilon_0] =: \mathcal{E}$ be a small parameter and $\vecc{x}^{\epsilon}(t) \in \RR^{n_1}$, $\vecc{v}^{\epsilon}(t) \in \RR^{n_2}$ for $t \in [0,T]$, where $\epsilon_0>0$ and $T>0$ are finite constants. Let $\vecc{W}^{(k_1)}$ and $\vecc{W}^{(k_2)}$ denote independent Wiener processes, which are $\RR^{k_1}$-valued and $\RR^{k_2}$-valued respectively, on a filtered probability space $(\Omega, \mathcal{F}, \mathcal{F}_t, \mathbb{P})$ satisfying the usual conditions \cite{karatzas2012Brownian}.

With respect to the standard bases of  $\RR^{n_1}$ and $\RR^{n_2}$ respectively, we write:
\begin{align}
\vecc{x}^{\epsilon}(t) &= ([x^{\epsilon}]_1(t),[x^{\epsilon}]_2(t),\dots, [x^{\epsilon}]_{n_1}(t)),  \\
\vecc{v}^{\epsilon}(t) &= ([v^{\epsilon}]_1(t),[v^{\epsilon}]_2(t),\dots, [v^{\epsilon}]_{n_2}(t)).
\end{align}

We consider the following family of perturbed SDE systems\footnote{Note that here the variables $\vecc{x}^\epsilon(t)$ and $\vecc{v}^\epsilon(t)$ are general and they do not necessarily represent position and velocity variables of a physical system.} for 

\noindent $(\vecc{x}^\epsilon(t), \vecc{v}^\epsilon(t)) \in \RR^{n_1+n_2}$:
\begin{align}
d\vecc{x}^{\epsilon}(t) &= \vecc{a}_{1}(t,\vecc{x}^{\epsilon}(t), \epsilon) \vecc{v}^{\epsilon}(t) dt + \vecc{b}_{1}(t,\vecc{x}^{\epsilon}(t),\epsilon) dt + \vecc{\sigma}_{1}(t,\vecc{x}^{\epsilon}(t),\epsilon) d\vecc{W}^{(k_1)}(t), \label{sde1} \\
\epsilon d\vecc{v}^{\epsilon}(t) &= \vecc{a}_{2}(t,\vecc{x}^{\epsilon}(t),\epsilon) \vecc{v}^{\epsilon}(t) dt + \vecc{b}_{2}(t,\vecc{x}^{\epsilon}(t),\epsilon) dt + \vecc{\sigma}_2(t,\vecc{x}^{\epsilon}(t), \epsilon) d\vecc{W}^{(k_2)}(t), \label{sde2}
\end{align}
with the initial conditions, $\vecc{x}^{\epsilon}(0) = \vecc{x}^\epsilon$ and $\vecc{v}^{\epsilon}(0) = \vecc{v}^\epsilon$, where $\vecc{x}^\epsilon$ and $\vecc{v}^\epsilon$ are random variables that possibly depend on $\epsilon$. In the SDEs \eqref{sde1}-\eqref{sde2}, the coefficients $\vecc{a}_1: \RR^+ \times \RR^{n_1} \times \mathcal{E} \to  \RR^{n_1 \times n_2}$, $\vecc{a}_2 : \RR^+ \times \RR^{n_1} \times \mathcal{E} \to   \RR^{n_2 \times n_2}$, $\vecc{\sigma}_2 : \RR^+ \times \RR^{n_1} \times \mathcal{E} \to \RR^{n_2 \times k_2}$ are non-zero matrix-valued functions, whereas $\vecc{b}_1 : \RR^+ \times  \RR^{n_1} \times  \mathcal{E} \to \RR^{n_1}$, $\vecc{b}_2 : \RR^+ \times \RR^{n_1} \times  \mathcal{E} \to \RR^{n_2}$,
$\vecc{\sigma}_1 : \RR^+ \times \RR^{n_1} \times  \mathcal{E} \to \RR^{n_1 \times k_1}$ are matrix-valued or vector-valued functions, which may depend on $\vecc{x}^{\epsilon}$,  as well as on $t$ and $\epsilon$ explicitly, as indicated by the parenthesis $(t, \vecc{x}^{\epsilon}(t), \epsilon)$.  In the case where the coefficients do not depend on  $\epsilon$ explicitly, we will denote them by the corresponding capital letters (for instance, if  $\vecc{a}_i(t,\vecc{x},\epsilon)=\vecc{a}_i(t,\vecc{x})$, then $\vecc{a}_i(t,\vecc{x}) := \vecc{A}_i(t,\vecc{x})$ etc.).


We are interested in the limit as $\epsilon \to 0$ of the SDEs \eqref{sde1}-\eqref{sde2}, in particular the limiting behavior of the  process $\vecc{x}^{\epsilon}(t)$, under  appropriate assumptions\footnote{We forewarn the readers that our assumptions can be relaxed in various directions (see later remarks) but we will not pursue these generalizations here.}  on the coefficients. In this section, we present a homogenization theorem that studies this limit and delay its proof and applications to later sections.\\




We make the following assumptions concerning the SDEs \eqref{sde1}-\eqref{sde2} and \eqref{mainlimitingeqn}.

\begin{ass} \label{aexis} The global solutions, defined on $[0,T]$, to the pre-limit SDEs \eqref{sde1}-\eqref{sde2} and to the limiting SDE \eqref{mainlimitingeqn} a.s. exist and are unique for all $\epsilon \in \mathcal{E} $ (i.e. there are no explosions).
\end{ass}

\begin{ass} \label{a0_ch2} The matrix-valued functions 
$$\{ -\vecc{a}_2(t,\vecc{y}, \epsilon); t \in [0,T], \vecc{y} \in \RR^{n_1}, \epsilon \in \mathcal{E} \}$$ are {\it uniformly positive stable}, i.e. all real parts of the eigenvalues of $-\vecc{a}_2(t, \vecc{y},\epsilon)$ are bounded from below, uniformly in $t$, $\vecc{y}$ and $\epsilon$, by a positive constant (or, equivalently, the matrix-valued functions $\{\vecc{a}_2(t, \vecc{y},\epsilon);  t \in [0,T], \vecc{y} \in \RR^{n_1}, \epsilon \in \mathcal{E} \}$ are {\it uniformly Hurwitz stable}). They are $O(1)$ as $\epsilon \to 0$ (see Assumption \ref{a5_ch2}). 
\end{ass}







\begin{ass} \label{a1_ch2} For $t \in [0,T]$, $\vecc{y} \in \RR^{n_1}$, $\epsilon \in  \mathcal{E}$, and $i=1,2$, the functions $\vecc{b}_i(t,\vecc{y},\epsilon)$ and
$\vecc{\sigma}_i(t,\vecc{y},\epsilon)$ are continuous and bounded in $t$ and $\vecc{y}$, and Lipschitz in $\vecc{y}$, whereas the functions $\vecc{a}_i(t,\vecc{y},\epsilon)$ and $(\vecc{a}_i)_{\vecc{y}}(t,\vecc{y},\epsilon)$ are continuous in $t$, continuously differentiable in $\vecc{y}$, bounded in $t$ and $\vecc{y}$, and Lipschitz in $\vecc{y}$.
Moreover, the  functions $(\vecc{a}_i)_{\vecc{y} \vecc{y}}(t,\vecc{y},\epsilon)$ ($i=1,2$) are bounded for every $t \in [0,T]$, $\vecc{y} \in \RR^{n_1}$ and $\epsilon \in  \mathcal{E}$.


We assume that the (global) Lipschitz constants are bounded by $L(\epsilon)$, where $L(\epsilon)=O(1)$  as $\epsilon \to 0$, i.e. for every $t \in [0,T]$, $\vecc{x}$, $\vecc{y} \in \RR^{n_1}$,
\begin{align}
&\max\bigg\{\|\vecc{a}_i(t, \vecc{x},\epsilon)-\vecc{a}_i(t,\vecc{y},\epsilon)\|,\|(\vecc{a}_i)_{\vecc{x}}(t,\vecc{x},\epsilon)-(\vecc{a}_i)_{\vecc{x}}(t,\vecc{y},\epsilon)\|,  \nonumber \\
&\hspace{1.1cm} |\vecc{b}_i(t,\vecc{x},\epsilon)-\vecc{b}_i(t,\vecc{y},\epsilon)|,
\|\vecc{\sigma}_i(t,\vecc{x},\epsilon)-\vecc{\sigma}_i(t,\vecc{y},\epsilon)\|; \  i=1,2\bigg\} \nonumber \\ 
&\leq L(\epsilon)|\vecc{x}-\vecc{y}|.
\end{align}
\end{ass}


\begin{ass} \label{a2_ch2} The initial condition $\vecc{x}^\epsilon_0 = \vecc{x}^\epsilon \in \RR^{n_1}$ is an $\mathcal{F}_0$-measurable random variable that may depend on $\epsilon$, and we assume that $\mathbb{E}[|\vecc{x}^\epsilon|^p] = O(1)$ as $\epsilon \to 0$ for all $p>0$. Also, $\vecc{x}^\epsilon$ converges, in the limit as $\epsilon \to 0$, to a random variable $\vecc{x}$ as follows:
$\mathbb{E}\left[|\vecc{x}^\epsilon - \vecc{x}|^p \right]  = O(\epsilon^{p r_0})$, where $r_0 > 1/2$ is a constant, as $\epsilon \to 0$.       The initial condition $\vecc{v}^\epsilon_0 = \vecc{v}^\epsilon \in \RR^{n_2}$ is an $\mathcal{F}_0$-measurable random variable that may depend on $\epsilon$, and we assume that for every $p>0$, $\mathbb{E}[ |\epsilon \vecc{v}^\epsilon|^p] = O(\epsilon^\alpha)$ as $\epsilon \to 0$, for some $\alpha \geq p/2$.



\end{ass}

\begin{ass} \label{a5_ch2} For $i=1,2$, $t \in [0,T]$, and every $\vecc{x} \in \RR^{n_1}$, each of the matrix or vector entries of the (non-zero) functions $\vecc{a}_i(t,\vecc{x},\epsilon)$, $(\vecc{a}_i)_{\vecc{x}}(t,\vecc{x},\epsilon)$, $\vecc{b}_i(t,\vecc{x},\epsilon)$ and
$\vecc{\sigma}_i(t,\vecc{x},\epsilon)$, converges, uniformly in $\vecc{x}$, to a unique non-zero element, in the limit as $\epsilon \to 0$. Their limits are denoted by
$\vecc{A}_i(t,\vecc{x})$,
$(\vecc{A}_i)_{\vecc{x}}(t,\vecc{x})$,
$\vecc{B}_i(t,\vecc{x})$ and
$\vecc{\Sigma}_i(t,\vecc{x})$ respectively.
Their rate of convergence is assumed to satisfy the following power law bounds: for every $t \in [0,T]$, $\vecc{x} \in \RR^{n_1}$ and $i=1,2$,
\begin{align}
\|\vecc{a}_i(t,\vecc{x},\epsilon)-\vecc{A}_i(t,\vecc{x}) \| &\leq \alpha_i(\epsilon), \\
|\vecc{b}_i(t,\vecc{x},\epsilon)-\vecc{B}_i(t,\vecc{x}) | &\leq \beta_i(\epsilon), \\ \|\vecc{\sigma}_i(t,\vecc{x},\epsilon)-\vecc{\Sigma}_i(t,\vecc{x}) \| &\leq \gamma_i(\epsilon),\\
\| (\vecc{a}_i)_{\vecc{x}}(t,\vecc{x},\epsilon)-(\vecc{A}_i )_{\vecc{x}}(t,\vecc{x}) \| &\leq \theta_i(\epsilon)
\end{align}
where $\alpha_i(\epsilon) =  O(\epsilon^{a_i})$, $\beta_i(\epsilon) = O(\epsilon^{b_i})$, $\gamma_i(\epsilon) = O(\epsilon^{c_i})$ and $\theta_i(\epsilon) =  O(\epsilon^{d_i})$, as $\epsilon \to 0$, for some positive exponents $a_i$, $b_i$, $c_i$ and $d_i$. Moreover, we assume that $\vecc{A}_2(t,\vecc{x})$ is Hurwitz stable for every $t$ and $\vecc{x}$. \\

\noindent {\bf Convention.}  In the case where the coefficients do not show explicit dependence on $\epsilon$ or the case when any of the coefficients $\vecc{b}_1$, $\vecc{b}_2$ and $\vecc{\sigma}_1$ is zero, we set the exponent, describing the  corresponding rate of convergence, to infinity. For instance, if $\vecc{a}_i(t,\vecc{x},\epsilon) = \vecc{A}_i(t,\vecc{x})$, we set $a_i  = \infty$. Meanwhile, if $\vecc{\sigma}_1 = \vecc{0}$, we set $c_1 = \infty$, etc..
\end{ass}




We now state our homogenization theorem.

\begin{thm} \label{mainthm}
Suppose that the family of SDE systems $\eqref{sde1}$-$\eqref{sde2}$ satisfies Assumption \ref{aexis}-\ref{a5_ch2}. Let $(\vecc{x}^{\epsilon}(t), \vecc{v}^{\epsilon}(t)) \in \RR^{n_1} \times \RR^{n_2}$ be their solutions, with the initial conditions $(\vecc{x}^\epsilon, \vecc{v}^\epsilon)$. Let $\vecc{X}(t) \in \RR^{n_1}$ be the solution to the following It\^o SDE with the initial position $\vecc{X}(0) = \vecc{x}$:
\begin{align}
d\vecc{X}(t) &= [\vecc{B}_1(t,\vecc{X}(t))-\vecc{A}_1(t,\vecc{X}(t))\vecc{A}_2^{-1}(t,\vecc{X}(t))\vecc{B}_2(t,\vecc{X}(t))] dt \nonumber \\
&\ \ \ \  + \vecc{S}(t,\vecc{X}(t)) dt  + \vecc{\Sigma}_1(t,\vecc{X}(t)) d\vecc{W}^{(k_1)}(t) \nonumber \\
&\ \ \ \  - \vecc{A}_1(t,\vecc{X}(t)) \vecc{A}_2^{-1}(t,\vecc{X}(t))\vecc{\Sigma}_2(t,\vecc{X}(t)) d\vecc{W}^{(k_2)}(t), \label{mainlimitingeqn}
\end{align}
where  $\vecc{S}(t,\vecc{X}(t))$ is the {\it noise-induced drift vector} whose $i$th component is given by
\begin{equation}
[S]_{i}(t,\vecc{X}) = -\frac{\partial}{\partial X_{l}} \bigg([A_1 A_2^{-1}]_{i,j}(t,\vecc{X}) \bigg) \cdot [A_1]_{l,k}(t,\vecc{X}) \cdot [J]_{j,k}(t,\vecc{X}), 
\end{equation}
where $i,l=1,\dots,n_1, \  j,k=1,\dots,n_2$, or in index-free notation, 
\begin{equation} \label{indexfree}
\vecc{S} = \vecc{A}_1 \vecc{A}_2^{-1} \vecc{\nabla}\cdot (\vecc{J}\vecc{A}_1^*) -\vecc{\nabla} \cdot (\vecc{A}_1 \vecc{A}_2^{-1} \vecc{J} \vecc{A}_1^*) , 
\end{equation}
and $\vecc{J} \in \RR^{n_2 \times n_2}$ is the unique solution to the Lyapunov equation:
\begin{equation} \label{lyp}
\vecc{J} \vecc{A}_2^{*} + \vecc{A}_2 \vecc{J} = -\vecc{\Sigma}_2  \vecc{\Sigma}_2^{*}.
\end{equation}
Then the process $\vecc{x}^{\epsilon}(t)$ converges, as $\epsilon \to 0$, to the solution $\vecc{X}(t)$, of the It\^o SDE \eqref{mainlimitingeqn}, in the following sense: for all finite $T > 0$, $p > 0$, there exists a positive random variable $\epsilon_1$ such that
\begin{equation} \label{mainconv}
\mathbb{E}\left[\sup_{t \in [0,T]} |\vecc{x}^{\epsilon}(t) - \vecc{X}(t)|^p; \epsilon \leq \epsilon_1 \right] = O(\epsilon^{r}), \end{equation}
in the limit as $\epsilon \to 0$,
with $r>0$ is the rate determined to be:
\begin{equation} \label{rate_mainresult}
    r=
    \begin{cases}
      \beta \ \text{ for all } 0 < \beta < \frac{p}{2}, & \text{ if}\  a_i, b_i, c_i, d_i \geq \frac{1}{2} \text{ for } i=1,2, \\
      p \cdot \min(a_i, b_i, c_i, d_i; i=1,2) , & \text{ otherwise},
    \end{cases}
  \end{equation}
where the $a_i$, $b_i$, $c_i$, $d_i$ ($i=1,2$) are the positive constants from Assumption \ref{a5_ch2}. In particular, for all finite $T>0$,
\begin{equation}
 \sup_{t \in [0,T]} |\vecc{x}^\epsilon(t) - \vecc{X}(t)| \to 0, \end{equation}
in probability, in the limit as $\epsilon \to 0$.
\end{thm}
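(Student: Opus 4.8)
The plan is to follow the strategy of Hottovy--Volpe--Wehr \cite{hottovy2015smoluchowski} and its refinements \cite{birrell2017small,2017BirrellLatest}, but to keep explicit track of how the extra $\epsilon$-dependence of the coefficients $\vecc{a}_i,\vecc{b}_i,\vecc{\sigma}_i$ enters every estimate. The whole argument is organized as a Gronwall loop for $\vecc{e}^\epsilon(t):=\vecc{x}^\epsilon(t)-\vecc{X}(t)$: we derive an integral identity for $\vecc{e}^\epsilon$ in which each contribution is either (i) Lipschitz in $\vecc{e}^\epsilon$ up to an additive error that is $O(\epsilon^{a_i}),O(\epsilon^{b_i}),O(\epsilon^{c_i}),O(\epsilon^{d_i})$ (coming from Assumption \ref{a5_ch2}) or $O(\sqrt{\epsilon})$ (the intrinsic homogenization error), or (ii) a martingale whose quadratic variation is of the corresponding order; Gronwall's inequality together with the Burkholder--Davis--Gundy inequality then yields \eqref{mainconv} with the rate \eqref{rate_mainresult}, the two cases reflecting whether the coefficient-approximation error or the intrinsic $O(\sqrt\epsilon)$ error dominates.

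First I would establish a priori estimates, uniformly in $\epsilon$: $\mathbb{E}[\sup_{[0,T]}|\vecc{x}^\epsilon(t)|^p]=O(1)$; the fast-variable bound $\mathbb{E}[|\vecc{v}^\epsilon(t)|^p]=O(\epsilon^{-p/2})$ with the initial layer controlled by Assumption \ref{a2_ch2}; and $\mathbb{E}[\sup_{[0,T]}|\int_0^t h(s)\vecc{v}^\epsilon_s\,ds|^p]=O(1)$ for bounded adapted $h$. All of these follow from Gronwall estimates using the uniform Hurwitz stability of $\vecc{a}_2$ (Assumptions \ref{a0_ch2}, \ref{a5_ch2}). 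The key algebraic step is then to substitute, inside $\int_0^t \vecc{a}_1(s,\vecc{x}^\epsilon_s,\epsilon)\vecc{v}^\epsilon_s\,ds$, the identity $\vecc{v}^\epsilon_s\,ds=\vecc{a}_2^{-1}(s,\vecc{x}^\epsilon_s,\epsilon)\big(\epsilon\,d\vecc{v}^\epsilon_s-\vecc{b}_2\,ds-\vecc{\sigma}_2\,d\vecc{W}^{(k_2)}_s\big)$, which is just the rearranged equation \eqref{sde2}. This immediately produces the drift $-\int_0^t\vecc{A}_1\vecc{A}_2^{-1}\vecc{B}_2\,ds$ and the martingale $-\int_0^t\vecc{A}_1\vecc{A}_2^{-1}\vecc{\Sigma}_2\,d\vecc{W}^{(k_2)}$ of \eqref{mainlimitingeqn} (after replacing $\vecc{a}_i,\vecc{b}_i,\vecc{\sigma}_i$ by their limits, with the rates of Assumption \ref{a5_ch2}), and leaves the single nontrivial term $I^\epsilon:=\epsilon\int_0^t\vecc{a}_1\vecc{a}_2^{-1}(s,\vecc{x}^\epsilon_s,\epsilon)\,d\vecc{v}^\epsilon_s$.

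I would then apply the It\^o product rule to $\epsilon\,\vecc{a}_1\vecc{a}_2^{-1}(s,\vecc{x}^\epsilon_s,\epsilon)\vecc{v}^\epsilon_s$: the boundary terms are $O(\sqrt\epsilon)$ by the moment bounds and Assumption \ref{a2_ch2}; the cross-variation term vanishes identically because the martingale part of $\vecc{a}_1\vecc{a}_2^{-1}(s,\vecc{x}^\epsilon_s,\epsilon)$ is driven only by $\vecc{W}^{(k_1)}$ while that of $\vecc{v}^\epsilon$ is driven only by the independent process $\vecc{W}^{(k_2)}$; and expanding $d(\vecc{a}_1\vecc{a}_2^{-1})$ by It\^o's formula with $d\vecc{x}^\epsilon_s=\vecc{a}_1\vecc{v}^\epsilon_s\,ds+\vecc{b}_1\,ds+\vecc{\sigma}_1\,d\vecc{W}^{(k_1)}_s$, every resulting term, once multiplied by $\epsilon\vecc{v}^\epsilon_s$, is $O(\epsilon)$ (via $\mathbb{E}[\sup|\int h\vecc{v}^\epsilon|^p]=O(1)$) or $O(\sqrt\epsilon)$ (via BDG) except the single term proportional to $\epsilon\,\vecc{v}^\epsilon_s(\vecc{v}^\epsilon_s)^*$, which yields exactly the structure of the noise-induced drift $\vecc{S}$ in \eqref{indexfree}. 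It then remains to prove the quantitative ergodic-averaging statement $\int_0^t f(s,\vecc{x}^\epsilon_s)\,\epsilon\,\vecc{v}^\epsilon_s(\vecc{v}^\epsilon_s)^*\,ds\longrightarrow\int_0^t f(s,\vecc{X}_s)\,\vecc{J}(s,\vecc{X}_s)\,ds$ (in $\sup$-$L^p$, with a rate) for bounded Lipschitz $f$, where $\vecc{J}(t,\vecc{x})$ solves \eqref{lyp} with $\vecc{A}_2,\vecc{\Sigma}_2$ evaluated at $(t,\vecc{x})$. For this I would apply It\^o's formula to $\vecc{v}^\epsilon_s(\vecc{v}^\epsilon_s)^*$, obtaining $\vecc{a}_2(\epsilon\vecc{v}^\epsilon(\vecc{v}^\epsilon)^*)+(\epsilon\vecc{v}^\epsilon(\vecc{v}^\epsilon)^*)\vecc{a}_2^*+\vecc{\sigma}_2\vecc{\sigma}_2^*=\epsilon^2\frac{d}{ds}\big(\vecc{v}^\epsilon(\vecc{v}^\epsilon)^*\big)-\epsilon(\cdots)$, integrate this against $f$, dispatch the right-hand side by another integration by parts (it is $O(\sqrt\epsilon)$), and invert the Lyapunov operator $\vecc{L}\mapsto\vecc{a}_2\vecc{L}+\vecc{L}\vecc{a}_2^*$, whose inverse is bounded uniformly in $t,\vecc{x},\epsilon$ by the uniform Hurwitz property; finally one replaces $\vecc{a}_2,\vecc{\sigma}_2$ by $\vecc{A}_2,\vecc{\Sigma}_2$ and $\vecc{x}^\epsilon$ by $\vecc{X}$ using the Lipschitz continuity of $\vecc{J}(t,\cdot)$, which follows from the representation $\vecc{J}(t,\vecc{x})=\int_0^\infty e^{\vecc{A}_2(t,\vecc{x})u}\vecc{\Sigma}_2\vecc{\Sigma}_2^*(t,\vecc{x})\,e^{\vecc{A}_2^*(t,\vecc{x})u}\,du$. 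Since this last replacement reintroduces $\vecc{e}^\epsilon$, it is performed inside the Gronwall loop.

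The main obstacle is precisely this last ergodic-averaging estimate: it requires controlling the fast \emph{quadratic} fluctuations of $\vecc{v}^\epsilon$ quantitatively and uniformly in $\epsilon$, which is exactly where the new $\epsilon$-dependence of $\vecc{a}_2,\vecc{\sigma}_2$ could spoil the spectral bounds on the Lyapunov operator --- this is why Assumptions \ref{a0_ch2} and \ref{a5_ch2} demand uniform Hurwitz stability together with power-law convergence rates. Everything else --- the a priori bounds, the substitution reduction, the It\^o-product bookkeeping, and the concluding Gronwall/BDG step that produces \eqref{rate_mainresult} --- is routine once this averaging lemma is in place, and follows the scheme already developed in \cite{hottovy2015smoluchowski,birrell2017small,2017BirrellLatest} for $\epsilon$-independent coefficients.
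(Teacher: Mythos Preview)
Your proposal is correct and follows essentially the same route as the paper: the same substitution $\vecc{v}^\epsilon ds=\vecc{a}_2^{-1}(\epsilon\,d\vecc{v}^\epsilon-\vecc{b}_2\,ds-\vecc{\sigma}_2\,d\vecc{W}^{(k_2)})$, the same integration by parts on $\epsilon\int\vecc{a}_1\vecc{a}_2^{-1}\,d\vecc{v}^\epsilon$, the same It\^o/Lyapunov treatment of $\epsilon\vecc{v}^\epsilon(\vecc{v}^\epsilon)^*$ to extract $\vecc{J}$, and the same Gronwall/BDG closure with the $\epsilon$-dependence of the coefficients handled throughout by triangle-inequality comparison against the limiting $\vecc{A}_i,\vecc{B}_i,\vecc{\Sigma}_i$. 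The one point you underplay is the a priori \emph{pathwise} bound $\mathbb{E}\big[\sup_{[0,T]}|\epsilon\vecc{v}^\epsilon|^p\big]=O(\epsilon^\beta)$ for $\beta<p/2$: the paper does not get this from a plain Gronwall argument but from an exponential-decay estimate on the random fundamental matrix of $\dot\Phi=\epsilon^{-1}\vecc{a}_2(t,\vecc{x}^\epsilon_t,\epsilon)\Phi$ (Lemma \ref{expb}) combined with a time-slicing/BDG trick for the stochastic convolution (Lemma \ref{sib}, Proposition \ref{mom_bound}), and it is precisely this step---since $\vecc{a}_2$ is evaluated along the random path $\vecc{x}^\epsilon_t$---that forces the random threshold $\epsilon_1(\omega)$ appearing in \eqref{mainconv}.
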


\begin{rem} \label{warn}
With more work and additional assumptions, one could prove the statements in Assumption \ref{aexis} from Assumption \ref{a0_ch2}-\ref{a5_ch2}. However, we choose to incorporate such existence and uniquess results into our assumptions and work with the assumptions as stated above. Moreover, as we have forewarned the readers, our assumptions can be relaxed in various directions at the cost of more technicalities. For instance, the boundedness assumption on the coefficients of the SDEs may be removed to obtain still a pathwise convergence result by adapting the techniques in \cite{herzog2016small} -- see also analogous remarks in Remark 5 in \cite{LimWehr_Homog_NonMarkovian}. However, we choose not to pursue the above technical details in this already lengthy paper.  
\end{rem}

\section{Proof of Theorem \ref{mainthm}} \label{proof_ch2}

Proof of Theorem \ref{mainthm} uses techniques developed in earlier works \cite{hottovy2015smoluchowski, 2017BirrellLatest,LimWehr_Homog_NonMarkovian}, but here  one needs to additionally take into account the $\epsilon$-dependence of the coefficients in the SDEs \eqref{sde1}-\eqref{sde2}. As a preparation for the proof, we  need a few lemmas and propositions.

We start from an elementary calculus result.

\begin{lem} \label{lipzlemma}
For $i=1,\dots,N$, let $\vecc{f}_i(\vecc{y},\epsilon): \RR^n \times (0,\infty) \to \RR^{m_i \times n}$ be bounded and globally Lipschitz in $\vecc{y}$  for every $\epsilon > 0$, with a Lipschitz constant that is bounded as $\epsilon \to 0$, i.e. for every $\vecc{y}, \vecc{z} \in \RR^{n}$, there exists a constant $M_i(\epsilon)>0$ such that \begin{equation}
\|\vecc{f}_i(\vecc{y},\epsilon)-\vecc{f}_i(\vecc{z},\epsilon)\| \leq M_i(\epsilon)|\vecc{y}-\vecc{z}|,
\end{equation}
where $M_i(\epsilon)=O(1)$ as $\epsilon \to 0$.
\begin{itemize}
\item[(i)] Suppose that for each $i$ and $\vecc{y} \in \RR^{n}$, there exists a unique bounded $\vecc{F}_i(\vecc{y}):\RR^n \to \RR^{m_i \times n}$ and a constant $C_i>0$ such that  $\|\vecc{f}_i(\vecc{y},\epsilon) - \vecc{F}_i(\vecc{y})\| \leq C_i \epsilon^{r_i}$, for some positive constant $r_i$, as $\epsilon \to 0$ (i.e. the left-hand side is of order $O(\epsilon^{r_i})$ as $\epsilon \to 0$). Then there exists  constants $D$, $K_1, \dots, K_N >0$, such that
\begin{align} \label{boundlip}
\bigg\|\prod_{i=1}^{N} \vecc{f}_i(\vecc{y},\epsilon)-\prod_{i=1}^N \vecc{F}_i(\vecc{y})\bigg\| &\leq K_1 \epsilon^{r_1} + \dots + K_N \epsilon^{r_N} \leq D \epsilon^{\min(r_1, \dots, r_N)} \\ 
&= O(\epsilon^{\min(r_1, \dots, r_N)}),
\end{align}
as $\epsilon \to 0$. If, in addition, $n=m_1$, $\vecc{f}_1(\vecc{y},\epsilon)$ and $\vecc{F}_1(\vecc{y})$ are invertible for every $\vecc{y} \in \RR^n$ and $\epsilon > 0$, then $\|\vecc{f}_1^{-1}(\vecc{y},\epsilon)-\vecc{F}_1^{-1}(\vecc{y})\| = O(\epsilon^{r_1})$ as $\epsilon \to 0$.
\item[(ii)] Let $c_i \in \RR$, $i=1,\dots,N$. For every $\epsilon > 0$ and $\vecc{y} \in \RR^n$, $\sum_{i=1}^{N} c_i \vecc{f}_i(\vecc{y},\epsilon)$ and $\prod_{i=1}^N c_i \vecc{f}_i(\vecc{y},\epsilon)$ are globally Lipschitz with a Lipschitz constant that is $O(1)$  as $\epsilon \to 0$. Moreover, if $m_1=n$ and for every $\epsilon>0$, $\vecc{y} \in \RR^n$,  $\vecc{f}_1(\vecc{y},\epsilon)$ is invertible, then for every $\epsilon > 0$, $\vecc{y} \in \RR^n$, $\vecc{f}^{-1}_1(\vecc{y},\epsilon)$ is globally Lipschitz in $\vecc{y}$ with a Lipschitz constant that is $O(1)$ as $\epsilon \to 0$.
\end{itemize}
\end{lem}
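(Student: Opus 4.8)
The plan is to prove the two parts directly from the definitions, using only elementary properties of the operator norm, Lipschitz constants, and matrix inverses. Throughout I will exploit that all the maps $\vecc{f}_i(\vecc{y},\epsilon)$ are bounded uniformly in $\epsilon$ (as $\epsilon \to 0$), so products of such maps are again bounded, and products of telescoping differences can be controlled factor by factor.

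For part (i), I would first establish the product estimate \eqref{boundlip} by a telescoping decomposition. Writing
\begin{equation}
\prod_{i=1}^N \vecc{f}_i(\vecc{y},\epsilon) - \prod_{i=1}^N \vecc{F}_i(\vecc{y}) = \sum_{j=1}^N \left( \prod_{i<j} \vecc{f}_i(\vecc{y},\epsilon) \right) \bigl( \vecc{f}_j(\vecc{y},\epsilon) - \vecc{F}_j(\vecc{y}) \bigr) \left( \prod_{i>j} \vecc{F}_i(\vecc{y}) \right),
\end{equation}
applying submultiplicativity of $\|\cdot\|$, the uniform boundedness of each $\vecc{f}_i$ (for $\epsilon$ small) and each $\vecc{F}_i$, and the hypothesis $\|\vecc{f}_j(\vecc{y},\epsilon) - \vecc{F}_j(\vecc{y})\| \le C_j \epsilon^{r_j}$, gives the bound $\sum_j K_j \epsilon^{r_j}$ with constants $K_j$ independent of $\vecc{y}$ and $\epsilon$; bounding each $\epsilon^{r_j}$ by $\epsilon^{\min_i r_i}$ for $\epsilon \le 1$ yields the second inequality. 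For the inverse statement, I would use the resolvent-type identity $\vecc{f}_1^{-1} - \vecc{F}_1^{-1} = \vecc{f}_1^{-1}(\vecc{F}_1 - \vecc{f}_1)\vecc{F}_1^{-1}$; the main point here is to check that $\|\vecc{f}_1^{-1}(\vecc{y},\epsilon)\|$ stays bounded as $\epsilon \to 0$. This follows because $\vecc{f}_1(\vecc{y},\epsilon) \to \vecc{F}_1(\vecc{y})$ uniformly with $\vecc{F}_1$ invertible, so for $\epsilon$ small enough $\vecc{f}_1(\vecc{y},\epsilon) = \vecc{F}_1(\vecc{y})(\vecc{I} + \vecc{F}_1^{-1}(\vecc{y})(\vecc{f}_1(\vecc{y},\epsilon)-\vecc{F}_1(\vecc{y})))$ is invertible with $\|\vecc{f}_1^{-1}\| \le 2\|\vecc{F}_1^{-1}\|$ via a Neumann series; I should be a little careful that $\|\vecc{F}_1^{-1}(\vecc{y})\|$ is itself bounded uniformly in $\vecc{y}$, which is an implicit part of the ``bounded'' hypothesis on $\vecc{F}_1$ that I should state explicitly. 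Combining these gives $\|\vecc{f}_1^{-1} - \vecc{F}_1^{-1}\| \le 2\|\vecc{F}_1^{-1}\|^2 C_1 \epsilon^{r_1} = O(\epsilon^{r_1})$.

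For part (ii), the sum statement is immediate: $\|\sum_i c_i(\vecc{f}_i(\vecc{y},\epsilon)-\vecc{f}_i(\vecc{z},\epsilon))\| \le \left(\sum_i |c_i| M_i(\epsilon)\right)|\vecc{y}-\vecc{z}|$, and $\sum_i |c_i| M_i(\epsilon) = O(1)$. For the product, I would again telescope: $\prod_i \vecc{f}_i(\vecc{y},\epsilon) - \prod_i \vecc{f}_i(\vecc{z},\epsilon)$ decomposes as a sum of $N$ terms, in the $j$-th of which only the $j$-th factor is differenced and the remaining factors are evaluated at $\vecc{y}$ or $\vecc{z}$; bounding those remaining factors by the common uniform bound $B(\epsilon) = O(1)$ and the differenced factor by $M_j(\epsilon)|\vecc{y}-\vecc{z}|$ gives a Lipschitz constant $\le \sum_j M_j(\epsilon) B(\epsilon)^{N-1} = O(1)$. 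The scalars $c_i$ are absorbed into the constants. For the inverse map, $\vecc{f}_1^{-1}(\vecc{y},\epsilon) - \vecc{f}_1^{-1}(\vecc{z},\epsilon) = \vecc{f}_1^{-1}(\vecc{y},\epsilon)(\vecc{f}_1(\vecc{z},\epsilon)-\vecc{f}_1(\vecc{y},\epsilon))\vecc{f}_1^{-1}(\vecc{z},\epsilon)$, so I need a uniform (in $\vecc{y}$ and $\epsilon$) bound on $\|\vecc{f}_1^{-1}\|$ — this is the one genuinely nontrivial input, and it is not automatic from the hypotheses of (ii) alone; it must be supplied by context (in the applications it comes from the uniform Hurwitz/positive-stability assumptions, or one simply adds it as a hypothesis). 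Granting it, the Lipschitz bound for $\vecc{f}_1^{-1}$ follows with constant $(\sup \|\vecc{f}_1^{-1}\|)^2 M_1(\epsilon) = O(1)$.

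The main obstacle, such as it is, is the bookkeeping around uniform-in-$\vecc{y}$ boundedness of the inverse maps: the telescoping arguments themselves are routine, but one must be careful to distinguish ``bounded for each fixed $\vecc{y},\epsilon$'' from ``bounded uniformly,'' and to verify that in part (i) the uniform convergence plus uniform invertibility of the limit does deliver the required uniform bound on $\|\vecc{f}_1^{-1}(\vecc{y},\epsilon)\|$ for small $\epsilon$, whereas in part (ii) this has to be assumed or imported from elsewhere. Everything else is submultiplicativity of the operator norm and finite sums of $O(1)$ or $O(\epsilon^{r})$ quantities.
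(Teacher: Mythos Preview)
Your proposal is correct and follows essentially the same approach as the paper: the paper proves (i) by induction on $N$ (which is just the telescoping sum written recursively) and uses the same resolvent identity $\vecc{f}_1^{-1}-\vecc{F}_1^{-1}=\vecc{f}_1^{-1}(\vecc{F}_1-\vecc{f}_1)\vecc{F}_1^{-1}$ for the inverse statement, while for (ii) it simply says the same techniques apply and omits the details. Your explicit telescoping decomposition and your discussion of the uniform-in-$\vecc{y}$ boundedness of $\|\vecc{f}_1^{-1}\|$ (via Neumann series in (i), and flagged as an extra hypothesis in (ii)) are in fact more careful than the paper, which tacitly assumes these bounds without comment.
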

\begin{proof}
\begin{itemize}
\item[(i)] We prove this inductively. The base case of $N=1$ clearly holds with $D = C_1$. Let $k \in \{1,\dots,N-1\}$. Assume that \eqref{boundlip} holds with $N:=k$ and $D := D_k$. Then
\begin{align}
&\bigg\|\prod_{i=1}^{k+1} \vecc{f}_i(\vecc{y},\epsilon)-\prod_{i=1}^{k+1} \vecc{F}_i(\vecc{y})\bigg\|  \nonumber \\ 
&= \bigg\|\vecc{f}_{k+1}(\vecc{y},\epsilon)\cdot\prod_{i=1}^{k} \vecc{f}_i(\vecc{y},\epsilon)-\vecc{F}_{k+1}(\vecc{y})\cdot \prod_{i=1}^{k} \vecc{F}_i(\vecc{y})\bigg\|  \\
&\leq \|\vecc{f}_{k+1}(\vecc{y},\epsilon)\| \cdot \left\| \prod_{i=1}^{k} \vecc{f}_i(\vecc{y},\epsilon)-\prod_{i=1}^{k} \vecc{F}_i(\vecc{y})\right\| \nonumber \\
&\ \ \ \ \  + \|\vecc{f}_{k+1}(\vecc{y},\epsilon)- \vecc{F}_{k+1}(\vecc{y})\| \cdot \left\| \prod_{i=1}^{k} \vecc{F}_i(\vecc{y})\right\|   \\
&\leq C ( D_k \epsilon^{\min(r_1,\dots,r_k)} + C_{k+1}\epsilon^{r_{k+1}} )  \\
&\leq C \max\{D_k, C_{k+1} \}(\epsilon^{\min(r_1,\dots,r_k)}  + \epsilon^{r_{k+1}}) \leq D_{k+1} \epsilon^{\min(r_1,\dots,r_{k+1})},\end{align}
as $\epsilon \to 0$, where $C$, $D_{k+1}$ are  positive constants and we have used  the inductive hypothesis and assumptions of the lemma in the last two lines above. The last statement follows from:
\begin{align}
\|\vecc{f}_1^{-1}(\vecc{y},\epsilon)-\vecc{F}_1^{-1}(\vecc{y})\| &= \|\vecc{f}_1^{-1}(\vecc{y},\epsilon) (\vecc{F}_1(\vecc{y})-\vecc{f}_1(\vecc{y},\epsilon)) \vecc{F}_1^{-1}(\vecc{y}) \|  \\
&\leq \|\vecc{f}_1^{-1}(\vecc{y},\epsilon)\|\cdot \| \vecc{F}_1(\vecc{y})-\vecc{f}_1(\vecc{y},\epsilon) \| \cdot \|\vecc{F}_1^{-1}(\vecc{y}) \|   \\
&\leq C \epsilon^{r_1},
\end{align} as $\epsilon \to 0$, where $C$ is a positive constant.
\item[(ii)] The statements can be proven using the same techniques used for (i) and so we omit the proof.
\end{itemize}
\end{proof}

Let $\vecc{x}^\epsilon(t) \in \RR^{n_1}$, $\vecc{v}^\epsilon(t) \in \RR^{n_2}$ and $T>0$. For $t \in [0,T]$, let $\vecc{p}^{\epsilon}(t) := \epsilon \vecc{v}^{\epsilon}(t)$ denote a solution of the SDE:
\begin{align}
d\vecc{p}^{\epsilon}(t) &= \frac{\vecc{a}_{2}(t, \vecc{x}^{\epsilon}(t),\epsilon)}{\epsilon} \vecc{p}^{\epsilon}(t) dt + \vecc{b}_{2}(t,\vecc{x}^{\epsilon}(t),\epsilon) dt + \vecc{\sigma}_2(t,\vecc{x}^{\epsilon}(t),\epsilon) d\vecc{W}^{(k_2)}(t). \label{sdeforp}
\end{align}

We provide estimates for the moments concerning the process $\vecc{p}^\epsilon(t)$, under appropriate assumptions on the coefficients and the initial conditions, in the limit as $\epsilon \to 0$. 


We need the following lemma, adapted from Proposition A.2.3 of \cite{kabanov2013two}, to obtain an exponential bound on certain fundamental matrix solution.

\begin{lem} \label{expb} Fix a filtered probability space $(\Omega, \mathcal{F}, \mathcal{F}_t, \mathbb{P})$. For each  $\epsilon > 0$, let $\vecc{B}^\epsilon: [0,T] \times \Omega \to \RR^{n \times n}$ be a bounded (uniformly in $\epsilon$, $\omega \in \Omega$ and $t \in [0,T]$), pathwise continuous  process. Assume that the real parts of all eigenvalues of $\vecc{B}$ are bounded from above by $-2\kappa$, uniformly in $\epsilon$, $\omega \in \Omega$ and $t \in [0,T]$, where $\kappa$ is a positive constant. Let $\vecc{\Phi}^\epsilon(t,s,\omega)$ be the fundamental matrix that solves the initial value problem (IVP):
\begin{equation} \label{ivp}
\frac{\partial \vecc{\Phi}^\epsilon(t,s,\omega)}{\partial t} = \frac{\vecc{B}^\epsilon(t,\omega)}{\epsilon} \vecc{\Phi}^\epsilon(t,s,\omega), \ \ \vecc{\Phi}^\epsilon(s,s,\omega)=\vecc{I}, \ \ 0 \leq s \leq t \leq T.
\end{equation}
Then there exists a constant $C > 0$ and an (in general random\footnote{See also Remark 14 in \cite{LimWehr_Homog_NonMarkovian}.}) $\epsilon_{1}=\epsilon_1(\omega)$ such that
\begin{equation} \label{888}
\|\vecc{\Phi}^\epsilon(t,s,\omega)\| \leq C e^{-\kappa(t-s)/\epsilon} \end{equation} for all $\epsilon \leq \epsilon_{1}$ and for all $s,t \in [0,T]$.
\end{lem}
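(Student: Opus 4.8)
The plan is to reduce the statement to a deterministic, pathwise estimate: fix $\omega \in \Omega$ and work with the matrix ODE \eqref{ivp} for that fixed sample path, treating $\vecc{B}^\epsilon(t,\omega)$ as a bounded continuous matrix-valued function of $t$ whose spectrum lies in the half-plane $\{\mathrm{Re}\,z \le -2\kappa\}$ uniformly. The point is that \eqref{888} is an ODE statement, so the randomness only enters through the fact that the relevant auxiliary bounds (a Lyapunov-type bound and the threshold $\epsilon_1$) depend on the path; once we have a uniform-in-$\epsilon$, uniform-in-$t$, uniform-in-$\omega$ control on $\vecc{B}^\epsilon$, the argument is classical. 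So first I would invoke Proposition A.2.3 of \cite{kabanov2013two} (as the lemma says it is adapted from there): that result gives exactly an exponential decay bound $\|\vecc{\Phi}^\epsilon(t,s)\| \le C e^{-\kappa(t-s)/\epsilon}$ for the fundamental solution of a fast linear system whose coefficient matrix is Hurwitz with real parts of eigenvalues bounded above by $-2\kappa$, valid for $\epsilon$ small enough, with the ``small enough'' threshold depending on the modulus of continuity / bound of the coefficient.

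The key steps, in order: (1) For each fixed $t$, since the eigenvalues of $\vecc{B}^\epsilon(t,\omega)$ have real parts $\le -2\kappa$, there is a positive definite matrix $\vecc{P}(t)$ (solving a Lyapunov equation $\vecc{B}^*\vecc{P} + \vecc{P}\vecc{B} = -\vecc{I}$, say) with $\vecc{P}(t)$ and $\vecc{P}(t)^{-1}$ bounded uniformly in $t, \epsilon, \omega$ because $\vecc{B}^\epsilon$ is bounded and uniformly Hurwitz. (2) Using the uniform boundedness and pathwise continuity of $\vecc{B}^\epsilon$, one shows that a single $\vecc{P}$ (or a slowly-varying family) serves as a common Lyapunov function giving $\vecc{B}^\epsilon(t,\omega)^*\vecc{P} + \vecc{P}\vecc{B}^\epsilon(t,\omega) \le -\kappa' \vecc{P}$ for a suitable $\kappa' > 0$ related to $\kappa$; here the continuity of the path in $t$ on the compact interval $[0,T]$ is what lets us absorb the $t$-dependence into the choice of $\epsilon_1$. (3) Differentiate the quadratic form $V(t) := \vecc{\Phi}^\epsilon(t,s,\omega)^* \vecc{P}\, \vecc{\Phi}^\epsilon(t,s,\omega)$ along \eqref{ivp}, obtaining $\frac{d}{dt} (\vecc{w}^*V(t)\vecc{w}) \le -\frac{\kappa'}{\epsilon}\,\vecc{w}^* V(t)\vecc{w}$ for arbitrary fixed $\vecc{w}$; Gr\"onwall then yields exponential decay of $V$, hence of $\|\vecc{\Phi}^\epsilon(t,s,\omega)\|$, at rate $\kappa'/(2\epsilon)$, and after renaming constants one gets the stated bound with the advertised $\kappa$. (4) Track where $\epsilon$ enters: the threshold $\epsilon_1(\omega)$ comes from requiring the ``frozen-coefficient'' Lyapunov estimate to dominate the error from the time-variation of $\vecc{B}^\epsilon$ over scales of order $\epsilon$; since $\vecc{B}^\epsilon(\cdot,\omega)$ is continuous on the compact $[0,T]$ for each $\omega$, such an $\epsilon_1(\omega) > 0$ exists, and the constant $C$ can be taken deterministic (depending only on the uniform bound on $\vecc{B}$, on $\kappa$, and on the condition number of $\vecc{P}$).

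The main obstacle I expect is step (2)–(4): handling the genuine time-dependence (and mere continuity, not differentiability) of $\vecc{B}^\epsilon(t,\omega)$ in the fast ODE. For a constant matrix the bound is immediate from the spectral mapping theorem, but here one must argue that over the fast time scale $\epsilon$ the coefficient is ``almost frozen,'' which is precisely the content of Proposition A.2.3 in \cite{kabanov2013two}; so in practice the cleanest route is to state the hypotheses carefully (uniform boundedness, pathwise continuity, uniform spectral gap $-2\kappa$), verify they match those of that proposition, and quote it, with the randomness appearing only because the path $t \mapsto \vecc{B}^\epsilon(t,\omega)$ and its modulus of continuity — and hence $\epsilon_1$ — depend on $\omega$. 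I would also remark, as the lemma's footnote already signals via the reference to Remark 14 of \cite{LimWehr_Homog_NonMarkovian}, that $\epsilon_1$ being random is harmless for the subsequent use: it is almost surely positive, which is all that is needed in the moment estimates for $\vecc{p}^\epsilon(t)$ that follow.
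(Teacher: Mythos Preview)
Your overall framing is right: the lemma is a pathwise ODE statement, and the randomness enters only through the modulus of continuity of $t \mapsto \vecc{B}^\epsilon(t,\omega)$, which is what determines $\epsilon_1(\omega)$. Your fallback of simply quoting Proposition A.2.3 of \cite{kabanov2013two} is exactly what the paper does; the paper's proof is a direct write-out of that argument.

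However, the paper does \emph{not} go through a Lyapunov function as in your steps (1)--(3). It instead freezes the coefficient at the terminal time $t$, writes the Duhamel representation
\[
\vecc{\Phi}^\epsilon(t,s) = e^{(t-s)\vecc{B}^\epsilon(t)/\epsilon} + \frac{1}{\epsilon}\int_s^t e^{(t-r)\vecc{B}^\epsilon(t)/\epsilon}\bigl(\vecc{B}^\epsilon(r)-\vecc{B}^\epsilon(t)\bigr)\vecc{\Phi}^\epsilon(r,s)\,dr,
\]
uses the constant-matrix bound $\|e^{s\vecc{B}^\epsilon(t)/\epsilon}\|\le C e^{-2\kappa s/\epsilon}$, and bootstraps on $\vecc{W}^\epsilon(t,s):=e^{\kappa(t-s)/\epsilon}\vecc{\Phi}^\epsilon(t,s)$ to get $\sup_{s,t}\|\vecc{W}^\epsilon\|\le C + A_\epsilon(\omega)\sup_{s,t}\|\vecc{W}^\epsilon\|$ with
\[
A_\epsilon(\omega)=\frac{C}{\epsilon}\sup_{t\in[0,T]}\int_0^t e^{-\kappa(t-r)/\epsilon}\|\vecc{B}^\epsilon(r,\omega)-\vecc{B}^\epsilon(t,\omega)\|\,dr \to 0
\]
by uniform continuity of the path on $[0,T]$; this is where $\epsilon_1(\omega)$ arises.

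Your Lyapunov sketch has a real gap at step (2): uniform pointwise Hurwitz stability plus boundedness plus continuity of $t\mapsto\vecc{B}^\epsilon(t,\omega)$ does \emph{not} yield a single $\vecc{P}$ with $\vecc{B}^\epsilon(t)^*\vecc{P}+\vecc{P}\,\vecc{B}^\epsilon(t)\le -\kappa'\vecc{P}$ for all $t$ --- pointwise-Hurwitz time-varying linear systems can be unstable, which is precisely why the $1/\epsilon$ scaling and the smallness of $\epsilon$ are essential here, not incidental. Switching to a time-varying $\vecc{P}(t)$ introduces a $\dot{\vecc{P}}(t)$ term in $\tfrac{d}{dt}(\vecc{\Phi}^*\vecc{P}\vecc{\Phi})$, and since $\vecc{B}^\epsilon$ is only assumed continuous (not $C^1$), that term is not controlled and the Gr\"onwall in step (3) does not close. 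Your step (4) correctly names the mechanism --- the frozen-coefficient estimate must dominate the time-variation error on scale $\epsilon$ --- but the device that makes this precise under mere continuity is the Duhamel/bootstrap above, not a Lyapunov differential inequality.
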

\begin{proof}
Let $u \in [s,t]$. We rewrite for $\omega \in \Omega$, $s, t \in [0,T]$:
\begin{equation}
\frac{\partial \vecc{\Phi}^\epsilon(t,s,\omega)}{\partial t} =  \frac{\vecc{B}^\epsilon(u,\omega)}{\epsilon} \vecc{\Phi}^\epsilon(t,s,\omega) + \frac{\vecc{B}^\epsilon(t,\omega)-\vecc{B}^\epsilon(u,\omega)}{\epsilon} \vecc{\Phi}^\epsilon(t,s,\omega),
\end{equation}
and represent the solution to the IVP as:
\begin{equation}
\vecc{\Phi}^\epsilon(t,s,\omega) = e^{(t-s)\frac{\vecc{B}^\epsilon(u,\omega)}{\epsilon}} + \frac{1}{\epsilon} \int_s^t e^{(t-r) \frac{\vecc{B}^\epsilon(u,\omega)}{\epsilon}}  (\vecc{B}^\epsilon(r,\omega)-\vecc{B}^\epsilon(u,\omega)) \vecc{\Phi}^\epsilon(r,s,\omega) dr.
\end{equation}

Denote $\vecc{W}^\epsilon(t,s,\omega) := e^{\kappa(t-s)/\epsilon} \vecc{\Phi}^\epsilon(t,s,\omega)$.
Setting $u = t$ in the above representation and multiplying both sides by $e^{\kappa(t-s)/\epsilon}$, we obtain:
\begin{align}
&\vecc{W}^\epsilon(t,s,\omega)  \nonumber \\
&= e^{\kappa(t-s)/\epsilon} e^{(t-s)\vecc{B}^\epsilon(t,\omega)/\epsilon} + \frac{1}{\epsilon} \int_s^t e^{\kappa(t-s)/\epsilon} e^{(t-r) \vecc{B}^\epsilon(t,\omega)/\epsilon}  (\vecc{B}^\epsilon(r,\omega)-\vecc{B}^\epsilon(t,\omega))   \nonumber \\ &\ \hspace{5cm} \cdot  \vecc{\Phi}^\epsilon(r,s,\omega) dr \\
&= e^{\kappa(t-s)/\epsilon} e^{(t-s)\vecc{B}^\epsilon(t,\omega)/\epsilon} + \frac{1}{\epsilon} \int_s^t e^{\kappa(t-s)/\epsilon} e^{(t-r) \vecc{B}^\epsilon(t,\omega)/\epsilon} e^{-\kappa(r-s)/\epsilon} \nonumber \\
&\ \hspace{5cm} \cdot  (\vecc{B}^\epsilon(r,\omega)-\vecc{B}^\epsilon(t,\omega)) \vecc{W}^\epsilon(r,s,\omega) dr.
\end{align}
Since $\vecc{B}^\epsilon$ is bounded (uniformly in $\omega$, $t$ and $\epsilon$), by assumption on the spectrum of $\vecc{B}^\epsilon$, there exists a constant $C > 0$, such that for all $s,t \in [0,T]$ we have
\begin{equation}\|e^{s \vecc{B}^\epsilon(t,\omega)/\epsilon}\| \leq C e^{-2\kappa s/\epsilon}
\end{equation}

Using this, we obtain:
\begin{align}
&\| \vecc{W}^\epsilon(t,s,\omega)\| \nonumber \\
&\leq C e^{-\kappa(t-s)/\epsilon}  \nonumber \\ 
&\ \hspace{0.02cm} + \frac{C}{\epsilon} \int_s^t e^{-2 \kappa(t-r)/\epsilon} e^{-\kappa(r-s)/\epsilon} e^{\kappa(t-s)/\epsilon} \| \vecc{W}^\epsilon(r,s,\omega)\| \cdot \|\vecc{B}^\epsilon(r,\omega)-\vecc{B}^\epsilon(t,\omega)\| dr.
\end{align}
This leads to the estimate:
\begin{align}
&\sup_{s,t \in [0,T]} \| \vecc{W}^\epsilon(t,s,\omega)\| \leq C +   \sup_{r,s \in [0,T]}\| \vecc{W}^\epsilon(r,s,\omega)\| \cdot A_\epsilon(\omega),
\end{align}
where 
\begin{equation}
A_\epsilon(\omega) = \frac{C}{\epsilon} \sup_{t \in [0,T]} \int_0^t e^{- \frac{\kappa(t-r)}{\epsilon}}  \left\|\vecc{B}^\epsilon(r,\omega)-\vecc{B}^\epsilon(t,\omega)\right\| dr.
\end{equation}
For a fixed $\omega \in \Omega$, $A_\epsilon(\omega)$ can be made arbitrary small as $\epsilon \to 0$. 
Therefore, there exists an $\epsilon_1 = \epsilon_1(\omega) > 0$ (generally dependent on $\omega$) such that
\begin{equation}
\sup_{s,t \in [0,T]} \| \vecc{W}^\epsilon(t,s,\omega)\| \leq C + \frac{1}{2} \sup_{s,t \in [0,T]} \| \vecc{W}^\epsilon(t,s,\omega)\|
\end{equation}
for all $\epsilon \leq \epsilon_1$. This implies that $\sup_{s,t \in [0,T]} \| \vecc{W}^\epsilon(t,s,\omega)\| \leq 2C$,  which is the claimed bound.
\end{proof}


We now prove a lemma that gives a bound on a class of stochastic integrals.  It is modification of Lemma 5.1 in \cite{birrell2017small}. In both cases, the main idea is to rewrite some of the stochastic integrals in terms of ordinary ones.

\begin{lem} \label{sib} Let $\vecc{H}_{t} := \vecc{H}_{0} + \vecc{M}_{t} + \vecc{A}_{t}$ be the Doob-Meyer decomposition of a continuous $\RR^{k}$-valued semimartingale on $(\Omega, \mathcal{F}, \mathcal{F}_{t}, P)$ with a local martingale $\vecc{M}_{t}$ and a process of locally bounded variation $\vecc{A}_{t}$. Let $\vecc{V} \in L_{loc}^{1}(A) \cap L_{loc}^2(M)$ be $\RR^{n \times k}$-valued and let $\vecc{B}^\epsilon(t)$ be an adapted process whose values are $n \times n$ matrices, satisfying the assumptions of Lemma \ref{expb}. Let $\vecc{\Phi}^\epsilon(t) := \vecc{\Phi}^\epsilon(t,0)$ be the adapted $C^{1}$ process that pathwise solves the IVP \eqref{ivp}. Then for every $T \geq \delta > 0$ and for every $\epsilon \leq \epsilon_{1}$, we have the $\mathbb{P}$-a.s bound:
\begin{align}
&\sup_{t \in [0,T]} \left|\vecc{\Phi}^\epsilon(t) \int_{0}^{t} (\vecc{\Phi}^\epsilon)^{-1}(s) \vecc{V}_{s} d\vecc{H}_{s} \right| \nonumber \\  
&\leq C\left(1+\frac{4}{\kappa} \sup_{s \in [0,T]} \|\vecc{B}^\epsilon(s)\| \right) \bigg(e^{-\kappa \delta/\epsilon} \sup_{t \in [0,T]} \left| \int_{0}^{t} \vecc{V}_{r} d\vecc{H}_{r} \right|  \nonumber \\
&\ \ \ \ \  \ \ + \max_{k=0,1,\dots,N-1} \sup_{t \in [k\delta, (k+2)\delta]} \left| \int_{k\delta}^{t} \vecc{V}_{r} d\vecc{H}_{r} \right| \bigg) \label{8888},
\end{align}
where $N = \max\{k \in \ZZ: k \delta < T\}$, $\epsilon_{1}$, $\kappa$ and $C$ are  from Lemma \ref{expb}, and $l_{2}$-norm is used on every $\RR^{k}$.
\end{lem}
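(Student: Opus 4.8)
The plan is to reduce \eqref{8888} to two contributions by splitting $\int_0^t$ at the last grid point at temporal distance at least $\delta$ from $t$, and to handle each piece by a pathwise integration by parts that converts the stochastic integral against $d\vecc{H}_s$, weighted by the singular factor $(\vecc{\Phi}^\epsilon)^{-1}(s)$, into an ordinary Lebesgue integral against the kernel $\vecc{\Phi}^\epsilon(t)(\vecc{\Phi}^\epsilon)^{-1}(s)$, for which Lemma \ref{expb} supplies the exponential decay.

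First I would record that $\vecc{\Phi}^\epsilon$ is adapted and pathwise $C^1$, that its inverse solves $\tfrac{d}{ds}(\vecc{\Phi}^\epsilon)^{-1}(s)=-(\vecc{\Phi}^\epsilon)^{-1}(s)\vecc{B}^\epsilon(s)/\epsilon$, and that $\vecc{\Phi}^\epsilon(t)(\vecc{\Phi}^\epsilon)^{-1}(s)=\vecc{\Phi}^\epsilon(t,s)$ obeys $\|\vecc{\Phi}^\epsilon(t,s)\|\le Ce^{-\kappa(t-s)/\epsilon}$ for $0\le s\le t\le T$ and $\epsilon\le\epsilon_1$, by Lemma \ref{expb}. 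Setting $\vecc{I}^a_t:=\int_a^t\vecc{V}_r\,d\vecc{H}_r$, a continuous semimartingale in $t$, and using that $(\vecc{\Phi}^\epsilon)^{-1}$ is continuous of finite variation (so its quadratic covariation with $\vecc{I}^a$ vanishes), the product rule yields, for $[a,b]\subseteq[0,T]$,
\[
\vecc{\Phi}^\epsilon(t)\int_a^b(\vecc{\Phi}^\epsilon)^{-1}(s)\vecc{V}_s\,d\vecc{H}_s=\vecc{\Phi}^\epsilon(t,b)\,\vecc{I}^a_b+\int_a^b\vecc{\Phi}^\epsilon(t,s)\,\frac{\vecc{B}^\epsilon(s)}{\epsilon}\,\vecc{I}^a_s\,ds .
\]
The only analytic input needed thereafter is $\tfrac1\epsilon\int_a^b e^{-\kappa(t-s)/\epsilon}\,ds\le\tfrac1\kappa$ for $b\le t$, with the improved factor $\tfrac1\kappa e^{-\kappa\delta/\epsilon}$ when additionally $t-b\ge\delta$.

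Next, for fixed $t\in[0,T]$ I pick $n$ with $t\in[n\delta,(n+1)\delta]$ and write $\int_0^t=\int_0^{(n-1)\delta}+\int_{(n-1)\delta}^t$, the first integral being vacuous when $n\le1$. For the first integral, apply the identity with $a=0$, $b=(n-1)\delta$; since $t-s\ge\delta$ for all $s\le(n-1)\delta$, both resulting terms are bounded by a constant times $\bigl(1+\kappa^{-1}\sup_{s\in[0,T]}\|\vecc{B}^\epsilon(s)\|\bigr)e^{-\kappa\delta/\epsilon}\sup_{s\in[0,T]}|\vecc{I}^0_s|$, which is the first term on the right of \eqref{8888}. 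For the second integral, apply the identity with $a=(n-1)\delta$, $b=t$; now $t\in[(n-1)\delta,(n+1)\delta]=[k\delta,(k+2)\delta]$ for $k=n-1$, the boundary term is exactly $\vecc{I}^{(n-1)\delta}_t$, and the Lebesgue integral is bounded by $\kappa^{-1}C\sup_{s\in[0,T]}\|\vecc{B}^\epsilon(s)\|\,\sup_{s\in[(n-1)\delta,t]}|\vecc{I}^{(n-1)\delta}_s|$, so both are dominated by $\max_{0\le k\le N-1}\sup_{s\in[k\delta,(k+2)\delta]}|\int_{k\delta}^s\vecc{V}_r\,d\vecc{H}_r|$. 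Taking $\sup_{t\in[0,T]}$, treating the cases $n=0,1$ directly, and collecting constants gives \eqref{8888}.

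I expect the difficulty to lie in the bookkeeping rather than in any single idea: one has to check that the integration by parts is legitimate for the semimartingale $\vecc{H}$ against the finite-variation $C^1$ process $(\vecc{\Phi}^\epsilon)^{-1}$ (so that no It\^o correction appears), propagate the randomness of $\epsilon_1$ from Lemma \ref{expb} so that every inequality holds $\mathbb{P}$-a.s. on $\{\epsilon\le\epsilon_1\}$, and arrange the grid so that the ``distant'' portion of the integral is always at temporal distance at least $\delta$ from $t$, which is what actually produces the factor $e^{-\kappa\delta/\epsilon}$. Matching the stated constant $C(1+\tfrac{4}{\kappa}\sup\|\vecc{B}^\epsilon\|)$ is then just a matter of adding the two contributions.
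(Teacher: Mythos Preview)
Your proposal is correct and follows essentially the same route as the paper, which simply defers to Lemma~5.1 of \cite{birrell2017small} and notes that the only change is to replace the exponential bound there by $\|\vecc{\Phi}^\epsilon(t)(\vecc{\Phi}^\epsilon)^{-1}(s)\|=\|\vecc{\Phi}^\epsilon(t,s)\|\le Ce^{-\kappa(t-s)/\epsilon}$ coming from Lemma~\ref{expb}. Your integration-by-parts identity, the grid split at $(n-1)\delta$, and the two resulting estimates are exactly the ingredients of that referenced argument; you have in effect written out what the paper leaves to the citation.
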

\begin{proof}
The proof is  identical to that of Lemma 5.1 in \cite{birrell2017small} up to line (5.10), with the constant $\alpha$ there replaced by $\kappa$, etc. We let $\epsilon \leq \epsilon_{1}$ and replace the bound in line (5.11) there by the following bound, which follows from the semigroup property of the fundamental matrix process and Lemma \ref{expb}: \begin{equation}\|\vecc{\Phi}^\epsilon(t) (\vecc{\Phi}^\epsilon)^{-1}(s)\|= \|\vecc{\Phi}^\epsilon(t,0) \vecc{\Phi}^\epsilon(0,s)\| = \|\vecc{\Phi}^\epsilon(t,s)\| \leq C e^{-\kappa(t-s)/\epsilon}.\end{equation} Then we proceed as in the proof of Lemma 5.1 in \cite{birrell2017small} to get the desired bound.
\end{proof}


In particular, \eqref{888} and \eqref{8888} hold for $\vecc{B}^\epsilon = \vecc{a}_2(t, \vecc{x}^\epsilon(t), \epsilon)$.

\begin{prop} \label{mom_bound}
Suppose that Assumptions \ref{aexis}-\ref{a5_ch2} hold. For all $p \geq 1$, $T>0$, $0<\beta<p/2$, there exists a positive random variable $\epsilon_1$ such that:
\begin{equation}
\mathbb{E}\left[\sup_{t \in [0,T]}|\vecc{p}^\epsilon(t)|^p; \epsilon \leq \epsilon_1 \right] = O(\epsilon^\beta),
\end{equation}
as $\epsilon \to 0$, where $\vecc{p}^\epsilon(t)$ solves  the SDE \eqref{sdeforp}.  Therefore, for any $p \geq 1$, $T>0$, $\beta > 0$, we have
\begin{equation}
\mathbb{E}\left[ \sup_{t \in [0,T]} \|\epsilon \vecc{v}^\epsilon(t)\vecc{v}^\epsilon(t)^*\|_{F}^p; \epsilon \leq \epsilon_1 \right] = O(\epsilon^{-\beta}),
\end{equation}
as $\epsilon \to 0$, where $\|\cdot\|_F$ denotes the Frobenius norm.
\end{prop}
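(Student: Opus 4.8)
The plan is to control $\vecc{p}^\epsilon(t) = \epsilon\vecc{v}^\epsilon(t)$ directly via its variation-of-constants representation using the fundamental matrix $\vecc{\Phi}^\epsilon(t,s)$ associated with $\vecc{a}_2(t,\vecc{x}^\epsilon(t),\epsilon)/\epsilon$. First I would write
$$
\vecc{p}^\epsilon(t) = \vecc{\Phi}^\epsilon(t,0)\vecc{p}^\epsilon(0) + \int_0^t \vecc{\Phi}^\epsilon(t,s)\vecc{b}_2(s,\vecc{x}^\epsilon(s),\epsilon)\,ds + \int_0^t \vecc{\Phi}^\epsilon(t,s)\vecc{\sigma}_2(s,\vecc{x}^\epsilon(s),\epsilon)\,d\vecc{W}^{(k_2)}(s).
$$
By Assumption~\ref{a0_ch2}, the family $\{\vecc{a}_2(t,\vecc{y},\epsilon)\}$ is uniformly Hurwitz stable, so Lemma~\ref{expb} applies: there is $\kappa>0$, $C>0$ and a random $\epsilon_1 = \epsilon_1(\omega)$ such that $\|\vecc{\Phi}^\epsilon(t,s)\| \leq C e^{-\kappa(t-s)/\epsilon}$ for all $\epsilon \leq \epsilon_1$ and $0\leq s\leq t\leq T$. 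The three terms are then handled as follows. The first term is bounded by $C e^{-\kappa t/\epsilon}|\vecc{p}^\epsilon(0)|$; since $\mathbb{E}[|\vecc{p}^\epsilon(0)|^p] = \mathbb{E}[|\epsilon\vecc{v}^\epsilon|^p] = O(\epsilon^\alpha)$ with $\alpha \geq p/2$ by Assumption~\ref{a2_ch2}, and the exponential is $\leq 1$, this contributes $O(\epsilon^{p/2})$, hence $O(\epsilon^\beta)$. The second (drift) term is bounded pathwise by $C\|\vecc{b}_2\|_\infty \int_0^t e^{-\kappa(t-s)/\epsilon}\,ds \leq C\|\vecc{b}_2\|_\infty \epsilon/\kappa = O(\epsilon)$ uniformly in $t$, using the boundedness of $\vecc{b}_2$ from Assumption~\ref{a1_ch2}; raising to the $p$-th power this is $O(\epsilon^p)$.

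The stochastic integral term is the main obstacle, since $\vecc{\Phi}^\epsilon(t,s)$ is anticipating in the It\^o integral and cannot be pulled inside naively. The standard device (as in Lemma~\ref{sib}, itself adapted from \cite{birrell2017small}) is to rewrite $\int_0^t \vecc{\Phi}^\epsilon(t,s)\vecc{\sigma}_2(s)\,d\vecc{W}(s) = \vecc{\Phi}^\epsilon(t)\int_0^t (\vecc{\Phi}^\epsilon)^{-1}(s)\vecc{\sigma}_2(s)\,d\vecc{W}(s)$ and apply Lemma~\ref{sib} with $\vecc{H}_t = \vecc{W}^{(k_2)}(t)$ (a martingale, so $\vecc{A}_t \equiv 0$), $\vecc{V}_s = \vecc{\sigma}_2(s,\vecc{x}^\epsilon(s),\epsilon)$, and $\vecc{B}^\epsilon = \vecc{a}_2(\cdot,\vecc{x}^\epsilon(\cdot),\epsilon)$. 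Taking a deterministic mesh size $\delta = \delta(\epsilon)$ (e.g. $\delta = \epsilon^{1/2}$ or $\delta = \epsilon\log(1/\epsilon)$, to be optimized), Lemma~\ref{sib} bounds $\sup_{t\leq T}|\vecc{\Phi}^\epsilon(t)\int_0^t(\vecc{\Phi}^\epsilon)^{-1}\vecc{\sigma}_2\,d\vecc{W}|$ in terms of $e^{-\kappa\delta/\epsilon}$ times the global sup of the stochastic integral plus the maximum over $O(T/\delta)$ sub-intervals of length $2\delta$ of the local stochastic integrals. Then I would take $p$-th moments, use the Burkholder–Davis–Gundy inequality together with the uniform boundedness of $\vecc{\sigma}_2$ to get $\mathbb{E}\sup_{t\in[k\delta,(k+2)\delta]}|\int_{k\delta}^t \vecc{\sigma}_2\,d\vecc{W}|^p \leq C_p (2\delta)^{p/2}\|\vecc{\sigma}_2\|_\infty^p$, and bound the maximum over $O(1/\delta)$ blocks crudely by the sum, giving a factor $O(\delta^{-1}\cdot\delta^{p/2}) = O(\delta^{p/2-1})$; the global-sup term, weighted by $e^{-p\kappa\delta/\epsilon}$, is negligible for a suitable choice of $\delta/\epsilon\to\infty$. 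Balancing, with $\delta \to 0$ slowly enough that $\delta/\epsilon\to\infty$, one obtains $\mathbb{E}[\sup_{t\leq T}|\vecc{p}^\epsilon(t)|^p;\epsilon\leq\epsilon_1] = O(\delta^{p/2-1}) + (\text{lower order})$; choosing $\delta = \epsilon^{1-\eta}$ for small $\eta>0$ makes this $O(\epsilon^{(1-\eta)(p/2-1)})$, and one checks this exponent can be made $\geq \beta$ for any $\beta < p/2$ by taking $p$ large or iterating — more precisely, one should first establish the claim for large $p$ and then use Jensen/Hölder to transfer to all $p\geq 1$ at the stated rate $0<\beta<p/2$. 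Care is needed since $\epsilon_1$ is random; but on the event $\{\epsilon\leq\epsilon_1\}$ all the pathwise bounds from Lemma~\ref{expb} and Lemma~\ref{sib} hold, so truncating the expectation to this event (as in the statement) is exactly what makes the argument go through.

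Finally, the second assertion about $\|\epsilon\vecc{v}^\epsilon(t)(\vecc{v}^\epsilon(t))^*\|_F^p$ is immediate: since $\epsilon\vecc{v}^\epsilon(t)(\vecc{v}^\epsilon(t))^* = \epsilon^{-1}\vecc{p}^\epsilon(t)(\vecc{p}^\epsilon(t))^*$, we have $\|\epsilon\vecc{v}^\epsilon(t)(\vecc{v}^\epsilon(t))^*\|_F = \epsilon^{-1}|\vecc{p}^\epsilon(t)|^2$, so
$$
\mathbb{E}\Big[\sup_{t\in[0,T]}\|\epsilon\vecc{v}^\epsilon(t)(\vecc{v}^\epsilon(t))^*\|_F^p;\epsilon\leq\epsilon_1\Big] = \epsilon^{-p}\,\mathbb{E}\Big[\sup_{t\in[0,T]}|\vecc{p}^\epsilon(t)|^{2p};\epsilon\leq\epsilon_1\Big] = \epsilon^{-p}\cdot O(\epsilon^{p-\beta'})= O(\epsilon^{-\beta'})
$$
by applying the first part with exponent $2p$ and any $\beta' \in (0,p)$ (relabel $\beta'$ as $\beta$). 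The hard part throughout is the stochastic-integral estimate and the bookkeeping of the $\delta$–$\epsilon$ trade-off; everything else is a routine application of Lemma~\ref{expb}, Lemma~\ref{sib}, BDG, and the uniform boundedness assumptions in Assumption~\ref{a1_ch2}.
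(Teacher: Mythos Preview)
Your proposal is correct and follows essentially the same approach as the paper: variation-of-constants for $\vecc{p}^\epsilon$, Lemma~\ref{expb} for the exponential decay of $\vecc{\Phi}^\epsilon$, Lemma~\ref{sib} with mesh $\delta = \epsilon^{1-\eta}$ for the stochastic integral, BDG on the sub-intervals, and the same derivation of the second assertion from the first. The only technical difference is in controlling the maximum over the $N \approx T/\delta$ blocks: you bound $\max_k$ by $\sum_k$, which yields rate $(1-\eta)(p/2-1)$ (requiring $p>2$) and then recover the full range $\beta<p/2$ via H\"older in $p$; the paper instead bounds $\max_k$ by the $\ell^q$-norm $(\sum_k|\cdot|^q)^{1/q}$ and applies Jensen, giving rate $(1-\eta)(p/2-1/q)$ directly for all $p\geq 1$ once $q>2/p$, which avoids the transfer step. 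Both arguments are valid; the paper's is marginally cleaner.
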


\begin{proof}
Let $\vecc{\Phi}_\epsilon(t)$ be the matrix-valued process solving the IVP: 
\begin{equation}
\frac{\partial \vecc{\Phi}_\epsilon(t)}{\partial t} = \frac{\vecc{a}_2(t, \vecc{x}^\epsilon(t), \epsilon)}{\epsilon} \vecc{\Phi}_\epsilon(t), \ \ \vecc{\Phi}_\epsilon(0) = \vecc{I}.
\end{equation}

Then,
\begin{align}
\vecc{p}^\epsilon(t) &= \vecc{\Phi}_\epsilon(t)\epsilon\vecc{v}^\epsilon + \vecc{\Phi}_\epsilon(t) \int_0^t \vecc{\Phi}^{-1}_\epsilon(s) \vecc{b}_2(s,\vecc{x}^\epsilon(s),\epsilon)ds \nonumber \\  
&\ \ \ \ +  \vecc{\Phi}_\epsilon(t) \int_0^t \vecc{\Phi}^{-1}_\epsilon(s) \vecc{\sigma}_2(s,\vecc{x}^\epsilon(s),\epsilon)d\vecc{W}^{(k_2)}(s) \\\
&= \vecc{\Phi}_\epsilon(t)\epsilon\vecc{v}^\epsilon + \vecc{\Phi}_\epsilon(t) \int_0^t \vecc{\Phi}^{-1}_\epsilon(s) \vecc{B}_2(s,\vecc{x}^\epsilon(s))ds \nonumber \\
&\ \ \ \ +\vecc{\Phi}_\epsilon(t) \int_0^t \vecc{\Phi}^{-1}_\epsilon(s) \left[ \vecc{b}_2(s,\vecc{x}^\epsilon(s),\epsilon) -\vecc{B}_2(s,\vecc{x}^\epsilon(s)) \right]ds \nonumber \\
&\ \ \ \  +   \vecc{\Phi}_\epsilon(t) \int_0^t \vecc{\Phi}^{-1}_\epsilon(s)  \vecc{\sigma}_2(s,\vecc{x}^\epsilon(s),\epsilon) d\vecc{W}^{(k_2)}(s).
\end{align}

Therefore, for $T>0$ and $p \geq 1$, using the bound
\begin{equation} \label{algebra} 
\left|\sum_{i=1}^N a_i \right|^p \leq N^{p-1} \sum_{i=1}^N |a_i|^p
\end{equation}
for $p \geq 1$ (here the $a_i \in \RR$ and $N$ is a positive integer), taking supremum on both sides, and applying Lemma \ref{expb} (with $\vecc{B}^\epsilon = \vecc{a}_2(t, \vecc{x}^\epsilon(t), \epsilon)$), we estimate:
\begin{align}
&\sup_{t \in [0,T]}|\vecc{p}^\epsilon(t)|^p \nonumber \\
&\leq 4^{p-1} \sup_{t \in [0,T]} \bigg[C^p e^{-\frac{\kappa p}{\epsilon}t} \epsilon^p |\vecc{v}^\epsilon|^p + C^p \left( \int_0^t e^{-\frac{\kappa}{\epsilon}(t-s)} |\vecc{B}_2(s,\vecc{x}^\epsilon(s))| ds \right)^p \nonumber \\
&\ \ \ \ \ \ +  C^p \left( \int_0^t e^{-\frac{\kappa}{\epsilon}(t-s)} \bigg|[\vecc{b}_2(s,\vecc{x}^\epsilon(s),\epsilon) - \vecc{B}_2(s,\vecc{x}^\epsilon(s))] \bigg| ds \right)^p \nonumber \\
&\ \ \ \ \  \ + \bigg| \vecc{\Phi}_\epsilon(t) \int_0^t \vecc{\Phi}_\epsilon^{-1}(s)  \vecc{\sigma}_2(s,\vecc{x}^\epsilon(s),\epsilon)    d\vecc{W}^{(k_2)}(s) \bigg|^p  \bigg] \\
&\leq 4^{p-1} \bigg(C^p \epsilon^p |\vecc{v}^\epsilon|^p + \frac{C^p \epsilon^p}{\kappa^p} \bigg(\sup_{s \in [0,T]}|\vecc{B}_2 (s,\vecc{x}^\epsilon(s))|^p \nonumber \\
&\ \ \ \ \ \ \ + \sup_{s \in [0,T]}|\vecc{b}_2(s,\vecc{x}^\epsilon(s),\epsilon)-\vecc{B}_2(s,\vecc{x}^\epsilon(s))|^p\bigg) \nonumber \\
&\ \ \ \ \ \ \ + \sup_{t \in [0,T]} \bigg| \vecc{\Phi}_\epsilon(t) \int_0^t \vecc{\Phi}_\epsilon^{-1}(s) \vecc{\sigma}_2(s,\vecc{x}^\epsilon(s),\epsilon)  d\vecc{W}^{(k_2)}(s) \bigg|^p  \bigg), \label{lastline!}
\end{align}
for $\epsilon \leq \epsilon_1$, where $C>0$, $\kappa >0$, and $\epsilon_1>0$ is the random variable whose existence was proven in Lemma \ref{expb}.

Note that $\sup_{s \in [0,T]}|\vecc{B}_2 (s,\vecc{x}^\epsilon(s))|^p   < \infty$   and Assumption \ref{a5_ch2} implies that
\begin{equation}
\sup_{s \in [0,T]}|\vecc{b}_2(s,\vecc{x}^\epsilon(s),\epsilon)-\vecc{B}_2(s,\vecc{x}^\epsilon(s))|^p \leq |\beta_2(\epsilon)|^p,
\end{equation}
where $\beta_2(\epsilon) \leq K \epsilon^{b_2}$. 

Denote $\mathbb{E}_1[\cdot] = \mathbb{E}[\cdot; \epsilon \leq \epsilon_1]$, i.e. the expectation is taken on $\{ \omega : \epsilon  \leq \epsilon_1(\omega)\}$. We are going to estimate $\mathbb{E}_1\left[\sup_{t \in [0,T]} |\vecc{p}^\epsilon(t)|^p \right]$. 

By Assumption \ref{a2_ch2}, we have $\mathbb{E}_1 [\sup_{t \in [0,T]} |\epsilon\vecc{v}^\epsilon|^p] = O(\epsilon^{\alpha})$ as $\epsilon \to 0$, for some $\alpha \geq p/2$. Therefore, combining the above estimates, we obtain:
\begin{align}
&\mathbb{E}_1 \left[ \sup_{t \in [0,T]} |\vecc{p}^\epsilon(t)|^p\right]  \nonumber \\
&\leq C_1(p)(\epsilon^\alpha + \epsilon^{b_2 p} + \epsilon^p) \nonumber  \\ 
&\ \ \  + C_2(p) \mathbb{E}_1 \left[ \sup_{t \in [0,T]} \bigg| \vecc{\Phi}_\epsilon(t) \int_0^t \vecc{\Phi}_\epsilon^{-1}(s) \vecc{\sigma}_2(s,\vecc{x}^\epsilon(s),\epsilon)  d\vecc{W}^{(k_2)}(s) \bigg|^p \right],
\end{align}
where $C_1(p), C_2(p) > 0$ are constants.  

Next, the idea is to use Lemma \ref{sib}  and the Burkholder-Davis-Gundy inequality (see Theorem 3.28 in \cite{karatzas2012Brownian}) to estimate the last term on the right hand side above. This is analogous to the technique used in the proof of Proposition 5.1 in \cite{birrell2017small}.

Let $\delta$ be a constant such that $0<\delta < T$. Applying Lemma \ref{sib}, we estimate, using \eqref{algebra}:

\begin{align}
&\mathbb{E}_1\left[ \sup_{t \in [0,T]} \left| \vecc{\Phi}_{\epsilon}(t) \int_0^t \vecc{\Phi}_{\epsilon}^{-1}(s) \vecc{\sigma}_2(s,\vecc{x}^\epsilon(s),\epsilon) d\vecc{W}^{(k_2)}(s) \right|^p \right] \nonumber \\  
&\leq 2^{p-1} C^p \mathbb{E}_1 \left[ \left( 1 + \frac{4}{\kappa} \sup_{s \in [0,T]} \| \vecc{a}_2(s, \vecc{x}^\epsilon(s),\epsilon)\| \right)^p \cdot \Pi \right],  \\
&\leq 2^{p-1} C^p  \left( 1 + \frac{4}{\kappa} \| \vecc{a}_2(t,\vecc{x}^\epsilon(t),\epsilon)\|_{\infty} \right)^p \cdot \mathbb{E}_1 [\Pi],
\end{align}
where  $\| \vecc{a}_2(t,\vecc{x}^\epsilon(t),\epsilon)\|_{\infty} := \sup_{ t \in [0,T], \vecc{y} \in \RR^{n_1},\epsilon \in \mathcal{E}} \|\vecc{a}_2(t,\vecc{y},\epsilon)\|$ and
\begin{align}
\Pi &=  e^{-p \delta \kappa/\epsilon}  \sup_{t \in [0,T]} \bigg| \int_0^t \vecc{\sigma}_2(s,\vecc{x}^{\epsilon}(s),\epsilon) d\vecc{W}^{(k_2)}(s) \bigg|^p \nonumber \\
&\ \ \ + \max_{k=0,\dots,N-1} \sup_{t \in [k \delta, (k+2)\delta]}  \bigg| \int_{k \delta}^t  \vecc{\sigma}_2(s,\vecc{x}^{\epsilon}(s),\epsilon) d\vecc{W}^{(k_2)}(s) \bigg|^p.
\end{align}

We estimate:
\begin{align}
\mathbb{E}_1 [\Pi]  &=  e^{-p \delta \kappa/\epsilon} \mathbb{E}_1 \left[ \sup_{t \in [0,T]} \bigg| \int_0^t \vecc{\sigma}_2(s,\vecc{x}^{\epsilon}(s),\epsilon) d\vecc{W}^{(k_2)}(s) \bigg|^p\right] \nonumber \\ 
&\ \ \ \ + \mathbb{E}_1 \left[ \max_{k=0,\dots,N-1} \sup_{t \in [k \delta, (k+2)\delta]}  \bigg| \int_{k \delta}^t  \vecc{\sigma}_2(s,\vecc{x}^{\epsilon}(s),\epsilon) d\vecc{W}^{(k_2)}(s) \bigg|^p \right] \\
&\leq  e^{-p \delta \kappa/\epsilon} \mathbb{E}_1 \left[ \sup_{t \in [0,T]} \bigg| \int_0^t \vecc{\sigma}_2(s,\vecc{x}^{\epsilon}(s),\epsilon) d\vecc{W}^{(k_2)}(s) \bigg|^p  \right] \nonumber \\ 
&\ \ \ \ +  \mathbb{E}_1 \left[ \left( \sum_{k=0}^{N-1}  \sup_{t \in [k \delta, (k+2)\delta]} \left(\int_{k \delta}^t  \vecc{\sigma}_2(s,\vecc{x}^{\epsilon}(s),\epsilon) d\vecc{W}^{(k_2)}(s) \right)^{pq} \right)^{1/q} \right] \\
&\leq  e^{-p \delta \kappa/\epsilon} \mathbb{E}_1 \left[ \sup_{t \in [0,T]} \bigg| \int_0^t \vecc{\sigma}_2(s,\vecc{x}^{\epsilon}(s),\epsilon) d\vecc{W}^{(k_2)}(s) \bigg|^p \right] \nonumber \\
&\ \ \ \  +  \left( \sum_{k=0}^{N-1} \mathbb{E}_1 \left[ \sup_{t \in [k \delta, (k+2)\delta]} \left(\int_{k \delta}^t  \vecc{\sigma}_2(s, \vecc{x}^{\epsilon}(s),\epsilon) d\vecc{W}^{(k_2)}(s) \right)^{pq} \right)^{1/q} \right],
\end{align}
with $N := \max\{k \in \ZZ : k \delta < T\}$, where we have used the fact that the $l^\infty$-norm on $\RR^{N}$ is bounded by the $l^q$ norm for every $q \geq 1$ and then applied H\"older's inequality to get the last two lines above.

Now, letting $\delta = \epsilon^{1-h}$ for $0 < h < 1$, and using the Burkholder-Davis-Gundy inequality,
\begin{align}
\mathbb{E}_1[\Pi] &\leq C_{p,q} \bigg[ e^{-p \kappa/\epsilon^h} \mathbb{E}_1  \bigg[ \bigg( \int_0^T \|\vecc{\sigma}_2(s,\vecc{x}^{\epsilon}(s),\epsilon)\|_{F}^2 ds \bigg)^{\frac{pq}{2}} \bigg]^{1/q} \nonumber  \\ 
&\ \ \ \ +  \left( \sum_{k=0}^{N-1} \mathbb{E}_1 \left(\int_{k \delta}^{(k+2)\delta}  \|\vecc{\sigma}_2(s,\vecc{x}^{\epsilon}(s),\epsilon) \|_F^2 ds \right)^{\frac{pq}{2}} \right)^{1/q} \bigg]  \\
&\leq C_{p,q} \|\vecc{\sigma}_2(s,\vecc{x}^\epsilon(s),\epsilon)\|^p_{F,\infty} (e^{-p \kappa/\epsilon^h} T^{p/2} + 2^{p/2} (N \delta^{\frac{pq}{2}})^{1/q}),
\end{align}
where $C_{p,q}$ is some constant and 
\begin{equation} 
\|\vecc{\sigma}_2(s,\vecc{x}^\epsilon(s),\epsilon)\|_{F,\infty} := \sup_{t \in [0,T], \vecc{y} \in \RR^{n_{1}}, \epsilon \in \mathcal{E}} \|\vecc{\sigma}_2(t, \vecc{y},\epsilon)\|_F < \infty. 
\end{equation}


Since $N \delta < T$, we have $N \delta^{pq/2} < T \delta^{pq/2-1} = T \epsilon^{(1-h)(pq/2 - 1)}$. Therefore, $\mathbb{E}_1 [\Pi ] = O(\epsilon^{(1-h)(p/2-1/q)})$.
For all $0 < \beta < p/2$, one can choose $0 < h < 1$ and $q > 1$ such that $(1-h)(p/2-1/q) = \beta$. 

Therefore, we have
\begin{equation}
\mathbb{E}_1 \left[  \sup_{t \in [0,T]} \left| \vecc{\Phi}_{\epsilon}(t) \int_0^t \vecc{\Phi}_{\epsilon}^{-1}(s) \vecc{\sigma}_2(s,\vecc{x}^\epsilon(s),\epsilon) d\vecc{W}^{(k_2)}(s) \right|^p \right]= O(\epsilon^\beta)
\end{equation}
 as $\epsilon \to 0$, for all $0 < \beta < p/2$.

Combining all the estimates obtained, one has:
\begin{equation}
\mathbb{E}_1\left[\sup_{t \in [0,T]} |\vecc{p}^\epsilon(t)|^p \right] \leq C_1 \epsilon^\alpha  + C_2 \epsilon^{p} + C_3 \epsilon^{p b_2} + C_4 \epsilon^\beta
\end{equation}
where the $C_i$ are positive constants, $\alpha \geq p/2$ is some constant, and $b_2 > 0$ is the constant from Assumption \ref{a5_ch2}. The statement of the proposition follows.

\end{proof}

We also need the following estimate on a class of integrals with respect to products of the coordinates of the  process $\vecc{p}^{\epsilon}(t)$.

\begin{prop} \label{bound_on_integ_wrt_p_square}
Suppose that Assumptions \ref{aexis}-\ref{a5_ch2} hold and $\epsilon \in \mathcal{E}$. 
Let $h^\epsilon: \RR^+ \times  \RR^{n_1}  \to \RR$ be a family of functions, continuously differentiable in $\vecc{y} \in \RR^{n_1}$ and bounded (in $s \in \RR^+$ and $\vecc{y} \in \RR^{n_1}$), with bounded first derivatives $\vecc{\nabla}_{\vecc{y}} h^\epsilon(\vecc{y})$ for $\vecc{y} \in \RR^{n_1}$. Assume that $h^\epsilon$ and $\vecc{\nabla}_{\vecc{y}} h^\epsilon(\vecc{y})$ are $O(1)$ as $\epsilon \to 0$. Moreover, assume that $\frac{\partial}{\partial s}h^\epsilon$ is bounded (in all variables) and is $O(1)$ as $\epsilon \to 0$.

Then for any $p \geq 1$, $T>0$, $0< \beta < p/2$, $i,j = 1, \dots,n_2$,  in the limit as $\epsilon \to 0$ we have
\begin{equation}
\mathbb{E}\left[\sup_{t \in [0,T]} \left| \int_{0}^{t} h^\epsilon(s,\vecc{x}^{\epsilon}(s)) d( [\vecc{p}^{\epsilon}]_{i}(s)\cdot [\vecc{p}^{\epsilon}]_{j}(s)) \right|^{p}; \epsilon \leq \epsilon_1 \right] = O(\epsilon^\beta),\end{equation}
where $\vecc{x}^\epsilon(t)$ and $\vecc{p}^\epsilon(t)$ solve the SDEs \eqref{sde1}-\eqref{sde2} and the SDE \eqref{sdeforp} respectively, and $\epsilon_1$ is from Proposition \ref{mom_bound}. \end{prop}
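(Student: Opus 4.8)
The plan is to handle the stochastic integral $\int_0^t h^\epsilon(s,\vecc{x}^\epsilon(s))\, d([p^\epsilon]_i(s)[p^\epsilon]_j(s))$ by first computing the It\^o differential of the product $[p^\epsilon]_i(s)[p^\epsilon]_j(s)$ and then reorganizing the resulting terms so that no "bad" $1/\epsilon$ factor survives integration. Writing $\vecc{p}^\epsilon$ via \eqref{sdeforp}, the product rule gives
\begin{align}
d([p^\epsilon]_i [p^\epsilon]_j) &= [p^\epsilon]_i\, d[p^\epsilon]_j + [p^\epsilon]_j\, d[p^\epsilon]_i + d\langle [p^\epsilon]_i, [p^\epsilon]_j\rangle \nonumber \\
&= \frac{1}{\epsilon}\Big( [\vecc{a}_2 \vecc{p}^\epsilon]_j [p^\epsilon]_i + [\vecc{a}_2 \vecc{p}^\epsilon]_i [p^\epsilon]_j \Big) dt + \big( [b_2]_j [p^\epsilon]_i + [b_2]_i [p^\epsilon]_j \big) dt \nonumber \\
&\ \ \ \ + (\vecc{\sigma}_2 \vecc{\sigma}_2^*)_{ij}\, dt + [p^\epsilon]_i [\vecc{\sigma}_2 d\vecc{W}^{(k_2)}]_j + [p^\epsilon]_j [\vecc{\sigma}_2 d\vecc{W}^{(k_2)}]_i. \nonumber
\end{align}
The only dangerous contribution is the first line, which carries a $1/\epsilon$. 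The key trick — exactly the one used in \cite{hottovy2015smoluchowski,birrell2017small,LimWehr_Homog_NonMarkovian} — is to invert the relation: from \eqref{sdeforp}, $\frac{1}{\epsilon}\vecc{a}_2 \vecc{p}^\epsilon\, dt = d\vecc{p}^\epsilon - \vecc{b}_2\, dt - \vecc{\sigma}_2 d\vecc{W}^{(k_2)}$, so the $1/\epsilon$ term can be rewritten as $[p^\epsilon]_i\, d[p^\epsilon]_j + [p^\epsilon]_j\, d[p^\epsilon]_i$ minus lower-order pieces, which is circular, OR — the actually useful route — one solves for the singular term and feeds it back: since $\vecc{a}_2$ is uniformly Hurwitz (Assumption \ref{a0_ch2}), $\int_0^t h^\epsilon(s,\vecc{x}^\epsilon(s))\cdot \frac{1}{\epsilon}([\vecc{a}_2\vecc{p}^\epsilon]_j[p^\epsilon]_i + \cdots)\, ds$ is controlled by first writing $\frac{1}{\epsilon}(\vecc{a}_2\vecc{p}^\epsilon)_k$ in terms of $d[p^\epsilon]_k$ and integrating by parts, transferring the derivative onto the smooth factor $h^\epsilon(s,\vecc{x}^\epsilon(s))$.

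Concretely, I would integrate by parts: $\int_0^t h^\epsilon(s,\vecc{x}^\epsilon(s)) d([p^\epsilon]_i [p^\epsilon]_j) = h^\epsilon(t,\vecc{x}^\epsilon(t))[p^\epsilon]_i(t)[p^\epsilon]_j(t) - h^\epsilon(0,\vecc{x}^\epsilon(0))[p^\epsilon]_i(0)[p^\epsilon]_j(0) - \int_0^t [p^\epsilon]_i(s)[p^\epsilon]_j(s)\, dh^\epsilon(s,\vecc{x}^\epsilon(s))$. The boundary terms are bounded in $L^p$ by Proposition \ref{mom_bound} applied to $\vecc{p}^\epsilon$, giving $O(\epsilon^\beta)$ since $\|\vecc{p}^\epsilon\|^p$ has that order (and the $L^p$ norm of a product of two coordinates of $\vecc{p}^\epsilon$ is bounded by the $L^p$ norm of $|\vecc{p}^\epsilon|^2$, which is $O(\epsilon^{2\beta'})$ for appropriate $\beta'$). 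The remaining term $\int_0^t [p^\epsilon]_i [p^\epsilon]_j\, dh^\epsilon(s,\vecc{x}^\epsilon(s))$ is expanded via It\^o's formula applied to $h^\epsilon$, using the hypotheses that $h^\epsilon$, $\vecc{\nabla}_{\vecc{y}} h^\epsilon$ and $\partial_s h^\epsilon$ are bounded and $O(1)$, together with the SDE \eqref{sde1} for $\vecc{x}^\epsilon$; each resulting term is then either (i) a Lebesgue integral of a product of $[p^\epsilon]_i[p^\epsilon]_j$ with bounded functions and with $\vecc{v}^\epsilon = \vecc{p}^\epsilon/\epsilon$ factors — these combine to integrals of $\epsilon^{-1}|\vecc{p}^\epsilon|^3$-type quantities, handled by Proposition \ref{mom_bound} and Cauchy-Schwarz/H\"older to get $O(\epsilon^\beta)$ — or (ii) a stochastic integral, bounded via Burkholder-Davis-Gundy exactly as in the proof of Proposition \ref{mom_bound}.

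The main obstacle is bookkeeping the worst power of $\epsilon$: the factor $\vecc{v}^\epsilon = \vecc{p}^\epsilon/\epsilon$ appearing in $d\vecc{x}^\epsilon$ (through the $\vecc{a}_1 \vecc{v}^\epsilon$ term) means that after It\^o expansion one confronts terms of the shape $\epsilon^{-1}\int_0^t (\text{bdd}) [p^\epsilon]_i[p^\epsilon]_j [p^\epsilon]_k\, ds$. A naive bound gives $\epsilon^{-1}\cdot O(\epsilon^{3\beta/2})$ which is not obviously small; the resolution is to choose the free exponent $\beta$ in Proposition \ref{mom_bound} close enough to $p/2$ (i.e., apply Proposition \ref{mom_bound} with a larger exponent $p' > p$, so that $\mathbb{E}[\sup|\vecc{p}^\epsilon|^{p'}] = O(\epsilon^{\beta'})$ with $\beta' < p'/2$ but $\beta'/p' \to 1/2$) and use H\"older so that $\epsilon^{-1}\cdot \epsilon^{3\beta''}$ with $3\beta'' > 1 + \beta$ can be arranged for any target $\beta < p/2$. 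This is precisely the same delicate exponent-chasing as in the proof of Proposition \ref{mom_bound}, and I expect the argument to be a near-verbatim adaptation of the estimates there, with the integration-by-parts step above as the only genuinely new ingredient. The claim $\mathbb{E}[\sup_{t\in[0,T]}|\cdots|^p;\epsilon\le\epsilon_1] = O(\epsilon^\beta)$ for all $0<\beta<p/2$ then follows by collecting the bounds.
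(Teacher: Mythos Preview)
Your proposal is correct and follows essentially the same route as the paper: integrate by parts to move the differential onto $h^\epsilon(s,\vecc{x}^\epsilon(s))$, bound the boundary terms directly by Proposition~\ref{mom_bound}, and then observe that the worst surviving term is of the form $\epsilon^{-p}\,\mathbb{E}_1\sup_t |\vecc{p}^\epsilon|^{3p}$, which Proposition~\ref{mom_bound} (applied with exponent $3p$) controls as $O(\epsilon^{\beta'})$ for any $\beta'<3p/2$, so that $\epsilon^{-p}\cdot\epsilon^{\beta'}=O(\epsilon^\beta)$ for any $\beta<p/2$. Your exponent-chasing in the last paragraph is exactly the mechanism the paper uses; your treatment of the It\^o expansion of $dh^\epsilon(s,\vecc{x}^\epsilon(s))$ is in fact slightly more careful than the paper's (you track the $\vecc{b}_1$, $\vecc{\sigma}_1 d\vecc{W}^{(k_1)}$, and second-derivative contributions, all of which are subdominant), while the initial discussion of expanding $d([p^\epsilon]_i[p^\epsilon]_j)$ directly is a detour you rightly abandon.
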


\begin{proof} 
Let $\epsilon \in \mathcal{E}$, $t \in [0,T]$, and $i, j =1,\dots, n_2$.  An integration by parts gives:
\begin{align}
&\int_{0}^{t} h^\epsilon(s,\vecc{x}^{\epsilon}(s)) d( [\vecc{p}^{\epsilon}]_{i}(s)\cdot [\vecc{p}^{\epsilon}]_{j}(s))  \nonumber \\ 
&= h^\epsilon(t, \vecc{x}^\epsilon(t)) [\vecc{p}^{\epsilon}]_{i}(t)  [\vecc{p}^{\epsilon}]_{j}(t)  - h^\epsilon(t, \vecc{x}^\epsilon) [\vecc{p}^{\epsilon}]_{i}   [\vecc{p}^{\epsilon}]_{j}   \nonumber \\ 
&\ \ \ \ - \int_0^t [\vecc{p}^{\epsilon}]_{i}(s) [\vecc{p}^{\epsilon}]_{j}(s) \left( \vecc{\nabla}_{\vecc{x}^\epsilon} h^\epsilon(s, \vecc{x}^\epsilon(s)) \cdot \frac{\vecc{p}^\epsilon(s)}{\epsilon} + \frac{\partial}{\partial s} h^\epsilon(s, \vecc{x}^\epsilon(s))  \right) ds. 
\end{align}

Using the notation $\mathbb{E}_1[\cdot] = \mathbb{E}[\cdot; \epsilon \leq \epsilon_1]$, we estimate, for $p \geq 1$, 
\begin{align}
&\mathbb{E}_1 \left[\sup_{t \in [0,T]} \left| \int_{0}^{t} h^\epsilon(s,\vecc{x}^{\epsilon}(s)) d( [\vecc{p}^{\epsilon}]_{i}(s)\cdot [\vecc{p}^{\epsilon}]_{j}(s)) \right|^{p}\right] \nonumber \\
&\leq 4^{p-1}\bigg( \mathbb{E}_1 \sup_{t \in [0,T]} \left|  h^\epsilon(t, \vecc{x}^\epsilon(t)) [\vecc{p}^{\epsilon}]_{i}(t)  [\vecc{p}^{\epsilon}]_{j}(t)  \right|^p \nonumber \\
&\ \ \ \ \  \ \ \ + \mathbb{E}_1 \sup_{t \in [0,T]} \left| h^\epsilon(t, \vecc{x}^\epsilon) [\vecc{p}^{\epsilon}]_{i}   [\vecc{p}^{\epsilon}]_{j}   \right|^p \nonumber \\
&\ \ \ \ \ \ \  \ + \mathbb{E}_1 \sup_{t \in [0,T]} \left| \int_0^t [\vecc{p}^{\epsilon}]_{i}(s) [\vecc{p}^{\epsilon}]_{j}(s)  \vecc{\nabla}_{\vecc{x}^\epsilon} h^\epsilon(s, \vecc{x}^\epsilon(s)) \cdot \frac{\vecc{p}^\epsilon(s)}{\epsilon}  ds \right|^p \nonumber \\
&\ \ \ \ \ \ \ \ + \mathbb{E}_1 \sup_{t \in [0,T]} \left| \int_0^t [\vecc{p}^{\epsilon}]_{i}(s) [\vecc{p}^{\epsilon}]_{j}(s) \frac{\partial}{\partial s} h^\epsilon(s, \vecc{x}^\epsilon(s))   ds \right|^p  \bigg) \\
&\leq C(p,T) \bigg[ \|h^\epsilon\|^p_{\infty} \left(\mathbb{E}_1 \sup_{t \in [0,T]} |\vecc{p}^\epsilon(t)|^{2p} +  \mathbb{E}_1 |\vecc{p}^\epsilon|^{2p} \right) \nonumber \\
&\ \ \ \ \ \ \ \ + \frac{1}{\epsilon^p} \mathbb{E}_1 \sup_{t \in [0,T]} \left| \int_0^t [\vecc{p}^{\epsilon}]_{i}(s) [\vecc{p}^{\epsilon}]_{j}(s)  [\vecc{\nabla}_{\vecc{x}^\epsilon} h^\epsilon]_k (s, \vecc{x}^\epsilon(s)) [\vecc{p}^\epsilon]_k(s)  ds \right|^p \nonumber \\
&\ \ \ \ \ \ \ \  + \left\|\frac{\partial}{\partial s} h^\epsilon\right\|_{\infty}^p \cdot \mathbb{E}_1 \sup_{t \in [0,T]} |\vecc{p}^\epsilon(t)|^{2p}   \bigg],
\end{align}
where $C(p,T) > 0$ is a constant, $\| g^\epsilon \|_{\infty} := \sup_{s \in [0,T], \vecc{y} \in \RR^{n_{1}}} |g^\epsilon(s, \vecc{y})|$, and we have used Einstein's summation over repeated indices convention.

Now, estimating as before, we obtain:
\begin{align}
&\mathbb{E}_1 \sup_{t \in [0,T]} \left| \int_0^t [\vecc{p}^{\epsilon}]_{i}(s) [\vecc{p}^{\epsilon}]_{j}(s)  [\vecc{\nabla}_{\vecc{x}^\epsilon} h^\epsilon]_k (s, \vecc{x}^\epsilon(s)) [\vecc{p}^\epsilon]_k(s)  ds \right|^p \nonumber \\
&\leq D(p,T) \| \vecc{\nabla}_{\vecc{x}^\epsilon} h^\epsilon\|_{\infty} \cdot \mathbb{E}_1 \sup_{t \in [0,T]} |\vecc{p}^\epsilon(t)|^{3p},
\end{align}
where $D(p,T)>0$ is a constant. 

By our assumptions, all the quantities in the form $\| \cdot \|_{\infty}$ are bounded and are $O(1)$ as $\epsilon \to 0$.  Therefore, collecting the above estimates, using Assumption \ref{a2_ch2}, and applying Proposition \ref{mom_bound}, we have, for $p \geq 1$, $T>0$, $i,j=1,\dots,n_2$,  
\begin{align}
&\mathbb{E}_1 \left[\sup_{t \in [0,T]} \left| \int_{0}^{t} h^\epsilon(s,\vecc{x}^{\epsilon}(s)) d( [\vecc{p}^{\epsilon}]_{i}(s)\cdot [\vecc{p}^{\epsilon}]_{j}(s)) \right|^{p}\right] = O(\epsilon^\beta),
\end{align}
for every $0 < \beta < p/2$. 

\end{proof}


Now we proceed to prove Theorem \ref{mainthm}.
Using the above moment estimates and the proof techniques in \cite{birrell2017small, 2017BirrellLatest}, we are going to first obtain the convergence of $\vecc{x}^\epsilon_{t}$ to $\vecc{X}_{t}$ in the limit as $\epsilon \to 0$ in the following sense:   for all finite $T>0$, $p \geq 1$,
\begin{equation} \label{nonst}
\mathbb{E}\left[ \sup_{t \in [0,T]} |\vecc{x}_t^\epsilon - \vecc{X}_t|^p; \epsilon \leq \epsilon_1 \right] \to 0,
\end{equation}
as $\epsilon \to 0$, where the $\epsilon_1$ is from Proposition \ref{mom_bound}. The main tools are well known ordinary and stochastic integral inequalities, as well as a Gronwall type argument. This result would then imply that for all finite $T>0$,  $\sup_{t \in [0,T]} |\vecc{x}_t^\epsilon - \vecc{X}_t| \to 0$ in probability, in the limit as $\epsilon \to 0$ (see Lemma 1 in \cite{LimWehr_Homog_NonMarkovian}).

\begin{proof} (Proof of Theorem \ref{mainthm})
Let $T>0$ and recall that $[\vecc{B}]_{i,j}$ denotes the $(i,j)$-entry of a matrix $\vecc{B}$.
First, we assume that $p>2$.

From \eqref{sde2}, we have, for every $\epsilon > 0$, $t \in [0,T]$,
\begin{align}
\vecc{v}^{\epsilon}(t) dt &= \epsilon \vecc{a}_{2}^{-1}(t,\vecc{x}^{\epsilon}(t),\epsilon) d\vecc{v}^{\epsilon}(t)  - \vecc{a}_{2}^{-1}(t,\vecc{x}^{\epsilon}(t),\epsilon) \vecc{b}_{2}(t,\vecc{x}^{\epsilon}(t),\epsilon) dt  \nonumber \\ 
&\ \ \ \ - \vecc{a}_{2}^{-1}(t,\vecc{x}^{\epsilon}(t),\epsilon) \vecc{\sigma}_2(t,\vecc{x}^{\epsilon}(t), \epsilon) d\vecc{W}^{(k_2)}(t).
\end{align}

Substituting this into \eqref{sde1}, we obtain:
\begin{align}
d\vecc{x}^{\epsilon}(t) &= \epsilon \vecc{a}_{1}(t,\vecc{x}^{\epsilon}(t), \epsilon)  \vecc{a}_{2}^{-1}(t,\vecc{x}^{\epsilon}(t),\epsilon) d\vecc{v}^{\epsilon}(t) \nonumber \\
&\ \ \  - \vecc{a}_{1}(t, \vecc{x}^{\epsilon}(t), \epsilon) \vecc{a}_{2}^{-1}(t, \vecc{x}^{\epsilon}(t),\epsilon) \vecc{b}_{2}(t, \vecc{x}^{\epsilon}(t),\epsilon) dt \nonumber \\
&\ \ \ - \vecc{a}_{1}(t, \vecc{x}^{\epsilon}(t), \epsilon) \vecc{a}_{2}^{-1}(t, \vecc{x}^{\epsilon}(t),\epsilon) \vecc{\sigma}_2(t, \vecc{x}^{\epsilon}(t), \epsilon) d\vecc{W}^{(k_2)}(t) \nonumber \\
&\ \ \  + \vecc{b}_{1}(t, \vecc{x}^{\epsilon}(t),\epsilon) dt + \vecc{\sigma}_{1}(t, \vecc{x}^{\epsilon}(t),\epsilon) d\vecc{W}^{(k_1)}(t).
\end{align}

In integral form, we have:
\begin{align}
\vecc{x}^{\epsilon}(t) &= \vecc{x}^\epsilon + \epsilon \int_0^t  \vecc{a}_{1}(s, \vecc{x}^{\epsilon}(s), \epsilon)  \vecc{a}_{2}^{-1}(s, \vecc{x}^{\epsilon}(s),\epsilon) d\vecc{v}^{\epsilon}(s) \nonumber \\
&\ \ \ \ + \int_0^t \{ \vecc{b}_{1}(s, \vecc{x}^{\epsilon}(s),\epsilon) -\vecc{a}_{1}(s, \vecc{x}^{\epsilon}(s), \epsilon) \vecc{a}_{2}^{-1}(s, \vecc{x}^{\epsilon}(s),\epsilon) \vecc{b}_{2}(s, \vecc{x}^{\epsilon}(s),\epsilon)  \} ds \nonumber \\
&\ \ \ \ - \int_0^t \vecc{a}_{1}(s, \vecc{x}^{\epsilon}(s), \epsilon) \vecc{a}_{2}^{-1}(s, \vecc{x}^{\epsilon}(s),\epsilon) \vecc{\sigma}_2(s, \vecc{x}^{\epsilon}(s), \epsilon) d\vecc{W}^{(k_2)}(s) \nonumber \\ 
&\ \ \ \ + \int_0^t \vecc{\sigma}_{1}(s, \vecc{x}^{\epsilon}(s),\epsilon) d\vecc{W}^{(k_1)}(s).
\end{align}

Its $i$th component, $[\vecc{x}^{\epsilon}]_{i}(t)$ ($i=1,2,\dots,n_1$) is (recall that we are employing Einstein's summation convention):
\begin{align}
[\vecc{x}^{\epsilon}]_i(t) &= [\vecc{x}^\epsilon]_i + \epsilon \int_0^t [\vecc{a}_{1}\vecc{a}_{2}^{-1}]_{i,j}(s, \vecc{x}^{\epsilon}(s), \epsilon) \cdot d[\vecc{v}^{\epsilon}]_j(s) \nonumber \\ 
&\ \ \ \ + \int_0^t \{ [\vecc{b}_{1}]_i(s, \vecc{x}^{\epsilon}(s),\epsilon) -[\vecc{a}_{1}\vecc{a}_{2}^{-1}\vecc{b}_{2}]_{i}(s, \vecc{x}^{\epsilon}(s), \epsilon)  \} ds \nonumber \\
&\ \ \ \ - \int_0^t [\vecc{a}_{1}\vecc{a}_{2}^{-1}\vecc{\sigma}_2]_{i,j}(s, \vecc{x}^{\epsilon}(s), \epsilon) \cdot d[\vecc{W}^{(k_2)}]_j(s) \nonumber \\
&\ \ \ \ + \int_0^t [\vecc{\sigma}_{1}]_{i,j}(s, \vecc{x}^{\epsilon}(s),\epsilon) \cdot d[\vecc{W}^{(k_1)}]_j(s).
\end{align}

Next, we perform integration by parts in the second term on the right hand side above:
\begin{align}
&\int_0^t [S^{\epsilon}]_i(s, \vecc{x}^\epsilon(s),\vecc{v}^\epsilon(s),\epsilon)ds :=\epsilon \int_0^t [\vecc{a}_{1}\vecc{a}_{2}^{-1}]_{i,j}(s, \vecc{x}^{\epsilon}(s), \epsilon) \cdot d[\vecc{v}^{\epsilon}]_j(s)  \\
&= \epsilon  [\vecc{a}_{1}\vecc{a}_{2}^{-1}]_{i,j}(t, \vecc{x}^{\epsilon}(t),\epsilon) \cdot [\vecc{v}^{\epsilon}]_j(t) - \epsilon  [\vecc{a}_{1}\vecc{a}_{2}^{-1}]_{i,j}(0, \vecc{x},\epsilon) \cdot [\vecc{v}^\epsilon]_j \nonumber \\
&\ \ \ \ \ - \int_0^t \frac{\partial}{\partial [\vecc{x}^{\epsilon}]_l(s)}\bigg(  [\vecc{a}_{1}\vecc{a}_{2}^{-1}]_{i,j}(s, \vecc{x}^{\epsilon}(s),\epsilon) \bigg) \cdot d[\vecc{x}^\epsilon]_l(s) \cdot \epsilon [\vecc{v}^{\epsilon}]_j(s) \nonumber \\
&\ \ \ \ \ -  \int_0^t  \frac{\partial}{\partial s}\left([\vecc{a}_1 \vecc{a}_2^{-1}]_{i,j}(s, \vecc{x}^\epsilon(s),\epsilon) \right) \cdot \epsilon [\vecc{v}^\epsilon]_j(s) ds.  \label{88}
\end{align}

Substituting the following expression for $d[\vecc{x}^\epsilon]_l(s)$:
\begin{align}
d[\vecc{x}^\epsilon]_l(s) &= [\vecc{a}_1]_{l,k}(s,\vecc{x}^\epsilon(s),\epsilon)[\vecc{v}^\epsilon]_k(s) ds + [\vecc{b}_1]_l(s,\vecc{x}^\epsilon(s),\epsilon)ds \nonumber \\
&\ \ \ \  + [\vecc{\sigma}_1]_{l,k}(s,\vecc{x}^\epsilon(s),\epsilon)d[\vecc{W}^{(k_1)}]_k(s)
\end{align}
into  \eqref{88}, we obtain:
\begin{align}
&\int_0^t [S^{\epsilon}]_i(s, \vecc{x}^\epsilon(s),\vecc{v}^\epsilon(s),\epsilon)ds \nonumber \\
&= \epsilon  [\vecc{a}_{1}\vecc{a}_{2}^{-1}]_{i,j}(t, \vecc{x}^{\epsilon}(t),\epsilon) \cdot [\vecc{v}^{\epsilon}]_j(t) - \epsilon  [\vecc{a}_{1}\vecc{a}_{2}^{-1}]_{i,j}(0, \vecc{x},\epsilon) \cdot [\vecc{v}^\epsilon]_j \nonumber \\
&  - \int_0^t \frac{\partial}{\partial [\vecc{x}^{\epsilon}]_l(s)}\bigg(  [\vecc{a}_{1}\vecc{a}_{2}^{-1}]_{i,j}(s, \vecc{x}^{\epsilon}(s),\epsilon) \bigg) \cdot [\vecc{b}_1]_l(s, \vecc{x}^\epsilon(s),\epsilon)  \cdot \epsilon [\vecc{v}^{\epsilon}]_j(s) ds \nonumber \\
& - \int_0^t \frac{\partial}{\partial [\vecc{x}^{\epsilon}]_l(s)}\bigg(  [\vecc{a}_{1}\vecc{a}_{2}^{-1}]_{i,j}(s, \vecc{x}^{\epsilon}(s),\epsilon) \bigg)  [\vecc{\sigma}_1]_{l,k}(s, \vecc{x}^\epsilon(s),\epsilon)  \epsilon [\vecc{v}^{\epsilon}]_j(s) d[\vecc{W}^{(k_1)}]_{k}(s) \nonumber \\
& - \int_0^t \frac{\partial}{\partial [\vecc{x}^{\epsilon}]_l(s)}\bigg(  [\vecc{a}_{1}\vecc{a}_{2}^{-1}]_{i,j}(s, \vecc{x}^{\epsilon}(s),\epsilon) \bigg)  [\vecc{a}_1]_{l,k}(s, \vecc{x}^\epsilon(s),\epsilon)  \epsilon [\vecc{v}^\epsilon]_k(s)  [\vecc{v}^{\epsilon}]_j(s) ds \nonumber \\
& -  \int_0^t  \frac{\partial}{\partial s}\left([\vecc{a}_1 \vecc{a}_2^{-1}]_{i,j}(s, \vecc{x}^\epsilon(s),\epsilon) \right) \cdot \epsilon [\vecc{v}^\epsilon]_j(s) ds. 
\end{align}

Next, we apply It\^o formula to $ \epsilon \vecc{v}^{\epsilon}(t) (\epsilon \vecc{v}^{\epsilon}(t))^{*}  \in \RR^{n_2\times n_2}$:
\begin{align}
&d[\epsilon \vecc{v}^{\epsilon}(t) (\epsilon \vecc{v}^{\epsilon}(t))^{*}] \nonumber \\
&= \epsilon d\vecc{v}^{\epsilon}(t) \cdot \epsilon (\vecc{v}^{\epsilon}(t))^* + \epsilon \vecc{v}^{\epsilon}(t) \cdot \epsilon d(\vecc{v}^{\epsilon}(t))^{*} + d[\epsilon \vecc{v}^{\epsilon}(t)] \cdot d[ (\epsilon \vecc{v}^{\epsilon}(t))^{*}] \\
&= \left[\vecc{a}_{2}(t, \vecc{x}^{\epsilon}(t),\epsilon) \vecc{v}^{\epsilon}(t) dt + \vecc{b}_{2}(t, \vecc{x}^{\epsilon}(t),\epsilon) dt + \vecc{\sigma}_2(t, \vecc{x}^{\epsilon}(t), \epsilon) d\vecc{W}^{(k_2)}(t) \right] \epsilon \vecc{v}^\epsilon(t)^{*} \nonumber \\
&\ \ + \epsilon \vecc{v}^\epsilon(t)\left[
\vecc{a}_{2}(t, \vecc{x}^{\epsilon}(t),\epsilon) \vecc{v}^{\epsilon}(t) dt + \vecc{b}_{2}(t, \vecc{x}^{\epsilon}(t),\epsilon) dt + \vecc{\sigma}_2(t, \vecc{x}^{\epsilon}(t), \epsilon) d\vecc{W}^{(k_2)}(t) \right]^{*}\nonumber \\
&\  \ + \vecc{\sigma}_2(t, \vecc{x}^{\epsilon}(t), \epsilon)\vecc{\sigma}_2^*(t, \vecc{x}^{\epsilon}(t), \epsilon) dt.
\end{align}

Denoting $\vecc{J}^\epsilon(t) := \epsilon \vecc{v}^\epsilon(t) (\vecc{v}^\epsilon(t))^{*}$,
we can rewrite the above as:
\begin{equation} \label{lyap_prelimit}
-\vecc{a}_{2}(t, \vecc{x}^{\epsilon}(t),\epsilon) \vecc{J}^\epsilon(t)dt - \vecc{J}^\epsilon(t) \vecc{a}_{2}^*(t, \vecc{x}^{\epsilon}(t),\epsilon)dt = \vecc{F}^{\epsilon}_1(t) dt + \vecc{F}^{\epsilon}_2(t) dt + \vecc{F}^{\epsilon}_3(t) dt,
\end{equation}
where
\begin{align}
\vecc{F}^{\epsilon}_1(t) dt &= -d[\epsilon \vecc{v}^{\epsilon}(t) (\epsilon \vecc{v}^{\epsilon}(t))^{*}], \\
\vecc{F}^{\epsilon}_2(t) dt &= (\vecc{b}_{2}(t, \vecc{x}^{\epsilon}(t),\epsilon) dt + \vecc{\sigma}_2(t, \vecc{x}^{\epsilon}(t), \epsilon)d\vecc{W}^{(k_2)}(t) )\epsilon (\vecc{v}^{\epsilon}(t))^{*} \nonumber \\
&\ \ \ \ \ + \epsilon \vecc{v}^{\epsilon}(t)(\vecc{b}_{2}(t, \vecc{x}^{\epsilon}(t),\epsilon) dt + \vecc{\sigma}_2(t,\vecc{x}^{\epsilon}(t), \epsilon) d\vecc{W}^{(k_2)}(t))^{*},\\
\vecc{F}^{\epsilon}_3(t) &= \vecc{\sigma}_2(t,\vecc{x}^{\epsilon}(t), \epsilon) \vecc{\sigma}_2(t,\vecc{x}^{\epsilon}(t), \epsilon)^{*}.
\end{align}

Since $-\vecc{a}_2(t, \vecc{x}^\epsilon(t),\epsilon)$ is positive stable uniformly (in $t$, $\vecc{x}^\epsilon$ and $\epsilon$) by Assumption \ref{a0_ch2}, the solution of the Lyapunov equation \eqref{lyap_prelimit} can be represented as:
\begin{equation}
\vecc{J}^\epsilon(t)  = \vecc{J}_1^\epsilon(t) + \vecc{J}_2^\epsilon(t) + \vecc{J}_3^\epsilon(t),
\end{equation}
where
\begin{align}
\vecc{J}_n^\epsilon(t) &= \int_0^\infty  e^{\vecc{a}_2(t, \vecc{x}^\epsilon(t),\epsilon)y}  \vecc{F}^{\epsilon}_n(t) e^{\vecc{a}^*_2(t, \vecc{x}^\epsilon(t),\epsilon)y} dy
\end{align}
for $n=1,2,3$.

Therefore, for $s \in [0,T]$,
\begin{align}
&\epsilon [\vecc{v}^\epsilon]_j(s) [\vecc{v}^\epsilon]_k(s)ds \nonumber \\
&=-\int_0^\infty  \bigg[e^{\vecc{a}_2(s, \vecc{x}^\epsilon(s),\epsilon)y}\bigg]_{j,p_1}  \cdot \bigg[ d[\epsilon \vecc{v}^{\epsilon}(s) (\epsilon \vecc{v}^{\epsilon}(s))^{*}] \bigg]_{p_1,p_2} \cdot \bigg[ e^{\vecc{a}^*_2(s, \vecc{x}^\epsilon(s),\epsilon)y}\bigg]_{p_2,k} dy \nonumber \\
&\ \ \ +  \int_0^\infty  \bigg[ e^{\vecc{a}_2(s, \vecc{x}^\epsilon(s),\epsilon)y} \bigg]_{j,p_1}  \cdot\bigg[ (\vecc{b}_{2}(s, \vecc{x}^{\epsilon}(s),\epsilon) ds \nonumber \\ 
&\hspace{1cm}  + \vecc{\sigma}_2(s, \vecc{x}^{\epsilon}(s), \epsilon)d\vecc{W}^{(k_2)}(s) )\epsilon (\vecc{v}^{\epsilon}(s))^{*} \bigg]_{p_1,p_2} \cdot \bigg[  e^{\vecc{a}^*_2(s, \vecc{x}^\epsilon(s),\epsilon)y} \bigg]_{p_2,k} dy  \nonumber \\
&\ \ \ + \int_0^\infty  \bigg[ e^{\vecc{a}_2(s, \vecc{x}^\epsilon(s),\epsilon)y} \bigg]_{j,p_1} \cdot \bigg[ \epsilon \vecc{v}^{\epsilon}(s)(\vecc{b}_{2}(s,\vecc{x}^{\epsilon}(s),\epsilon) ds \nonumber \\
&\hspace{1cm} + \vecc{\sigma}_2(s,\vecc{x}^{\epsilon}(s), \epsilon) d\vecc{W}^{(k_2)}(s))^{*} \bigg]_{p_1,p_2} \cdot \bigg[ e^{\vecc{a}^*_2(s,\vecc{x}^\epsilon(s),\epsilon)y} \bigg]_{p_2,k} dy \nonumber \\
&\ \ \ + \int_0^\infty \bigg[  e^{\vecc{a}_2(s, \vecc{x}^\epsilon(s),\epsilon)y} \bigg]_{j,p_1} \cdot \bigg[ \vecc{\sigma}_2(s, \vecc{x}^{\epsilon}(s), \epsilon) \vecc{\sigma}_2(s, \vecc{x}^{\epsilon}(s), \epsilon)^{*} ds \bigg]_{p_1,p_2} \nonumber \\ 
&\hspace{1cm} \cdot \bigg[ e^{\vecc{a}^*_2(s, \vecc{x}^\epsilon(s),\epsilon)y} \bigg]_{p_2,k} dy.
\end{align}

On the other hand, by \eqref{mainlimitingeqn}, 
\begin{align}
\vecc{X}(t) &= \vecc{x} + \int_0^t  [\vecc{B}_1(s, \vecc{X}(s))-\vecc{A}_1(s, \vecc{X}(s))\vecc{A}_2^{-1}(s, \vecc{X}(s))\vecc{B}_2(s, \vecc{X}(s))] ds\nonumber \\
&\ \ \ \ \  + \int_0^t \vecc{S}(s, \vecc{X}(s)) ds  + \int_0^t \vecc{\Sigma}_1(s, \vecc{X}(s)) d\vecc{W}^{(k_1)}(s) \nonumber \\ 
&\ \ \ \ \ - \int_0^t \vecc{A}_1(s, \vecc{X}(s)) \vecc{A}_2^{-1}(s, \vecc{X}(s))\vecc{\Sigma}_2(s, \vecc{X}(s)) d\vecc{W}^{(k_2)}(s).
\end{align}

We use again the notation  $\mathbb{E}_1[ \cdot ] := \mathbb{E}[\cdot; \epsilon \leq \epsilon_1]$, where $\epsilon_1 > 0$ is the random variable from Proposition \ref{mom_bound}. 
 
For any $p > 2$, $T>0$, $i=1,\dots,n_1$ (recall that $[\vecc{b}]_i$ denotes the $i$th component of vector $\vecc{b}$), we estimate: 
\begin{align}
&\mathbb{E}_1\left[ \sup_{t \in [0,T]} |[\vecc{x}^{\epsilon}(t) - \vecc{X}(t)]_i|^p \right] \nonumber \\
&\leq 6^{p-1}\bigg\{ \mathbb{E}_1\left[|\vecc{x}^\epsilon - \vecc{x}|^p\right] \nonumber \\
&\hspace{0.3cm}  +   \mathbb{E}_1\left[\sup_{t \in [0,T]}  \bigg| \int_0^t \bigg[\vecc{S}_\epsilon(s, \vecc{x}^\epsilon(s),\vecc{v}^\epsilon(s),\epsilon)- \vecc{S}(s, \vecc{X}(s))\bigg]_i  ds \bigg|^p \right]  \nonumber \\
&\hspace{0.3cm} + \mathbb{E}_1 \bigg[ \sup_{t \in [0,T]} \bigg( \int_0^t \bigg| \bigg[ \vecc{a}_{1}(s, \vecc{x}^{\epsilon}(s), \epsilon) \vecc{a}_{2}^{-1}(s, \vecc{x}^{\epsilon}(s),\epsilon) \vecc{b}_{2}(s, \vecc{x}^{\epsilon}(s),\epsilon) \nonumber \\
&\hspace{2cm} - \vecc{A}_1(s, \vecc{X}(s))\vecc{A}_2^{-1}(s,\vecc{X}(s))\vecc{B}_2(s,\vecc{X}(s))  \bigg]_i \bigg|   ds \bigg)^p \bigg]  \nonumber \\
&\hspace{0.3cm} + \mathbb{E}_1 \left[ \sup_{t \in [0,T]} \bigg( \int_0^t \bigg| \bigg[ \vecc{b}_{1}(s,\vecc{x}^{\epsilon}(s),\epsilon) - \vecc{B}_1(s,\vecc{X}(s)) \bigg]_i \bigg| ds \bigg)^p  \right]   \nonumber \\
&\hspace{0.3cm} + \mathbb{E}_1 \bigg[ \sup_{t \in [0,T]}  \bigg| \int_0^t \bigg[ \vecc{a}_{1}(s,\vecc{x}^{\epsilon}(s), \epsilon) \vecc{a}_{2}^{-1}(s,\vecc{x}^{\epsilon}(s),\epsilon) \vecc{\sigma}_2(s,\vecc{x}^{\epsilon}(s), \epsilon) \nonumber \\
&\hspace{1cm} - \vecc{A}_1(s,\vecc{X}(s)) \vecc{A}_2^{-1}(s,\vecc{X}(s))\vecc{\Sigma}_2(s,\vecc{X}(s)) \bigg]_{i,j} d[\vecc{W}^{(k_2)}]_j(s)  \bigg|^p \bigg]  \nonumber \\
&\hspace{0.3cm} + \mathbb{E}_1 \left[\sup_{t \in [0,T]}  \bigg| \int_0^t \bigg[\vecc{\sigma}_{1}(s,\vecc{x}^{\epsilon}(s),\epsilon) - \vecc{\Sigma}_1(s,\vecc{X}(s))  \bigg]_{i,j} d[\vecc{W}^{(k_1)}]_j(s) \bigg|^p \right]   \ \bigg\} \\
&=: 6^{p-1}\left(\sum_{k=0}^5 R_k \right).
\end{align}


By Assumption \ref{a2_ch2},  $R_0 = \mathbb{E}_1 \left[|\vecc{x}^\epsilon - \vecc{x}|^p\right] \leq  \mathbb{E}\left[|\vecc{x}^\epsilon - \vecc{x}|^p \right]  = O(\epsilon^{ p r_0})$ as $\epsilon \to 0$, where $r_0 > 1/2$ is a constant. 
We now estimate each of the $R_k$, $k=1,\dots,5$.


We have:
\begin{align}
R_3 &\leq  \mathbb{E}_1 \sup_{t \in [0,T]} \bigg( \int_0^t | \vecc{b}_{1}(s,\vecc{x}^{\epsilon}(s),\epsilon) - \vecc{B}_1(s,\vecc{X}(s))  | ds \bigg)^p  \\
&= \mathbb{E}_1  \sup_{t \in [0,T]} \bigg( \int_0^t |\vecc{b}_{1}(s,\vecc{x}^\epsilon(s),\epsilon) - \vecc{b}_1(s,\vecc{X}(s),\epsilon) + \vecc{b}_1(s,\vecc{X}(s),\epsilon)  \nonumber \\ 
&\ \ \ \ \  \  \ \   \ \ \ -   \vecc{B}_1(s,\vecc{X}(s)) | ds \bigg)^p  \nonumber \\
&\leq 2^{p-1} \bigg[  \mathbb{E}_1  \sup_{t \in [0,T]} \left(\int_0^t  |\vecc{b}_{1}(s,\vecc{x}^{\epsilon}(s),\epsilon) - \vecc{b}_1(s,\vecc{X}(s),\epsilon)| ds \right)^p \nonumber \\
&\ \ \  +  \mathbb{E}_1  \sup_{t \in [0,T]} \left(\int_0^t |\vecc{b}_1(s,\vecc{X}(s),\epsilon) -   \vecc{B}_1(s,\vecc{X}(s))| ds \right)^p  \bigg] \\
&\leq 2^{p-1}\left[ L^p(\epsilon) \mathbb{E}_1 \sup_{t \in [0,T]} \int_0^t |\vecc{x}^\epsilon(s)-\vecc{X}(s) |^p ds + T^p \beta_1(\epsilon)^p \mathbb{1}_{\{\vecc{b}_1 \neq \vecc{B}_1\}} \right]  \\
&\leq L_3(\epsilon,p,T) \int_0^T \mathbb{E}_1 \sup_{u \in [0,s]}  |\vecc{x}^\epsilon(u)-\vecc{X}(u) |^p ds + C_3(p,T) \beta_1(\epsilon)^p \mathbb{1}_{\{\vecc{b}_1 \neq \vecc{B}_1\}},
\end{align}
on the set $S_1 := \{\epsilon : \epsilon \leq \epsilon_1\}$, where $\mathbb{1}_{A}$ denotes the indicator function of a set $A$, $L_3(\epsilon,p,T) = O(1)$ as $\epsilon \to 0$ and $C_3(p,T)$ is a constant dependent on $p$ and $T$.   In the last two lines of the above estimate, we have used Assumption \ref{a1_ch2}, Assumption \ref{a5_ch2}, and the inequality:  \begin{equation}\mathbb{E}_1 \sup_{t\in [0,T]} \left( \int_0^t |\vecc{u}(s)| ds \right)^p \leq T^{p-1} \mathbb{E}_1 \int_0^T |\vecc{u}(s)|^p ds,  \end{equation}
where $\vecc{u}(s) \in \RR^{n_1}$ for $s \in [0,T]$ (recall that $L(\epsilon) = O(1)$ as $\epsilon \to 0$ by the  Assumption \ref{a1_ch2}).

Using again the above techniques, together with Lemma \ref{lipzlemma}, one obtains:
\begin{align}
R_2 &\leq L_2(\epsilon,p,T) \int_0^T \mathbb{E}_1 \sup_{u \in [0,s]}  |\vecc{x}^\epsilon(u)-\vecc{X}(u) |^p ds \nonumber \\
&\ \ \ \ + C_2(p,T)\left[\alpha_1(\epsilon)^p \mathbb{1}_{\{\vecc{a}_1 \neq \vecc{A}_1\}} +  \alpha_2(\epsilon)^p \mathbb{1}_{\{\vecc{a}_2 \neq \vecc{A}_2\}} + \beta_2(\epsilon)^p \mathbb{1}_{\{\vecc{b}_2 \neq \vecc{B}_2\}} \right],
\end{align}
on $S_1$, where $\alpha_1(\epsilon)$, $\alpha_2(\epsilon)$, $\beta_2(\epsilon)$ are from Assumption \ref{a1_ch2}, $L_2(\epsilon, p,T) = O(1)$ as $\epsilon \to 0$ and $C_2(p,T)$ is a constant.


To estimate  $R_5$, we use the Burkholder-Davis-Gundy inequality:
\begin{align}
R_5 &\leq C'_p \mathbb{E}_1  \bigg( \int_0^T \|\vecc{\sigma}_{1}(s,\vecc{x}^{\epsilon}(s),\epsilon) - \vecc{\Sigma}_1(s,\vecc{X}(s))\|_{F}^2 ds \bigg)^{p/2},
\end{align}
where $C'_p$ is a positive constant and $\|\cdot\|_F$ denotes the Frobenius norm. Using H\"older's inequality, Assumption \ref{a1_ch2}, Assumption \ref{a5_ch2}, and the above techniques, we obtain:
\begin{align}
R_5  &\leq C''_p \mathbb{E}_1  \bigg( \int_0^T \|\vecc{\sigma}_1(s,\vecc{x}^\epsilon(s),\epsilon)-\vecc{\sigma}_1(s,\vecc{X}(s),\epsilon)\|_{F}^2 ds  \bigg)^{p/2} \nonumber \\
&\ \ \ \   + C''_p \mathbb{E}_1 \bigg( \int_0^T \|\vecc{\sigma}_1(s,\vecc{X}(s),\epsilon) - \vecc{\Sigma}_1(s,\vecc{X}(s)) \|_{F}^2 ds  \bigg)^{p/2}    \\
&\leq C''_p T^{\frac{p}{2}-1} \int_0^T \mathbb{E}_1 \|\vecc{\sigma}_1(s,\vecc{x}^\epsilon(s),\epsilon)-\vecc{\sigma}_1(s,\vecc{X}(s),\epsilon)\|_{F}^p ds \nonumber \\
&\ \ \ \  + C'''_p |\gamma_1(\epsilon)|^p T^{\frac{p}{2}} \mathbb{1}_{\{\vecc{\sigma}_1 \neq \vecc{\Sigma}_1 \}} \\
&\leq L_5(\epsilon,p,T) \int_0^T \mathbb{E}_1 \sup_{u \in [0,s]}  |\vecc{x}^\epsilon(u)-\vecc{X}(u) |^p ds + C_5(p,T)\gamma_1(\epsilon)^p \mathbb{1}_{\{\vecc{\sigma}_1 \neq \vecc{\Sigma}_1\}},
\end{align}
on the set $S_1$, where $C_p''$ and $C_p'''$ are constants, $\gamma_1(\epsilon)$ is from Assumption \ref{a1_ch2}, $L_5(\epsilon,p,T) = O(1)$ as $\epsilon \to 0$ and $C_5(p,T)$ is a constant.

Similarly, using the above techniques and Lemma \ref{lipzlemma}, one can show:
\begin{align}
R_4 &\leq L_4(\epsilon,p,T) \int_0^T \mathbb{E}_1 \sup_{u \in [0,s]}  |\vecc{x}^\epsilon(u)-\vecc{X}(u) |^p ds \nonumber \\
&\ \ \ \ + C_4(p,T)\left[\alpha_1(\epsilon)^p \mathbb{1}_{\{\vecc{a}_1 \neq \vecc{A}_1\}} +  \alpha_2(\epsilon)^p \mathbb{1}_{\{\vecc{a}_2 \neq \vecc{A}_2\}} + \gamma_2(\epsilon)^p \mathbb{1}_{\{\vecc{\sigma}_2 \neq \vecc{\Sigma}_2\}} \right],
\end{align}
on $S_1$,  where $\gamma_2(\epsilon)$ is from Assumption \ref{a1_ch2}, $L_4(\epsilon,p,T) = O(1)$ as $\epsilon \to 0$ and $C_4(p,T)$ is a constant.

To obtain a bound for $R_1$, first we estimate:
\begin{align}
& \bigg| \int_0^t \bigg[\vecc{S}^\epsilon(s,\vecc{x}^\epsilon(s),\vecc{v}^\epsilon(s),\epsilon)- \vecc{S}(s,\vecc{X}(s))\bigg]_i  ds \bigg| \nonumber \\
&\leq \bigg|\epsilon  [\vecc{a}_{1}\vecc{a}_{2}^{-1}]_{i,j}(t, \vecc{x}^{\epsilon}(t),\epsilon) \cdot [\vecc{v}^{\epsilon}]_j(t) - \epsilon  [\vecc{a}_{1}\vecc{a}_{2}^{-1}]_{i,j}(0, \vecc{x},\epsilon) \cdot [\vecc{v}]_j  \bigg| \nonumber \\
&\ \ \ + \left|\int_0^t \frac{\partial}{\partial s} \left( [\vecc{a}_1 \vecc{a}_2^{-1}]_{i,j}(s, \vecc{x}^\epsilon(s), \epsilon) \right) \cdot \epsilon [\vecc{v}^\epsilon]_j(s) ds  \right| \nonumber \\
&\ \  \ + \int_0^t \bigg| \frac{\partial}{\partial [\vecc{x}^{\epsilon}]_l(s)}\bigg(  [\vecc{a}_{1}\vecc{a}_{2}^{-1}]_{i,j}(s, \vecc{x}^{\epsilon}(s),\epsilon) \bigg) \cdot [\vecc{b}_1]_l(s, \vecc{x}^\epsilon(s),\epsilon)  \cdot \epsilon [\vecc{v}^{\epsilon}]_j(s) \bigg| ds \nonumber \\
&\ \  \ + \bigg| \int_0^t  \frac{\partial}{\partial [\vecc{x}^{\epsilon}]_l(s)}\bigg(  [\vecc{a}_{1}\vecc{a}_{2}^{-1}]_{i,j}(s, \vecc{x}^{\epsilon}(s),\epsilon) \bigg) \cdot [\vecc{\sigma}_1]_{l,k}(s, \vecc{x}^\epsilon(s),\epsilon)  \nonumber \\
&\ \ \ \ \ \ \ \ \  \cdot \epsilon [\vecc{v}^{\epsilon}]_j(s) d[\vecc{W}^{(k_1)}]_{k}(s) \bigg| \nonumber \\
&\ \ \  + \bigg|  \int_0^t \frac{\partial}{\partial [\vecc{x}^{\epsilon}]_l(s)}\bigg(  [\vecc{a}_{1}\vecc{a}_{2}^{-1}]_{i,j}(s, \vecc{x}^{\epsilon}(s),\epsilon) \bigg) \cdot [\vecc{a}_1]_{l,k}(s, \vecc{x}^\epsilon(s),\epsilon) \cdot [\vecc{J}_1^\epsilon]_{j,k}(s) ds \bigg| \nonumber \\
&\ \ \  + \bigg|  \int_0^t \frac{\partial}{\partial [\vecc{x}^{\epsilon}]_l(s)}\bigg(  [\vecc{a}_{1}\vecc{a}_{2}^{-1}]_{i,j}(s, \vecc{x}^{\epsilon}(s),\epsilon) \bigg) \cdot [\vecc{a}_1]_{l,k}(s, \vecc{x}^\epsilon(s),\epsilon) \cdot [\vecc{J}_2^\epsilon]_{j,k}(s) ds \bigg| \nonumber \\
&\ \ \  + \bigg| \int_0^t  \frac{\partial}{\partial [\vecc{X}]_l(s)}\left([\vecc{A}_1\vecc{A}_2^{-1}]_{i,j}(s, \vecc{X}(s)) \right) \cdot [\vecc{A}_1]_{l,k}(s, \vecc{X}(s))\cdot [\vecc{J}]_{j,k}(s) \nonumber \\
&\ \ \ \ \ \ \ \  \ - \frac{\partial}{\partial [\vecc{x}^{\epsilon}]_l(s)}\bigg(  [\vecc{a}_{1}\vecc{a}_{2}^{-1}]_{i,j}(s, \vecc{x}^{\epsilon}(s),\epsilon) \bigg) \cdot [\vecc{a}_1]_{l,k}(s, \vecc{x}^\epsilon(s),\epsilon) \cdot [\vecc{J}_3^\epsilon]_{j,k}(s) ds \bigg| \\
&=: \sum_{k=0}^6 \Pi_k,
\end{align}
and so $R_1 \leq 6^{p-1} \sum_{k=0}^6 \left( \mathbb{E}_1 \sup_{t\in [0,T]} |\Pi_k|^p \right) =: 6^{p-1} \sum_{k=0}^6 M_k. $

It is straightforward to show, using the boundedness assumptions of the theorem, that for $k=0,1,2,3,5$:
\begin{equation}
M_k \leq C_k(p,T) \cdot \mathbb{E}_1 \sup_{t\in [0,T]} |\epsilon \vecc{v}^\epsilon(t)|^p,
\end{equation}
where the $C_k$ are positive constants.

Applying Proposition \ref{bound_on_integ_wrt_p_square}, we obtain:
\begin{equation} \label{M4bound}
M_4 := \mathbb{E}_1 \sup_{t \in [0,T]} |\Pi_4|^p \leq C_4(p,T) \epsilon^\beta,
\end{equation}
on $S_1$, for all $0 < \beta < p/2$, as $\epsilon \to 0$,
where  $C_4(p,T)$ is a positive constant.

We now estimate $M_6$:
\begin{align}
&M_6 \nonumber \\
&\leq \mathbb{E}_1 \sup_{t \in [0,T]}\bigg( \int_0^t \bigg| \frac{\partial}{\partial [\vecc{x}^{\epsilon}]_l(s)}\bigg(  [\vecc{a}_{1}\vecc{a}_{2}^{-1}]_{i,j}(s,\vecc{x}^{\epsilon}(s),\epsilon) \bigg) \cdot [\vecc{a}_1]_{l,k}(s,\vecc{x}^\epsilon(s),\epsilon)   \nonumber \\
&\ \  \ \  \ \  \ \cdot [\vecc{J}_3^\epsilon]_{j,k}(s) -
 \frac{\partial}{\partial [\vecc{X}]_l(s)}\left([\vecc{A}_1\vecc{A}_2^{-1}]_{i,j}(s,\vecc{X}(s)) \right) \cdot [\vecc{A}_1]_{l,k}(s,\vecc{X}(s)) \nonumber \\
 &\ \ \ \ \ \  \  \cdot [\vecc{J}]_{j,k}(s) \bigg| ds
\bigg)^p  \\
&\leq C(p) \mathbb{E}_1 \sup_{t \in [0,T]}\bigg( \int_0^t \bigg| \frac{\partial}{\partial [\vecc{x}^{\epsilon}]_l(s)}\bigg(  [\vecc{a}_{1}\vecc{a}_{2}^{-1}]_{i,j}(s,\vecc{x}^{\epsilon}(s),\epsilon) \bigg) \cdot [\vecc{a}_1]_{l,k}(s, \vecc{x}^\epsilon(s),\epsilon) \nonumber \\
&\ \ \ \ \ \  - \frac{\partial}{\partial [\vecc{X}]_l(s)}\bigg([\vecc{A}_1\vecc{A}_2^{-1}]_{i,j}(s,\vecc{X}(s)) \bigg) \cdot [\vecc{A}_1]_{l,k}(s, \vecc{X}(s)) \bigg|^p
\cdot |[\vecc{J}_3^\epsilon]_{j,k}(s)|^p ds \bigg) \nonumber \\
&\ \  + C(p) \mathbb{E}_1 \sup_{t\in [0,T]} \bigg(\int_0^t \bigg| \frac{\partial}{\partial [\vecc{X}]_l(s)}\bigg([\vecc{A}_1\vecc{A}_2^{-1}]_{i,j}(s, \vecc{X}(s)) \bigg) \cdot [\vecc{A}_1]_{l,k}(s, \vecc{X}(s)) \bigg|^p \nonumber \\
&\hspace{3.3cm} \cdot |[\vecc{J}_3^\epsilon-\vecc{J}]_{j,k}(s)|^p ds\bigg)  \\
&\leq  C(p) \mathbb{E}_1 \sup_{t \in [0,T]}\bigg( \int_0^t \bigg| \frac{\partial}{\partial [\vecc{x}^{\epsilon}]_l(s)}\bigg(  [\vecc{a}_{1}\vecc{a}_{2}^{-1}]_{i,j}(s, \vecc{x}^{\epsilon}(s),\epsilon) \bigg) \cdot [\vecc{a}_1]_{l,k}(s, \vecc{x}^\epsilon(s),\epsilon) \nonumber \\
&\ \ \ \ \ \ - \frac{\partial}{\partial [\vecc{X}]_l(s)}\bigg([\vecc{A}_1\vecc{A}_2^{-1}]_{i,j}(s, \vecc{X}(s)) \bigg) \cdot [\vecc{A}_1]_{l,k}(s, \vecc{X}(s)) \bigg|^p
\cdot |[\vecc{J}_3^\epsilon]_{j,k}(s)|^p ds \bigg) \nonumber \\
&\ \ \ + C(p) \mathbb{E}_1 \sup_{t \in [0,T]} \int_0^t \|\vecc{J}_3^\epsilon(s) - \vecc{J}(s) \|_F^p ds, \label{intermed}
\end{align}
where the $C(p)$ are constants may vary from one expression to another.

Note that in the above, $\vecc{J}_3^\epsilon(s)$ and $\vecc{J}(s)$ are solutions to the Lyapunov equation
\begin{equation}
\vecc{a}_2(s, \vecc{x}^\epsilon(s),\epsilon)\vecc{J}_3^\epsilon(s) +\vecc{J}_3^\epsilon(s) \vecc{a}_2^*(s, \vecc{x}^\epsilon(s),\epsilon) = -(\vecc{\sigma}_2 \vecc{\sigma}^*_2)(s, \vecc{x}^\epsilon(s),\epsilon)
\end{equation}
and
\begin{equation}
\vecc{A}_2(s, \vecc{X}(s)) \vecc{J}(s) +\vecc{J}(s) \vecc{A}_2^{*}(s, \vecc{X}(s)) = -(\vecc{\Sigma}_2  \vecc{\Sigma}_2^{*})(s, \vecc{X}(s)).
\end{equation}
respectively.

Let $\vecc{H}^\epsilon(s) := \vecc{J}_3^\epsilon(s) - \vecc{J}(s) $ and $\vecc{G}^\epsilon(s) := \vecc{a}_2(s, \vecc{x}^\epsilon(s),\epsilon)-\vecc{A}_2(s, \vecc{X}(s))$.   After some algebraic manipulations with the above pair of Lyapunov equations, we obtain another Lyapunov equation:
\begin{align}
&\vecc{A}_2(s, \vecc{X}(s)) \vecc{H}^\epsilon(s) + \vecc{H}^\epsilon(s) \vecc{A}_2^*(s, \vecc{X}(s)) \nonumber \\
&=  (\vecc{\Sigma}_2 \vecc{\Sigma}_2^*)(s, \vecc{X}(s)) - (\vecc{\sigma}_2 \vecc{\sigma}_2^*)(s, \vecc{x}^\epsilon(s),\epsilon) - \vecc{G}^\epsilon(s)\vecc{J}_3^\epsilon(s) - \vecc{J}_3^\epsilon(s) (\vecc{G}^\epsilon)^*(s).
\end{align}

By the last statement in Assumption \ref{a5_ch2}, $\vecc{A}_2$ is positive stable uniformly (in $\vecc{X}$ and $s$), therefore the above Lyapunov equation has a unique solution:
\begin{align}
&\vecc{H}^\epsilon(s) = \int_0^\infty e^{\vecc{A}_2(s, \vecc{X}(s)) y} \bigg( -(\vecc{\Sigma}_2 \vecc{\Sigma}_2^*)(s, \vecc{X}(s)) + (\vecc{\sigma}_2 \vecc{\sigma}_2^*)(s, \vecc{x}^\epsilon(s),\epsilon) \nonumber \\
&\hspace{3cm} + \vecc{G}^\epsilon(s)\vecc{J}_3^\epsilon(s) + \vecc{J}_3^\epsilon(s) (\vecc{G}^\epsilon)^*(s)\bigg) e^{\vecc{A}^*_2(s, \vecc{X}(s))y} dy. \label{hdiff}
\end{align}

Using \eqref{hdiff}, the assumptions of the theorem, and estimating as before, we obtain: 
\begin{align}
\mathbb{E}_1 \sup_{t \in [0,T]} \int_0^t  \|\vecc{J}^\epsilon_3(s) - \vecc{J}(s)\|^p_F ds  &\leq C(\epsilon, p, T) \int_0^T \mathbb{E}_1 \sup_{u \in [0,s]} |\vecc{x}^\epsilon(u) - \vecc{X}(u)|^p ds \nonumber \\
&\ \ \ \ \  + D(p,T)[\alpha_2(\epsilon)^p \mathbb{1}_{\vecc{a}_2 \neq \vecc{A}_2} + \gamma_2(\epsilon)^p \mathbb{1}_{\vecc{\sigma}_2 \neq \vecc{\Sigma}_2}]
\end{align}
on the set $S_1$, where $C(\epsilon, p, T) = O(1)$ as $\epsilon \to 0$ and $D(p,T)$ is a positive constant, $\alpha_2(\epsilon)$ and $\gamma_2(\epsilon)$ are from Assumption \ref{a5_ch2}.

Applying the above estimates, Lemma \ref{lipzlemma} and techniques used earlier,  one obtains from \eqref{intermed}:
\begin{align}
M_6 &\leq L_6(\epsilon,p,T) \int_0^T \mathbb{E}_1 \sup_{u \in [0,s]}  |\vecc{x}^\epsilon(u)-\vecc{X}(u) |^p ds \nonumber \\
&\ \ \ \ + C_6(p,T)\bigg[\alpha_1(\epsilon)^p \mathbb{1}_{\{\vecc{a}_1 \neq \vecc{A}_1\}} +  \alpha_2(\epsilon)^p \mathbb{1}_{\{\vecc{a}_2 \neq \vecc{A}_2\}} + \gamma_2(\epsilon)^p \mathbb{1}_{\{\vecc{\sigma}_2 \neq \vecc{\Sigma}_2\}} +   \nonumber \\
&\ \ \ \ \ \ \ \ \ \ +
\theta_1(\epsilon)^p \mathbb{1}_{\{(\vecc{a}_1)_{\vecc{x}} \neq (\vecc{A}_1)_{\vecc{x}}\}} +  \theta_2(\epsilon)^p \mathbb{1}_{\{(\vecc{a}_2)_{\vecc{x}} \neq (\vecc{A}_2)_{\vecc{x}}\}} \bigg],
\end{align}
on $S_1$,  where $L_6(\epsilon,p,T)=O(1)$ as $\epsilon \to 0$, $C_6(p,T)$ is a positive constant, and $\alpha_i(\epsilon)$, $\theta_i(\epsilon)$ ($i=1,2$) and $\gamma_2(\epsilon)$ are from Assumption \ref{a5_ch2}.

Collecting the above estimates for the $M_k$, we obtain: 
\begin{align}
R_1 &\leq C_1(p,T) \bigg( \mathbb{E}_1  \sup_{t \in [0,T]} |\epsilon \vecc{v}^\epsilon(t)|^p \nonumber \\
&\ \ \ \ \ \ + \alpha_1(\epsilon)^p \mathbb{1}_{\{\vecc{a}_1 \neq \vecc{A}_1\}} +  \alpha_2(\epsilon)^p \mathbb{1}_{\{\vecc{a}_2 \neq \vecc{A}_2\}} + \gamma_2(\epsilon)^p \mathbb{1}_{\{\vecc{\sigma}_2 \neq \vecc{\Sigma}_2\}} +   \nonumber \\
&\ \ \ \ \ \ +
\theta_1(\epsilon)^p \mathbb{1}_{\{(\vecc{a}_1)_{\vecc{x}} \neq (\vecc{A}_1)_{\vecc{x}}\}} +  \theta_2(\epsilon)^p \mathbb{1}_{\{(\vecc{a}_2)_{\vecc{x}} \neq (\vecc{A}_2)_{\vecc{x}}\}} \bigg) \nonumber \\
&\ \ \ \ \ \  + C_2(\epsilon, p,T) \int_0^T \mathbb{E}_1 \sup_{u \in [0,s]} |\vecc{x}^\epsilon(u) - \vecc{X}(u)|^p ds + C_3(p,T) M_4 
\end{align}
on $S_1$, where $C_1(p,T)$ and $C_3(p,T)$ are constants, $C_2(\epsilon, p, T) = O(1)$ as $\epsilon \to 0$, and $M_4$ satisfies the bound in \eqref{M4bound}.

Using all the estimates for the $R_i$, we have:
\begin{align}
&\mathbb{E}_1 \left[\sup_{t \in [0,T]} |\vecc{x}^{\epsilon}(t) - \vecc{X}(t)|^p \right]  = \mathbb{E}_1 \left[ \sup_{t \in [0,T]} \sum_{k=1}^{n_1} |[\vecc{x}^{\epsilon}- \vecc{X}]_k(t)|^p  \right]  \\  &\leq n_1 \max_{k=1,\dots,n_1} \left\{ \mathbb{E}_1 \sup_{t \in [0,T]}  |[\vecc{x}^{\epsilon}- \vecc{X}]_k(t)|^p \right\}  \\
&\leq L(\epsilon,p,T,n_1) \int_0^T \mathbb{E}_1 \sup_{u \in [0,s]}  |\vecc{x}^\epsilon(u)-\vecc{X}(u) |^p ds  \nonumber \\
&\ \ \  + C(p,T,n_1)
\bigg( \epsilon^{p r_0} +  \mathbb{E}_1 \sup_{t \in [0,T]} |\epsilon \vecc{v}^\epsilon(t)|^p + M_4 \nonumber \\
&\ \ \ \ \ \ + \alpha_1(\epsilon)^p \mathbb{1}_{\{\vecc{a}_1 \neq \vecc{A}_1\}} +  \alpha_2(\epsilon)^p \mathbb{1}_{\{\vecc{a}_2 \neq \vecc{A}_2\}} + \gamma_1(\epsilon)^p \mathbb{1}_{\{\vecc{\sigma}_1 \neq \vecc{\Sigma}_1\}} \nonumber \\
&\ \ \ \ \ \  +   \gamma_2(\epsilon)^p \mathbb{1}_{\{\vecc{\sigma}_2 \neq \vecc{\Sigma}_2\}}  + \beta_1(\epsilon)^p \mathbb{1}_{\{\vecc{b}_1 \neq \vecc{B}_1\}} +  \beta_2(\epsilon)^p \mathbb{1}_{\{\vecc{B}_2 \neq \vecc{B}_2\}} \nonumber \\
&\ \ \ \ \  \ +\theta_1(\epsilon)^p \mathbb{1}_{\{(\vecc{a}_1)_{\vecc{x}} \neq (\vecc{A}_1)_{\vecc{x}}\}} +  \theta_2(\epsilon)^p \mathbb{1}_{\{(\vecc{a}_2)_{\vecc{x}} \neq (\vecc{A}_2)_{\vecc{x}}\}} \bigg)  \\
&\leq L(\epsilon,p,T,n_1) \int_0^T \mathbb{E}_1 \sup_{u \in [0,s]}  |\vecc{x}^\epsilon(u)-\vecc{X}(u) |^p ds  \nonumber \\
&\ \ \  + C(p,T,n_1)\epsilon^{r},
\end{align}
on $S_1$, where $L(\epsilon, p, T,n_1)=O(1)$ as $\epsilon \to 0$, $r$ is the rate of convergence \eqref{rate_mainresult} in the statement of the theorem, $C(p,T,n_1)$ is a constant that changes from line to line, and we have applied Proposition \ref{mom_bound}, Lemma \ref{lipzlemma} and Assumption \ref{a5_ch2} to get the last expression in the above estimate.

Finally, applying the Gronwall lemma gives:
\begin{align}
&\mathbb{E}_1 \left[\sup_{t \in [0,T]} |\vecc{x}^{\epsilon}(t) - \vecc{X}(t)|^p \right]   \leq  \epsilon^{r} \cdot C(p,T,n_1)  e^{L(\epsilon, p, T,n_1) T}
\end{align}
on $S_1$. 

\eqref{mainconv}  then follows for the case $p > 2$. The result for $0<p\leq 2$ follows by an application of the H\"older's inequality: for  $0<p\leq 2$, taking $q > 2$ so that $p/q < 1$, we have
\begin{align}
\mathbb{E}_1 \left[\sup_{t\in [0,T]} |\vecc{x}^\epsilon(t)-\vecc{X}(t)|^p  \right]  &\leq \bigg[ \mathbb{E}_1  \bigg( \sup_{t\in [0,T]} |\vecc{x}^\epsilon(t)-\vecc{X}(t)|^p \bigg)^{q/p} \bigg]^{p/q} \\
&= O(\epsilon^\beta),
\end{align} for all $0 < \beta < p'$,
as $\epsilon \to 0$.
The last statement on convergence in probabiity in the theorem follows from Lemma 1 in \cite{LimWehr_Homog_NonMarkovian}. 
\end{proof}

\section{An Implementation of Algorithm \ref{alg} under Assumption \ref{special}} \label{implem_alg}

We describe how Algorithm \ref{alg} can be applied to a large class of GLEs, satisfying Assumption \ref{special}.  For $i=2,4$, one can write
\begin{equation}
\vecc{Q}_i(z) = z^{d_i}\vecc{I} + \vecc{a}_{i,d_i-1}z^{d_i-1}+\dots+\vecc{a}_{i,1} z + \vecc{a}_{i,0},
\end{equation}
where the $\vecc{a}_{i,k}$ are related to the $\vecc{\Gamma}_{i,k}$ as follows:
\begin{align}
\vecc{a}_{i,0} &= \prod_{k=1}^{d_i} \vecc{\Gamma}_{i,k}, \nonumber \\
\vecc{a}_{i,1} &= \sum_{k_1, \dots, k_{d_i-1}=1,\dots,d_i: k_1  > \dots > k_{d_i-1}} \vecc{\Gamma}_{i,k_{1}} \vecc{\Gamma}_{i,k_2} \cdots \vecc{\Gamma}_{i,k_{d_i-1}}, \nonumber \\
&\vdots \nonumber \\
\vecc{a}_{i,d_i-2} &= \sum_{k_1,k_2=1,\dots,d_i: k_1>k_2} \vecc{\Gamma}_{i,k_1} \vecc{\Gamma}_{i,k_2}, \nonumber \\
\vecc{a}_{i,d_i-1} &= \sum_{k=1}^N \vecc{\Gamma}_{i,k}.
\end{align}
Then it can be shown that $\vecc{\Phi}_i(z)$ admits the following (controllable) realization \cite{brockett2015finite}: $\vecc{\Phi}_i(z) = \vecc{H}_i(z\vecc{I} + \vecc{F}_i)^{-1}\vecc{G}_i$, with
\begin{equation}
\vecc{H}_i = [\vecc{0} \ \cdots \ \vecc{0} \ \ \vecc{B}_{l_i} \ \  \vecc{0} \ \cdots \ \vecc{0}] \in \RR^{p_i \times p_i d_i},
\end{equation}
where  $\vecc{B}_{l_i}$ is in the $l_i$th slot,
\begin{equation}
\vecc{F}_i =
\begin{bmatrix}
\vecc{0} & -\vecc{I} & \\
 & \vecc{0} & -\vecc{I} & \\
 & & \ddots & \ddots \\
 &  & & \vecc{0} & -\vecc{I} \\
\vecc{a}_{i,0} & \vecc{a}_{i,1} & \dots & \vecc{a}_{i,d_{i-2}} & \vecc{a}_{i,d_{i-1}}
\end{bmatrix} \in \RR^{p_i d_i \times p_i d_i},
\end{equation}
\begin{equation}
\vecc{G}_i = [\vecc{0} \ \cdots \ \vecc{0} \ \ \vecc{I}]^* \in \RR^{p_i d_i}.
\end{equation}
Then the realization of the memory function (for the case $i=2$) and noise process (for the case $i=4$)
can be obtained by taking $\vecc{\Gamma}_i = \vecc{F}_i$, $\vecc{C}_i = \vecc{H}_i$ and solving the following linear matrix inequality:
\begin{equation}
\vecc{F}_i\vecc{M}_i + \vecc{M}_i \vecc{F}_i^* =: \vecc{\Sigma}_i \vecc{\Sigma}_i^* \geq 0, \ \ \vecc{M}_i \vecc{H}_i^* = \vecc{G}_i
\end{equation}
for $\vecc{M}_i = \vecc{M}_i^*$ \cite{willems1980stochastic}.

The above realization gives us the desired spectral densities. Indeed, let us use the transformation of type \eqref{transf_realize} to diagonalize the $\vecc{M}_i$, i.e. $\vecc{M}_i' = \vecc{T}_i \vecc{M}_i \vecc{T}_i^{*} = \vecc{I}$, $\vecc{\Gamma}_{i}' = \vecc{T}_i \vecc{\Gamma}_{i} \vecc{T}_i^{-1}$, $\vecc{\Sigma}_i = \vecc{T}_i \vecc{\Sigma}_i$, $\vecc{C}'_i = \vecc{C}_i \vecc{T}_i^{-1}$.
In this case, for $i=4$ we have: $(\vecc{\xi}^i)'_t  = \vecc{C}_i' (\vecc{\beta}^i)'_t = \vecc{C}_i \vecc{\beta}^i_t = \vecc{\xi}^i_t$,
where $(\vecc{\beta}^i)'_t$ solves the SDE:
\begin{equation}
d(\vecc{\beta}^i)'_t = -\vecc{\Gamma}_i' (\vecc{\beta}^i)'_t dt + \vecc{\Sigma}_i' d\vecc{W}_t^{(q_4)},
\end{equation}
and one can compute the spectral density to be:
\begin{equation}
\vecc{\mathcal{S}}_i(\omega) = \vecc{\Phi}_i(-i\omega)\vecc{\Phi}^*_i(i\omega) =\vecc{B}_{l_i} \omega^{2 l_i} ((\omega^2\vecc{I}+\vecc{\Gamma}_{i,1})^2) \cdots (\omega^2\vecc{I}+\vecc{\Gamma}_{i,d_i})^2)  )^{-1} \vecc{B}_{l_i}^*.
\end{equation}
A similar discussion applies to  the realization of the memory function.

For $i=2,4$,
set $m=\epsilon m_0$, $\vecc{\Gamma}_{i,k} = \vecc{\gamma}_{i,k}/\epsilon$ for $k=l_i+1,\dots,d_i$ and rescale the $\vecc{B}_{l_i}$ with $\epsilon$ accordingly, so that the limit as $\epsilon \to 0$ of the rescaled spectral densities gives us the desired asymptotic behavior. The choice of which and how many of the $\vecc{\Gamma}_{i,k}$ to rescale as well as the smallness of $\epsilon$ (i.e. what determines the wide separation of time scales and their magnitude) depends on the physical system under study. The resulting family of  GLEs can then be cast in a form suitable for application of Theorem \ref{mainthm} and the homogenized SDE for the particle's position can be obtained, under appropriate assumptions on the coefficients of the GLE.  


\section{Another Example for Section \ref{sect_appl}} \label{anothereg}

Consider the case of $l_2=l_4=l=2$, $d_2=d_4=d=3$ in Assumption \ref{special} and specialize to one-dimensional models as before. In this case the covariance function has a stronger singularity near $t=0$ than in cases studied previously.  
The spectral density of the driving noise in the GLE is taken to be:
\begin{equation}
\mathcal{S}(\omega) = \frac{\Gamma_3^2 \beta^2 \omega^4}{(\omega^2+\Gamma_1^2)(\omega^2+\Gamma_2^2)(\omega^2+\Gamma_3^2)},
\end{equation}
in which case the memory kernel (and covariance function) are
\begin{align} \label{w5}
\kappa(t) &= \beta^2 (\Gamma_3 \Gamma_2 + \Gamma_3 \Gamma_1 + \Gamma_2 \Gamma_1) \bigg(\frac{\Gamma_3^4 e^{-\Gamma_3 |t|}}{2(\Gamma_3^2-\Gamma_2^2)(\Gamma_3^2-\Gamma_1^2)(\Gamma_2+\Gamma_1)} \nonumber \\
&\ \ \ \ \ - \frac{\Gamma_3^2 \Gamma_2^2 e^{-\Gamma_2 |t|}}{2(\Gamma_3^2-\Gamma_2^2)(\Gamma_2^2-\Gamma_1^2)(\Gamma_1+\Gamma_3)} + \frac{\Gamma_3^2 \Gamma_1^2 e^{-\Gamma_1 |t|}}{2(\Gamma_3^2-\Gamma_1^2)(\Gamma_2^2-\Gamma_1^2)(\Gamma_3+\Gamma_2)} \bigg),
\end{align}
where $0< \Gamma_1 < \Gamma_2 < \Gamma_3$ (see Figure \ref{fig1} for a plot of $\kappa(t)$). This gives a model for  hyper-diffusion of a particle in a heat bath \cite{siegle2010origin}.

Rescale the parameters by setting $m=m_0 \epsilon$, $\Gamma_3=\gamma_3/\epsilon$, where $m_0$ and $\gamma_3$ are positive constants, and study the limit $\epsilon \to 0$ of the resulting family of GLEs as before.
The resulting rescaled versions of $\kappa(t)$ and $\mathcal{S}(\omega)$ have the following asymptotic behavior as $\epsilon \to 0$:
\begin{align}
\kappa^\epsilon(t)  &\to \beta^2 \delta(t) + \frac{\beta^2}{2(\Gamma_1^2-\Gamma_2^2)}(\Gamma_2^3 e^{-\Gamma_2|t|} - \Gamma_1^3 e^{-\Gamma_1|t|}), \\
\mathcal{S}^\epsilon(\omega) &= \frac{\gamma_3^2 \beta^2 \omega^4}{(\omega^2 + \Gamma_1^2)(\omega_2+\Gamma_2^2)(\epsilon^2 \omega^2 + \gamma_3^2)} \to \frac{\beta^2 \omega^4}{(\omega^2+\Gamma_1^2)(\omega^2+\Gamma_2^2)},
\end{align}

We outline, omitting details, a convergence result similar to  Corollary \ref{w2case}, focusing on the particular case  $g = h$.  In this case, the particle's position, $x^\epsilon_t \in \RR$, converges, as $\epsilon \to 0$, to $X_t$, satisfying the following It\^o SDE system:
\begin{align}
dX_t &= \bigg[ \frac{2F_e}{\beta^2 g^2} + \frac{2}{\beta^2 g}\left(\Gamma_1 \Gamma_2 Z_t^0 + (\Gamma_1+\Gamma_2)Z_t^1 \right) \nonumber \\   
&\ \ \  \ - \frac{2\sigma}{\beta^2 g^2}\left(\Gamma_1 \Gamma_2 Y_t^0 + (\Gamma_1+\Gamma_2)Y_t^1 \bigg) \right] dt \nonumber \\
&\ \ \ \ + \bigg[\frac{2}{\beta^2}\frac{\partial}{\partial X}\left(\frac{1}{g^2} \right) \frac{\sigma^2}{g^2} - \frac{\partial}{\partial X}\left(\frac{1}{g} \right)\frac{4 \sigma^2}{g(g^2\beta^2+4\gamma_3 m_0)} \nonumber \\ 
&\ \ \ \  \ \  \ \  + \frac{\partial}{\partial X}\left(\frac{\sigma}{g^2} \right)\frac{4\sigma}{g^2\beta^2+4\gamma_3 m_0}  \bigg] dt + \frac{2\sigma}{\beta g^2} dW_t^{(1)}, \\
dZ_t^0 &= \left[ -\frac{F_e}{g(\Gamma_1+\Gamma_2)} - \frac{\Gamma_1 \Gamma_2}{\Gamma_1+\Gamma_2} Z_t^0 + \frac{\Gamma_1 \Gamma_2 \sigma}{g(\Gamma_1+\Gamma_2)} Y_t^0 + \frac{\sigma}{g} Y_t^1 \right] dt \nonumber \\
&\ \ \ \ - \frac{1}{\Gamma_1+\Gamma_2} \left[ \frac{\partial}{\partial X}\left(\frac{1}{g} \right)\frac{\sigma^2}{g^2} + \frac{\partial}{\partial X}\left(\frac{\sigma}{g} \right) \frac{2\beta^2 \sigma}{g^2\beta^2+4\gamma_3 m_0}\right]dt \nonumber  \\
&\ \ \ \ \ - \frac{\beta \sigma}{g(\Gamma_1+\Gamma_2)} dW_t^{(1)}, \\
dZ_t^1 &= \left[\frac{F_e}{g}-\frac{\Gamma_1 \Gamma_2 \sigma}{g} Y_t^0 - \frac{\sigma}{g}(\Gamma_1+\Gamma_2) Y_t^1 \right]dt + \frac{\sigma \beta}{g} dW_t^{(1)} \nonumber \\
&\ \ \ \ + \left[\frac{\partial}{\partial X}\left(\frac{1}{g} \right)\frac{\sigma^2}{g^2} + \frac{\partial}{\partial X}\left(\frac{\sigma}{g} \right) \frac{2\beta^2 \sigma}{g^2\beta^2+4\gamma_3 m_0}   \right]dt, \\
dY_t^0 &= Y_t^1 dt, \\
dY_t^1 &= -\Gamma_1 \Gamma_2 Y_t^0 dt - (\Gamma_1+\Gamma_2) Y_t^1 dt + \beta dW_t^{(1)}.
 \end{align}
 
Inspecting the above SDEs in detail, one can make similar remarks to those made in Section \ref{sect_appl}. In particular, taking $g$ to be proportional to $\sigma$ (so that a  fluctuation-dissipation relation holds) again allows us to reduce the number of effective SDEs. Physically, this means that homogenized GLEs for models of hyper-diffusion of particles in a non-equilibrium bath may be highly non-trivial  but they simplify when the fluctuation-dissipation relation is satisfied.


\subsection*{Acknowledgment}
S.H.Lim and J.Wehr were partially supported by the NSF grant DMS 1615045. S.H.Lim is grateful for the support provided by the Michael Tabor Fellowship from the Program in Applied Mathematics at the University of Arizona during the academic year 2017-2018.  M.L. acknowledges the Spanish Ministry MINECO (National Plan 15 Grant: FISICATEAMO No. FIS2016-79508-P, SEVERO OCHOA No. SEV-2015-0522, FPI), European Social Fund, Fundaci\'o Cellex, Generalitat de Catalunya (AGAUR Grant No. 2017 SGR 1341 and CERCA/Program), ERC AdG OSYRIS, EU FETPRO QUIC, and the National Science Centre, Poland-Symfonia Grant No. 2016/20/W/ST4/00314.

\bibliographystyle{plain}
\bibliography{ref1}

\end{document}